\numberwithin{equation}{section}
\newif\iflong\longfalse
\newif\ifpopllong\popllongfalse
\newif\ificalp\icalpfalse
\newcommand{\keyword}[1]{\textsf{#1}\xspace}
\newcommand{\In}{\keyword{in}}
\newcommand{\register}[1]{\keyword r_{#1}}
\newcommand{\lockbit}[1]{\textbf{#1}}
\newcommand{\subs}[2]{[{#1}/{#2}]}
\newcommand{\pad}{\;\;}
\newcommand{\Space}[1]{\pad{#1}\pad}
\newcommand{\grmeq}{{\Space{::=}}}
\newcommand{\grmor}{{\;\text{\large$\mid$}\;}}
\newcommand{\mycaption}[1]{
  \iflong 
  \rule{\textwidth}{1pt} 
  \vspace{-5.5ex} 
  \fi
  \caption{#1}
  \iflong 
  \vspace{-1.5ex}
  \rule{\textwidth}{.5pt}
  \vspace{-2ex} 
  \fi
}
\newcommand{\myparagraph}[1]{\paragraph{\bf{#1}}}
\newcommand{\ENCan}[1]{\langle #1 \rangle}
{}{}
{}{}
{}{}
{}{}
{}{}
{}{}
{}{}
{}{}
{}{}
\newcommand{\defk}{\keyword{def}}
\newcommand{\PAR}{\mid}        
\newcommand{\OR}{\mathrel{\&}}
\newcommand{\participant}[1]{\mathtt{#1}}
\newcommand{\SEND}[2]{#1 ! #2}
\newcommand{\RECV}[2]{#1 ?#2}
\newcommand{\typevar}{\mathsf{t}}
\newcommand{\osred}{\rightarrow}                
\newcommand{\R}{\osred}
\newcommand{\msred}{\osred^\ast}     
\newcommand{\RR}{\msred}
\newcommand{\TRANS}[1]{\xrightarrow{#1}}
\newcommand{\TRANSS}[1]{{\xrightarrow{\raisebox{-.3ex}[0pt][0pt]{\scriptsize $#1$}}}}
\newcommand{\rulename}[1]{[\text{\sc #1}]\xspace}
\newcommand{\branch}{\&}
\newcommand{\typeconst}[1]{\keyword{#1}}
\newcommand{\End}{\typeconst{end}}
\def\recv#1{?\,\ENCan{#1}}
\def\send#1{!\,\ENCan{#1}}
\newcommand{\TO}[2]{\participant{#1}\to\participant{#2}}
\newcommand{\TOS}[2]{\participant{#1}\rightsquigarrow\participant{#2}}
\newcommand{\GBRA}[6]{\TO{#1}{#2}\colon \ASET{#3. #5}_{#6}}
\newcommand{\GBRAS}[7]{\TOS{#1}{#2}\colon {#3}\; \ASET{#4. #6}_{#7}}
\newcommand{\LSENDK}[5]{#1 !{\ASET{#2.#4}_{#5}}}
\newcommand{\LRECVK}[5]{#1 ?{\ASET{#2.#4}_{#5}}}
\newcommand{\LLPSEND}[3]{\send{#1,#2\ENCan{#3}}}
\newcommand{\LLPRECV}[3]{\recv{#1,#2\ENCan{#3}}}
\newcommand{\rcdt}{\{l_i\colon T_i\}_{i\in I}}
\newcommand{\brancht}[1][\alpha]{\branch\rcdt}
\newcommand{\AT}[2]{#1\colon\! #2}
\newcommand{\co}{\overline}
\newcommand{\OL}{\co}
\newcommand{\ASET}[1]{\{ #1 \}}
\newcommand{\VEC}{\tilde}
\newif\ifny\nytrue
\newif\ifvv\vvtrue
\newif\ifkohei\koheitrue
\newif\ifmarco\marcotrue
\def\SETRES{\!\upharpoonright\!}
\def\fps@figure{tp}      
\def\fps@table{tp}
\newcommand{\NUL}{\epsilon}
 \newcommand{\WB}{\approx}
\newcommand{\enewchan}[2]{\ensuremath{\mathtt{newChan}\ \AT{#1}{#2}}}
\newcommand{\MRED}[1][]{%
  \ensuremath{%
    \ifthenelse{\equal{#1}{}}{%
      \rightarrow\!\!\!\!\rightarrow%
    }{%
      \rightarrow\!\!\!\!\rightarrow_{#1}%
    }%
  }%
}
\newcommand{\RED}[1][]{%
  \ensuremath{%
    \ifthenelse{\equal{#1}{}}{%
      \longrightarrow%
    }{%
      \longrightarrow_{#1}%
    }%
  }%
}
\newcommand{\infer}[2]{\frac{\displaystyle{ #1 }}
  {\rule{0pt}{2.2ex}\displaystyle{ #2 }}}
\newcommand{\LL}%
{{\mathsf{L}\mathsf{L}\mathsf{S}}}
\newcommand{\LLC}%
{{\mathsf{L}\mathsf{L}\mathsf{S}}^{\mathsf{C}}}
\newcommand{\LLA}%
{{\mathsf{L}\mathsf{L}\mathsf{S}}^{\mathsf{A}}}
\newcommand{\newlinchan}[3]
{\ensuremath{\mathtt{newCont}\ \AT{#1}{#2}\ \mathtt{in}\ #3}}
\newcommand{\newchan}[3]
{\enewchan{#1}{#2}\,;\,#3}
\newcommand{\SIL}%
{{\mathsf{S}\mathsf{I}\mathsf{L}}\xspace}
\newcommand{\SILc}%
{{\mathsf{S}\mathsf{I}\mathsf{L}^{\text{\scriptsize C}}}\xspace}
\newcommand{\SILC}%
{{\mathsf{S}\mathsf{I}\mathsf{L}^{\text{\scriptsize C}}}\xspace}
\newcommand{\SILa}%
{{\mathsf{S}\mathsf{I}\mathsf{L}^{\text{\scriptsize A}}}\xspace}
\newcommand{\SILA}%
{{\mathsf{S}\mathsf{I}\mathsf{L}^{\text{\scriptsize A}}}\xspace}
\newcommand{\QED}{\hfill $\square$}
\newcommand{\VECw}[1]{{\widetilde{#1}}} 
\newcommand{\causes}[1]{\searrow}
\newif\ifdm\dmtrue
\newif\ifdmr
\newcommand{\PSet}{\!\mathscr{P}\!}
\newcommand{\xx}{\ensuremath{\mathbf{x}}}
\newcommand{\wv}{\ensuremath{\VEC{w}}}
\newcommand{\XXV}{\ensuremath{\VEC{\XX}}}
\newcommand{\G}{\ensuremath{G}}
\newcommand{\proj}[1]{\ensuremath{\upharpoonright #1}}
\newcommand{\projrel}[1]{\ensuremath{\text{\textlbrackdbl} #1 \text{\textrbrackdbl}}}
\newcounter{analphabet}
{\rm%
\begin{list}%
{\arabic{analphabet}. }%
{\usecounter{analphabet}%
 \addtolength{\labelwidth}{5mm}%
\addtolength{\leftmargin}{-2mm}%
\setlength{\rightmargin}{0pt}%
\setlength{\itemsep}{0mm}%
\setlength{\parsep}{0pt}}}%
{\end{list}}
\newcommand{\mergeop}{\ensuremath{\bowtie}}
\newcommand{\mergecup}{\ensuremath{\sqcup}}
\newcommand{\chanset}{C}
\newcommand{\RS}{\mathit{RS}}
\newcommand{\TR}{\mathit{Tr}} 
\newcommand{\MEM}{\mathit{bound}}
\newcommand{\Alice}{\ensuremath{\mathtt{A}}}
\newcommand{\Bob}{\ensuremath{\mathtt{B}}}
\newcommand{\Carol}{\ensuremath{\mathtt{C}}}
\newcommand{\p}{\ensuremath{\participant{p}}}
\newcommand{\q}{\ensuremath{\participant{q}}}
\newcommand{\pp}{\ensuremath{\participant{p}}}
\newcommand{\qq}{\ensuremath{\participant{q}}}
\newcommand{\action}{\ell}
\newcommand{\act}{\mathit{act}}
\newcommand{\sub}{\mathit{subj}}
\newcommand{\subj}{\mathit{subj}}
\newcommand{\A}{\ensuremath{\mathscr{A}}}
\newcommand{\LT}{\ensuremath{\mathscr{T}}}
\newcommand{\X}{\ensuremath{\participant{X}}}
\newcommand{\XX}{\ensuremath{\mathbf{X}}}
\newcommand{\GG}{\ensuremath{\mathbf{G}}}
\newcommand{\GV}{\ensuremath{\VECw{G}}}
\newcommand{\TT}{\ensuremath{\mathbf{T}}}
\newcommand{\TV}{\ensuremath{\VECw{T}}}
\newcommand{\TTV}{\ensuremath{\VECw{\TT}}}
\newcommand{\equivGV}{\equiv_\GV}
\newcommand{\equivTV}{\equiv_\TV}
\newcommand{\Rcv}{\ensuremath{\mathit{Rcv}}}
\newcommand{\ASend}{\ensuremath{\mathit{ASend}}}
\newcommand{\Asender}{\ASend}
\newcommand{\tsrule}[1]{{\text{\scriptsize{$\lfloor$\scriptsize\sc{#1}$\rfloor$}}}}
\newcommand{\T}{\ensuremath{T}}
\newcommand{\s}{\ensuremath{s}}
\DeclareSymbolFont{bbsymbol}{U}{bbold}{m}{n}
\DeclareMathSymbol{\bbsemicolon}{\mathbin}{bbsymbol}{"3B}
\newcommand{\U}{U}
\newcommand{\ASigma}{ \mathbb{A}}
\newcommand{\PN}{ \mathbb{P}}
\newcommand{\MSACore}%
{CMSA\xspace}
\newcommand{\MSA}%
{MSA\xspace}
\newcommand{\isnow}{\leftarrow}
\begin{document}
\title{
\vspace*{-2mm}
Multiparty Compatibility in Communicating Automata:\\
Characterisation and Synthesis of Global Session Types
\vspace*{-3mm}
}


 \author{
 Pierre-Malo Deni\'elou$^1$ 
 \and Nobuko Yoshida$^2$ 
\vspace*{-1mm}
 }

 \institute{
 $^1$Royal Holloway, University of London \quad \quad 
 $^2$Imperial College London
 }




\maketitle 

\thispagestyle{plain}
\pagestyle{plain}


\begin{abstract}


Multiparty session types are a type system that can ensure the safety
and liveness of distributed peers via the global specification of their
interactions.  
To construct a global specification 
from a set of distributed uncontrolled behaviours, 
this paper explores the problem of fully characterising multiparty session
types in terms of communicating automata. We equip global and local
session types with labelled transition systems (LTSs) that
faithfully represent asynchronous communications through unbounded buffered
channels. 
Using the equivalence between the two LTSs, we identify 
\ificalp
a class of 
communicating automata that exactly correspond to the projected
local types.
\else 
two classes of 
communicating automata that exactly correspond to the projected
local types of two different multiparty session type theories. 
\fi
We exhibit an algorithm to synthesise a global type from a collection of
communicating automata. The key property of our findings is the notion of
{\em multiparty compatibility} which non-trivially extends the duality
condition for binary session types. 


\end{abstract}

\vspace*{-7mm}
\section{Introduction}
\label{sec:introduction}
Over the last decade, {\em session types} \cite{HKV98,THK} 
have been studied as 
data types or functional types for communications and distributed systems. 
A recent discovery by~\cite{CF10,Wadler2012}, which establishes a Curry-Howard
isomorphism between binary session types and linear logics, confirms
that 
session types and the notion of duality between type constructs have
canonical meanings. 
On the practical side, 
multiparty session types 
\cite{CHY07,BettiniCDLDY08} were proposed as 
a major generalisation of binary
session types. It can 
enforce communication safety and deadlock-freedom for more than two peers
thanks to a choreographic 
specification (called {\em global type}) 
of the interaction.  
Global types are projected to end-point types (called {\em local
  types}), against which processes can be statically type-checked and
verified to behave correctly. 

The motivation of this paper comes from our practical experiences that, in
many situations, even where we start from the end-point projections of
a choreography, 
we need to reconstruct a global type from distributed 
specifications. End-point specifications are usually 
available, either through inference from the control flow, 
or through existing service interfaces, and always in
forms akin to individual communicating finite state machines. 
If one knows the 
precise conditions under which a global type can be constructed (i.e.~the conditions of {\em synthesis}), 
not only the global safety property 
which multiparty session types 
ensure is guaranteed, but also the generated global type 
can be used as a refinement and be integrated within
the distributed system development life-cycle  
(see \S~\ref{sec:related} for applications \cite{savara,ooi}). 
This paper attempts to give the synthesis condition 
as a sound and complete characterisation of multiparty session types 
with respect to Communicating Finite State Machines (CFSMs) \cite{Brand83}.
CFSMs have been a well-studied formalism 
for analysing distributed safety
properties and are widely present in industry tools. They 
can been seen as 
generalised end-point specifications,   
therefore, an
excellent target for a common comparison ground and for synthesis. 
As explained below, to identify a complete set of CFSMs for synthesis, 
we first need to answer a   
question -- {\em what is the canonical duality notion in multiparty session types?} 
\\[1mm]
{\bf Characterisation of binary session types as communicating
automata} 
The subclass which fully characterises {\em binary session types} 
was actually proposed by Gouda, Manning and Yu in
1984~\cite{GMY84} in a pure communicating automata 
context.\footnote{Villard \cite{Villard11} independently found this
  subset in the context of channel contracts \cite{FAHHHLL06}.} 
Consider a simple business protocol between 
a Buyer and a Seller from the Buyer's viewpoint: Buyer sends the 
title of a book, Seller answers with a quote. If
Buyer is satisfied by the quote, then he sends his address and
Seller sends back the delivery date; otherwise it retries the
same conversation. This can be described by the following
session type:\vspace{-2ex}

\begin{equation}\vspace{-1ex}
\label{AAA}
\mu \typevar.
!\, \typeconst{title};\
?\typeconst{quote};\
!\{\ \mathsf{ok}:!\,\typeconst{addrs};
?\typeconst{date};\End, \ 
\quad\mathsf{retry}:\typevar\ \}
\end{equation}
where the operator $!\,\typeconst{title}$ denotes an
output of the title, whereas
$?\mathsf{quote}$ denotes an input of a quote. 
The output choice features the two options
$\mathsf{ok}$ and $\mathsf{retry}$ and  $;$ denotes sequencing.
\End represents the termination of the session, and $\mu \typevar$ is 
recursion.

The simplicity and tractability of 
binary sessions come from the notion of {\em duality} in
interactions \cite{GirardJY:linlog}.  
The interaction pattern of the Seller
is fully given as the dual of the type in (\ref{AAA}) (exchanging input $!$ 
and output $?$ in the original type).
When composing two parties, we only have to check
they have mutually dual types, and the resulting communication is 
guaranteed to be deadlock-free. 
Essentially the same characterisation is given in communicating
automata. Buyer and Seller's session types are represented by the
following
two machines. 
\vspace{-2ex}
\[
\rightarrow\!\xymatrix@C=1.7em@R=4.ex{
  *++[o][F]{}\ar[r]_{!\text{title}} & *++[o][F]{}\ar[r]_{?\text{quote}} 
  & *++[o][F]{}\ar@/_2ex/[ll]_{?\text{retry}}\ar[r]_{?\text{ok}} &
  *++[o][F]{}\ar[r]_{!\text{addrs}}& *++[o][F]{}\ar[r]_{?\text{date}}
    & *++[o][F=]{}}
\quad 
\rightarrow\!\xymatrix@C=1.7em@R=4.ex{
  *++[o][F]{}\ar[r]_{?\text{title}} & *++[o][F]{}\ar[r]_{!\text{quote}} 
  & *++[o][F]{}\ar@/_2ex/[ll]_{!\text{retry}}\ar[r]_{!\text{ok}} &
    *++[o][F]{}\ar[r]_{?\text{addrs}}& *++[o][F]{}\ar[r]_{!\text{date}}
  & *++[o][F=]{}
}
\vspace{-1ex}
\]
We can observe that these CFSMs satisfy three conditions.  First, the
communications are {\em deterministic}: messages that are part of the same
choice, ok and retry here, are distinct. Secondly, there
is no mixed state (each state has either only sending actions or only
receiving actions). Third, these two machines have {\em compatible} traces
(i.e. dual): the Seller machine can be defined by exchanging sending to
receiving actions and vice versa. Breaking one of these conditions allows
deadlock situations and breaking one of the first two conditions makes the
compatibility checking undecidable~\cite{GMY84}. 
%
\\[1mm]
{\bf Multiparty compatibility}
This notion of duality is no longer effective in multiparty
communications, where the whole conversation cannot be reconstructed
from only a single behaviour. 
To bypass the gap between binary and multiparty, 
we take the {\em synthesis} approach, that is to find conditions which allow
a global choreography to be built from the local machine behaviour. 
Instead of directly trying to decide whether the communications of a system
will satisfy safety (which is undecidable in the general
case), inferring a global type guarantees the safety as a direct consequence.


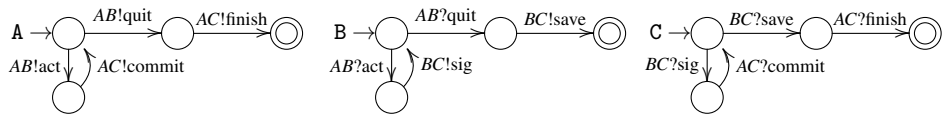
\begin{figure}
\[\begin{array}{@{}ll@{}}
  \begin{array}{ll@{\qquad}ll} 
\mathtt{A}\rightarrow\hspace{-1.3cm}
\xymatrix@C=3.2em@R=3ex{
   & *++[o][F]{}\ar[r]^{AB!\text{quit}}\ar[d]_{AB!\text{act}}
   & *++[o][F]{}\ar[r]^{AC!\text{finish}}
   & *++[o][F=]{}  \\
   & *++[o][F]{}\ar@/_2ex/[u]_{AC!\text{commit}}  \\
 } &
\quad
\mathtt{B}\rightarrow\hspace{-1.3cm}
  \xymatrix@C=3.2em@R=3ex{
  %
   &*++[o][F]{}\ar[r]^{AB?\text{quit}}\ar[d]_{AB?\text{act}}
   & *++[o][F]{}\ar[r]^{BC!\text{save}}
   & *++[o][F=]{}  \\
   & *++[o][F]{}\ar@/_2ex/[u]_{BC!\text{sig}} \\
 }
\quad
\mathtt{C}\rightarrow\hspace{-1.3cm}
\xymatrix@C=3.2em@R=3ex{
   & 
   *++[o][F]{}\ar[r]^{BC?\text{save}}\ar[d]_{BC?\text{sig}}
   & *++[o][F]{}\ar[r]^{AC?\text{finish}}
   & *++[o][F=]{}  \\
   & *++[o][F]{}\ar@/_2ex/[u]_{AC?\text{commit}}  \\
 }
\end{array}
\end{array}
\]
\caption{Commit example: CFSMs}
\label{fig:commit}
\end{figure}

\noindent We give a simple example to illustrate
the problem.  The commit protocol in Figure~\ref{fig:commit} 
involves three machines: Alice $\Alice$, 
Bob $\Bob$ and Carol $\Carol$. $\Alice$ orders $\Bob$ to act
or quit. If act is sent,  $\Bob$ sends a signal to $\Carol$, and 
$\Alice$ sends a commitment to $\Carol$ and continues.
Otherwise $\Bob$ informs $\Carol$ to save the data 
and $\Alice$ gives the final notification 
to $\Carol$ to terminate the protocol. 
%

This paper presents a
decidable notion of {\em multiparty compatibility} as a generalisation of
duality of binary sessions, which in turns characterises 
a synthesis condition. 
The idea is to check the duality between each automaton and the rest,
up to the internal communications (1-bounded executions 
in the terminology of CFSMs, see \S~\ref{sec:cfsm}) that the other machines will
independently perform. For example, in Figure~\ref{fig:commit}, 
to check the compatibility of trace 
$\mathit{BC}?\text{sig}$
$\mathit{AC}?\text{commit}$ in $\Carol$, 
we execute the internal communications between $\Alice$ and $\Bob$
such that $\mathit{AB}!\text{act}\cdot\mathit{AB}?\text{act}$ 
and observes the dual trace $\mathit{BC}!\text{sig}\cdot\mathit{AC}!\text{commit}$ from $\Bob$ and $\Alice$. 
If this extended duality is valid for all the
machines from any 1-bounded reachable state, 
then they satisfy multiparty compatibility and 
can build a well-formed global choreography. 
\\[1mm]
{\bf Contributions and Outline \ }
Section~\ref{sec:globallocal} defines new labelled transition systems 
for global and local types that represent the abstract observable behaviour
of typed processes. We prove that a global type behaves 
exactly as its projected local types, and the same result
between a single local type and its CFSMs interpretation. 
These correspondences are the key to prove the main theorems.  
Section~\ref{sec:cmsa} 
defines 
  {\em multiparty compatibility}, studies its safety and liveness properties, 
  gives an algorithm for the synthesis of global types from CFSMs, 
  and proves the soundness and completeness results between global types 
and CFSMs.  
%
%
%
Section~\ref{sec:related} discusses related work and concludes. 
The full proofs can be found in Appendix. 

In Appendix \ref{sec:graph}, we also extend our result to {\em generalised
    multiparty session types}, a recent class of multiparty session
types~\cite{DY12} with graph-like control flow and parallelism. 
The same multiparty compatibility as in \S~\ref{sec:cmsa} can be used without
modification, although well-formedness condition need to be generalised.
The synthesis algorithm relies on Petri net intermediate
representations \cite{cortadella1998deriving} and 1-bounded behavioural exploration.
Our result is applicable to generate a core part of Choreography BPMN 2.0
specification \cite{BPMN} from CFSMs.

\section{Communicating Finite State Machines}
\label{sec:cfsm}
This section starts from some preliminary notations (following~\cite{CeceF05}). 
\label{def:sub}
$\NUL$ is the empty word. $\ASigma$ is a finite
alphabet and $\ASigma^\ast$ is the set of all finite words over $\ASigma$. $|x|$
is the length of a word $x$ and $x.y$ or $xy$ the concatenation of two words $x$
and $y$. 
Let $\PSet$ be a set of {\em participants} fixed throughout the
paper: $\PSet \subseteq
\{\Alice,\Bob,\Carol, \ldots,\p, \q, \dots\}$. 

\begin{definition}[CFSM]\label{def:cfsm2}\rm 
A communicating finite state machine is a finite transition
system given by a 5-tuple $M=(Q,\chanset,q_0,\ASigma,\delta)$ where (1) $Q$ is 
a finite set of {\em states};  (2) 
$\chanset=\ASET{\p\q\in\PSet^2\PAR\p \not = \q}$ 
is a set of channels; (3) $q_0\in Q$ is an initial state; 
(4) $\ASigma$ is a finite {\em alphabet} of messages, and 
(5) $\delta\ \subseteq \ Q \times 
(\chanset \times \ASET{!,?} \times \ASigma) \times Q$ is a finite 
set of {\em transitions}. 
\end{definition}
\noindent 
In transitions, 
$\p\q!a$ denotes the {\em sending} action of $a$ from process $\p$ 
to process $\q$, and 
$\p\q ?a$ denotes the {\em receiving} action of $a$ from $\p$ by $\q$.
$\action,\action'$ range over actions and 
we define the {\em subject} of an action $\ell$ as the principal in charge
of it: $\sub(\p\q!a)=\sub(\q\p?a)=\p$.


A state $q\in Q$ whose outgoing transitions are all labelled with sending
(resp. receiving) actions is called a {\em sending} (resp.~{\em receiving})
state. A state $q\in Q$ which does not have any outgoing transition is
called {\em final}.  If $q$ has both sending and receiving outgoing
transitions, $q$ is called {\em mixed}. 
We say $q$ is {\em directed} if it
contains only sending (resp. receiving) actions to (resp. from) the same
participant.
A {\em path} in $M$ is a finite sequence of $q_0,\ldots,q_{n}$ ($n\geq 1$) such
that $(q_{i},\ell,q_{i+1})\in \delta$ ($0\leq i \leq n-1$), and  
we write $q\TRANSS{\ell} q'$ if $(q,\ell,q')\in \delta$. 
$M$ is {\em connected} if for every state $q\not =q_0$, there is a path 
from $q_0$ to $q$. 
Hereafter we assume each CFSM is connected. 

A CFSM $M=(Q,\chanset,q_0,\ASigma,\delta)$ is {\em deterministic} if for all 
states $q\in Q$ and all actions $\action$, $(q,\action,q'),
(q,\action,q'')\in \delta$ imply $q'=q''$.\footnote{``Deterministic''  
often means the same channel 
should carry a unique value, 
i.e.~if $(q,c!a,q')\in \delta$ 
and $(q,c!a',q'')\in \delta$  
then $a=a'$ and $q'=q''$.  
Here we follow a different definition~\cite{CeceF05} in order to represent branching type constructs.}


\begin{definition}[CS]\label{def:cs}\rm 
A (communicating) system $S$ is a tuple 
$S=(M_\p)_{\p\in\PSet}$ of CFSMs such that 
$M_\p=(Q_\p,\chanset, q_{0\p},\ASigma,\delta_\p)$. 
\end{definition}
%
For $M_\p=(Q_\p,\chanset, q_{0\p},\ASigma,\delta_\p)$, we define a {\em
  configuration} of $S=(M_\p)_{\p\in\PSet}$ to be a tuple
$s=(\vec{q};\vec{w})$ where $\vec{q}=(q_\p)_{\p\in\PSet}$ with $q_\p\in
Q_\p$ and where $\vec{w}=(w_{\p\q})_{\p\neq\q\in\PSet}$ with $w_{\p\q}\in
\ASigma^\ast$.  The element $\vec{q}$ is called a {\em control state} and
$q\in Q_i$ is the {\em local state} of machine $M_i$.
%
\begin{definition}[reachable state]\label{def:rs}\rm 
Let $S$ be a communicating system. 
A configuration $s'=(\vec{q}';\vec{w}')$ is {\em reachable} from 
another configuration $s=(\vec{q};\vec{w})$ by the {\em firing of the
transition $t$}, written $s \TRANS{} s'$ or $s \TRANSS{t} s'$, if 
there exists $a \in \ASigma$ such that either: 
\begin{enumerate} 
\item $t = (q_\p,\p\q!a,q_\p')\in \delta_{\p}$ and 
(a) $q_{\p'}' = q_{\p'}$ for all ${\p'}\not = \p$; and 
(b) $w_{\p\q}' = w_{\p\q}.a$ and 
$w_{\p'\q'}'=w_{\p'\q'}$ for all ${\p'\q'}\not=\p\q$; or  

\item 
  $t = (q_\q,\p\q?a,q_\q')\in \delta_\q$ and 
(a) $q_{\p'}' = q_{\p'}$ for all ${\p'}\not = \q$; and 
(b) $w_{\p\q} = a.w_{\p\q}' $ and $w_{\p'\q'}'=w_{\p'\q'}$ for all ${\p'\q'}\not=\p\q$. 
\end{enumerate}
\end{definition}
The condition (1-b) puts the content $a$ to a channel $\p\q$, 
while (2-b) gets the content $a$ from a channel $\p\q$.  
The reflexive and transitive closure of $\R$ is $\RR$. 
For a transition $t=(s,\action,s')$, we refer to $\ell$ by $\act(t)$.
We write $s_1 \TRANSS{t_1\cdots t_m} s_{m+1}$ 
for $s_1 \TRANSS{t_1} s_2 \cdots \TRANSS{t_m} s_{m+1}$
and 
use $\varphi$ to denote 
$t_1\cdots t_m$. We extend
$\act$ to these sequences: $\act(t_1\cdots t_n)=\act(t_1)\cdots \act(t_n)$.

The {\em initial configuration} of a system is $s_0=
(\vec{q}_0;\vec{\NUL})$ with $\vec{q}_0=(q_{0\p})_{\p\in\PSet}$. 
A {\em final configuration} of the system is 
$s_f= (\vec{q};\vec{\NUL})$ with all $q_\p\in\vec{q}$ final.  A
configuration $s$ is {\em reachable} if $s_0 \RR s$ and we define the {\em
  reachable set} of $S$ as $\RS(S)=\ASET{ s \ | \ s_0 \RR s}$.  We define
the traces of a system $S$ to be $\TR(S)=\ASET{\act(\varphi) \ | \ \exists
  s\in\RS(S), s_0\TRANSS{\varphi} s}$. 

\label{def:basicpropoerties}
We now define several properties about communicating systems and their
configurations. These properties will be used in \S~\ref{sec:cmsa} to
characterise the systems that correspond to multiparty session types.
Let $S$ be a communicating system, $t$ one of its transitions and
$s=(\vec{q};\vec{w})$ one of its configurations.  The following definitions
of configuration properties follow~\cite[Definition 12]{CeceF05}.

\vspace{-1ex}
\begin{enumerate}

\item $s$ is {\em stable} if 
all its buffers are empty, i.e., $\vec{w}=\vec{\NUL}$. 
 
\item 
$s$ is a {\em deadlock configuration} if $s$ is not final, and $\vec{w}=\vec{\NUL}$
and each $q_\p$ is a receiving state, i.e. all machines are 
blocked, waiting for messages. 

\item 
$s$ is an {\em orphan message configuration} if 
all $q_\p\in\vec{q}$ are final 
but $\vec{w}\not = \emptyset$, 
i.e. there is at least an orphan message in a buffer. 

\item 
$s$ is an {\em unspecified reception configuration} 
if there exists $\q\in\PSet$ such that 
$q_\q$ is a receiving state and 
$(q_\q,\p\q?a,q_\q')\in\delta$ implies that 
$|w_{\p\q}| > 0$ and $w_{\p\q} \not\in a\ASigma^\ast$, 
i.e~$q_\q$ is prevented from receiving any message from buffer $\p\q$.
\end{enumerate}
\vspace{-1ex}


\noindent
A sequence of transitions 
is said to be {\em $k$-bounded} if no channel of
any intermediate configuration $s_i$ contains more than $k$
messages. 
We define the $k$-reachability set of $S$ to be the largest
subset $\RS_k(S)$ of $\RS(S)$ within which each configuration $s$ can be
reached by a $k$-bounded execution from $s_0$. Note that, given a
communicating system $S$, for every integer $k$, the set $\RS_k(S)$ is
finite and computable.
We say that a trace $\varphi$ is 
$n$-bound, written $\MEM(\varphi)=n$, if the number of send actions in
$\varphi$ never exceeds the number of receive actions by $n$. 
We then define the equivalences:
\label{def:traceeq}
(1)
$S\approx S'$ is $\forall \varphi,\ \varphi\in\TR(S) \Leftrightarrow\varphi\in\TR(S')$; and 
(2) 
$S\approx_n S'$ is $\forall \varphi,\ \MEM(\varphi)\leq n
  \Rightarrow (\varphi\in\TR(S) \Leftrightarrow\varphi\in\TR(S'))$.
The following key properties will be examined throughout the paper as
properties that multiparty session type can enforce. They are undecidable in
general CFSMs.

\begin{definition}[safety and liveness]\rm 
\label{def:safetyliveness}
(1) 
A communicating system $S$ is {\em deadlock-free} 
(resp.~{\em orphan message-free}, {\em 
reception error-free}) if for all
$s\in \RS(S)$, $s$ is not a deadlock   
(resp.~orphan message, unspecified reception) configuration. 
%
%
%
(2)
$S$ satisfies the {\em liveness property}\footnote{The
    terminology follows \cite{CastagnaDP11}.} if for all $s\in \RS(S)$, 
there exists $s\RED^\ast s'$ 
such that $s'$ is final. 
\end{definition}

\section{Global and local  types: the LTSs and translations}
\label{sec:globallocal}
This section presents the multiparty session types, our main
object of study. For the syntax of types, we follow
\cite{BettiniCDLDY08} which is the most widely used syntax in the 
literature.  We introduce two labelled transition systems, 
for local types and for global types, and show the equivalence
between local types and communicating automata.
%
\ \\[1mm]
{\bf Syntax \ } 
A \emph{global type}, written $G,G',..$, 
describes the whole 
conversation scenario of a multiparty session as a type signature, and
a {\em local type}, written by  $T,T',..$, 
type-abstract sessions from each end-point's view. 
$\p, \q, \dots \in \PSet$ denote participants (see \S~\ref{sec:cfsm}
for conventions). 
The syntax of types is
given as: \vspace{-1ex}
\[
\begin{array}{rclrcl}
G  & \grmeq &
\GBRA{p}{p'}{a_j}{U_j}{G_j}{j\in J} 
\ \grmor\  \mu\typevar.G \ \grmor\ 
            \typevar 
           \ \grmor\ \End\\
T  &\grmeq &
\LRECVK{\pp}{a_i}{U_i}{T_i}{i\in I}
\ \grmor\  \LSENDK{\pp}{a_i}{U_i}{T_i}{i\in I}
\ \grmor\  \mu\typevar.T \ \grmor\ 
            \typevar\ \grmor\ \End 
\end{array}
\]\vspace{-1ex}

\noindent
$a_j\in \ASigma$ corresponds to the usual message label in session type
theory. We omit the mention of the carried types from the syntax in this
paper, as we are not directly concerned by typing processes. 
Global branching type
$\GBRA{p}{p'}{a_j}{U_j}{G_j}{j\in J}$ 
states that participant $\p$ can
send a message with one of the $a_i$ labels to participant $\pp'$ and that interactions
described in $G_j$ follow. 
We require $\p\neq\p'$ to prevent self-sent messages. Recursive type 
$\mu \typevar.G$ is for recursive protocols,
assuming that type variables ($\typevar, \typevar', \dots$) are guarded in the
standard way, i.e. they only occur under branchings. 
Type $\End$ represents session termination (often omitted). 
$\pp\in G$ means that $\pp$ appears in $G$.

Concerning local types, the {\em branching type}
$\LRECVK{\pp}{a_i}{}{T_i}{i\in I}$ specifies the reception of a message from
$\pp$ with a label among the $a_i$.  The {\em selection type}
$\LSENDK{\pp}{a_i}{U_i}{T_i}{i\in I}$ is its dual.  The remaining type
constructors are the same as global types.
When branching is a singleton, we write 
$\TO{p}{p'}:a.{G'}$ for global, and $\pp!a.T$ or $\pp?a.T$ for local. 
%
\ \\[1mm]
{\bf Projection \ } 
The relation between global and local types is formalised by
projection. Instead of the restricted original
projection~\cite{BettiniCDLDY08}, we use the extension with the merging
operator $\mergeop$ from \cite{DY11}: it allows each branch of the
global type to actually contain different interaction patterns.

\begin{definition}[projection]\rm 
\label{def:projection}
The 
{\em projection of $G$ onto
  $\participant{p}$} (written $G\SETRES \participant{p}$)
is defined as:\\[1mm]
$
\begin{array}{c}
\GBRA{p}{p'}{a_j}{U_j}{G_j}{j\in J}\proj{\qq}  = 
\begin{cases}
\LSENDK{\pp}{a_j}{U_j}{\G_j\proj{\qq}}{j\in J} & \qq=\pp \\
\LRECVK{\pp}{a_j}{U_j}{\G_j\proj{\qq}}{j\in J} & \qq=\pp' \\
\sqcup_{j\in J}  \G_j\proj{\qq} & \text{otherwise}
\end{cases}\\
(\mu{\typevar}.{\G})\proj{\pp} 
=
\begin{cases}
\mu{\typevar}.{\G\proj{\pp}} & \G\proj{\pp}\not = \typevar\\
\End & \text{otherwise}
\end{cases} \qquad
\typevar \proj{\pp} \ =  \ \typevar \quad 
\quad \End \proj{\pp} \ =  \ \End
\end{array}
$\\[1mm]
{\em The mergeability relation} $\mergeop$ is the smallest congruence relation over 
local types such that:

\vspace{-1.5em}
$$
\begin{prooftree}
{\forall i\in (K \cap J). T_i\mergeop T_i' \quad 
\forall k\in (K \setminus J),\forall j\in (J \setminus K).
a_k \not = a_j
}
\justifies 
{\p?\{a_k.T_k\}_{k\in K}\mergeop 
\p?\{a_j.T_j'\}_{j\in J}
}
\end{prooftree}
$$
When $T_1\mergeop T_2$ holds, 
we define the operation $\mergecup$ as a partial commutative 
operator over two  types such that $T\mergecup T=T$ for all types and that:
\[
\begin{array}{lll}
\p? \{a_k.T_k\}_{k\in K}
\mergecup 
\p?\{a_j.T_j'\}_{j\in J} \ = 
\ \p ? (\{a_k.(T_k\mergecup T_k')\}_{k\in K\cap J}
\cup \{a_k.T_k\}_{k\in K\setminus J}
\cup \{a_j.T_j'\}_{j\in J\setminus K})\\[1mm]
\end{array}
\]
and homomorphic for other types (i.e. 
$\mathcal{C}[T_1] \sqcup \mathcal{C}[T_2]=\mathcal{C}[T_1\sqcup
T_2]$ where $\mathcal{C}$ is a context for local types). 
We say that $G$ is {\em well-formed} if for all $\p\in \PSet$, 
$G\proj\p$ is defined. 
\end{definition}
\begin{example}[Commit]\label{ex:commit} \rm
The global type for the commit protocol in Figure~\ref{fig:commit} is:
\[
\begin{array}{@{}l@{}l}
\mu \typevar.\TO{\Alice}{\Bob}:& \{\mathit{act}.\,\TO{\Bob}{\Carol}:
\{\mathit{sig}.\,\TO{\Alice}{\Carol}:\mathit{commit}.\typevar \ \}, 
\ \mathit{quit}.\TO{\Bob}{\Carol}:\{\mathit{save}.\TO{\Alice}{\Carol}:\mathit{finish}.\End\}\}
\end{array}
\]
Then $\Carol$'s local type is: 
$\mu \typevar.\Bob? \{\mathit{sig}.\Alice? \{\mathit{commit}.\typevar\},\
\mathit{save}.\Alice? \{\mathit{finish}.\End\} \}$. 
\end{example}
%
\label{sub:LTS}
{\bf LTS over global types}
We next present new labelled transition relations (LTS) for 
global and local types and their sound and complete correspondence.  
The first step for giving a LTS semantics to global types (and then to
local types) is to designate the observables ($\ell, \ell', ...$).  We
choose here to follow the definition of actions for CFSMs where 
a label $\ell$ denotes the sending or the reception of a message of label
$a$ from $\p$ to $\p'$: $ 
\ \ell\grmeq \participant{p}\participant{p'}!a \ | \
\participant{p}\participant{p'}?a
$ 

In order to define an LTS for global types, we need to represent
intermediate states in the execution. For this reason, we introduce in the
grammar of $G$ the construct $\TOS{p}{p'}\colon a_j.G_j$ to represent the
fact that the message $a_j$ has been sent but not yet received.

\begin{definition}[LTS over global types]\rm 
\label{def:glts}
The relation $G\TRANS{\ell} G'$ is defined as ($\sub(\ell)$ is defined in
\S~\ref{def:sub}):\\[1mm]
{\small
$\begin{array}{c}
\rulename{GR1} \quad \GBRA{p}{p'}{a_i}{U_i}{G_i}{i\in I}\
 \TRANS{\p\p'!a_j}\ \GBRAS{p}{p'}{j}{a_i}{U_i}{G_i}{i\in I}
 \quad (j\in I)
\\[1mm]
\rulename{GR2} \quad \GBRAS{p}{p'}{j}{a_i}{U_i}{G_i}{i\in I}
 \TRANS{\p\p'?a_j}\ G_j
\quad\quad
\rulename{GR3} \ 
\infer 
 {
G\subs{\mu \typevar.G}{\typevar}
\TRANS{\ell}G'
}
 {\mu \typevar.G \TRANS\ell G'}\\[1mm]
\rulename{GR4} 
\infer
 {
   \forall j\in I \quad G_j\TRANS{\ell}G_j'\quad \p,\q\not\in \sub(\ell)
}
{\GBRA{p}{q}{a_i}{U_i}{G_i}{i\in I} \TRANS\ell 
\GBRA{p}{q}{a_i}{U_i'}{G_i'}{i\in I}}
\rulename{GR5} 
\infer 
 {
   G_j\TRANS{\ell}G_j' \quad \q\not\in \sub(\ell)\quad \forall i\in
   I\setminus j,G_i' = G_i
}
{\GBRAS{p}{q}{j}{a_i}{U_i}{G_i}{i\in I}
\TRANS\ell 
\GBRAS{p}{q}{j}{a_i}{U_i}{G_i'}{i\in I}
}
  \end{array}$
  }
\end{definition}
\rulename{GR1} represents the emission of a message while
\rulename{GR2} describes the reception of a message.  
\rulename{GR3} governs recursive types. 
\rulename{GR4,5} define the asynchronous semantics of global types,  
where the syntactic order of 
messages is enforced only for the participants that are involved.
For example, in the case when the participants of two consecutive
communications are disjoint, as in: $G_1=\TO{A}{B}:a.  {\TO{C}{D}:b.{\End}}$,
we can observe the emission (and possibly the reception) of $b$ before the
emission (or reception) of $a$ (by \rulename{GR4}).

A more interesting example is: 
$G_2=\TO{A}{B}:a.{\TO{A}{C}:b.\End}$. We write
$\ell_1={A}{B}!a$, 
$\ell_2={A}{B}?a$, 
$\ell_3={A}{C}!b$ and
$\ell_4={A}{C}?b$. 
The LTS allows the following three sequences: 
\[
\footnotesize
\begin{array}{llllllllll}
G_1\ & \!\!\!\!\TRANS{\ell_1}& \TOS{A}{B}:a.\TO{A}{C}:b.{\End}
 & \TRANS{\ell_2} & \TO{A}{C}:b.\End
  & \TRANS{\ell_3} & \TOS{A}{C}:b.\End & \TRANS{\ell_4} \ \End\\[1mm]
G_1\ & \!\!\!\!\TRANS{\ell_1}&  
\TOS{A}{B}:a.\TO{A}{C}:b.{\End}
 & \TRANS{\ell_3} & \TOS{A}{B}:a.\TOS{A}{C}:b.\End & \TRANS{\ell_2} &
\TOS{A}{C}:b.\End & \TRANS{\ell_4}\ \End\\[1mm]
G_1\ & \!\!\!\!\TRANS{\ell_1}& \ 
\TOS{A}{B}:a.\TO{A}{C}:b.{\End}
 & \TRANS{\ell_3} & \TOS{A}{B}:a.\TOS{A}{C}:b.\End & \TRANS{\ell_4} &
\TOS{A}{B}:a.\End & \TRANS{\ell_2}\ \End
\end{array}
\] 
The last sequence is the most interesting: the sender $A$ has to follow the
syntactic order but the receiver $C$ can get the message $b$ before $B$
receives $a$. The respect of these constraints is enforced by the
conditions $\p,\q\not\in \sub(\ell)$ and $\q\not\in \sub(\ell)$ in rules
\rulename{GR4,5}.
\ \\[1mm]
{\bf LTS over local types }
We define the LTS over local types. This is done in two steps,
following the model of CFSMs, where the semantics is given first for
individual automata and then extended to communicating systems.
We use the same labels ($\ell, \ell',...$) as the ones for CFSMs.

\begin{definition}[LTS over local types]
\label{def:llts}
\rm
The relation $T\TRANS{\ell} T'$, for the local type of role $\p$, is defined as:

\vspace{-4ex}\[
\small
\begin{array}{c@{\quad}c@{\quad}c}
\rulename{LR1}\ \LSENDK{\q}{a_i}{U_i}{T_i}{i\in I}
\TRANS{\p\q!a_i} T_i
&
\rulename{LR2}\ 
\LRECVK{\q}{a_i}{U_i}{T_i}{i\in I}
\TRANS{\q\p?a_j}
T_j 
&
\rulename{LR3}\ \infer
 {
T\subs{\mu \typevar.T}{\typevar}
\TRANS{\ell}T'
}
    {\mu \typevar.T \TRANS\ell T'}
\end{array}
\]\vspace{-4ex}
\end{definition}
The semantics of a local type follows the intuition that every action of
the local type should obey the syntactic order.
%
We define the LTS for collections of local types.
\begin{definition}[LTS over collections of local types]
\label{def:cllts}
\rm A configuration $s=(\vec{T};\vec{w})$ of a system of local types
$\ASET{T_\p}_{\p\in \PSet}$ is a pair with $\vec{T}=(T_\p)_{\p\in\PSet}$
and $\vec{w}=(w_{\p\q})_{\p\neq\q\in\PSet}$ with $w_{\p\q} \in
\ASigma^\ast$.  We then define the transition system for configurations. 
For a configuration $s_T=(\vec{T};\vec{w})$, the visible transitions of $s_T \TRANS{\ell}
s_T'=(\vec{T}';\vec{w}')$ are defined as:
\begin{enumerate} 
\item $T_\p 
\TRANS{\p\q!a} 
\T_\p'$ and 
(a) $T_{\p'}' = T_{\p'}$ for all ${\p'}\not = \p$; and 
(b) $w_{\p\q}' = w_{\p\q}\cdot a$ and 
$w_{\p'\q'}'=w_{\p'\q'}$ for all ${\p'\q'}\not=\p\q$; or  

\item 
$T_\q 
\TRANS{\p\q?a} 
\T_\q'$
and 
(a) $T_{\p'}' = T_{\p'}$ for all ${\p'}\not = \q$; and 
(b) $w_{\p\q} = a\cdot w_{\p\q}' $ and $w_{\p'\q'}'=w_{\p'\q'}$ for all ${\p'\q'}\not=\p\q$. 
\end{enumerate}
\end{definition}
The semantics of local types is therefore defined over configurations,
following the definition of the semantics of CFSMs. $w_{\p\q}$
represents the FIFO queue at channel $\p\q$.  
%
We write $\TR(G)$ to denote the set of the visible
traces that can be obtained by reducing $G$. Similarly for $\TR(T)$ and
$\TR(S)$. We extend the trace equivalences $\approx$ and $\approx_n$ 
in \S~\ref{def:traceeq} to
global types and configurations of local types. 
%
%
\label{subsec:lts}

We now state the soundness and completeness of projection with
respect to the LTSs defined above. The proof is given in 
Appendix \ref{app:lts}. 

\begin{theorem}[soundness and completeness]\footnote{
The local type abstracts the behaviour of multiparty typed processes
as proved in the subject reduction theorem in \cite{CHY07}.
Hence this theorem implies that processes typed by global type $G$ 
 by the typing system in \cite{CHY07,BettiniCDLDY08} 
follow the LTS of $G$.}
\label{thm:lts} 
Let $G$ be a global type with participants $\PSet$ and let
$\vec{T}=\ASET{G\proj\p}_{\p\in\PSet}$ be the local types projected from
$G$. Then $G\approx (\vec{T};\vec{\NUL})$.
\end{theorem}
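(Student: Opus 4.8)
The plan is to prove the two trace inclusions that together give $G \approx (\vec{T};\vec{\NUL})$, by relating every runtime global type reachable from $G$ to a configuration of the projected system. First I would extend the projection of Definition~\ref{def:projection} to the runtime construct $\TOS{p}{p'}\colon a_j$, so that a runtime global type $G'$ is mapped to a whole configuration $\mathrm{conf}(G') = (\vec{T}';\vec{w}')$: each in-transit message $\GBRAS{p}{p'}{j}{a_i}{U_i}{G_i}{i\in I}$ contributes $a_j$ to the buffer $w_{\p\p'}$ (in the order in which such constructs are nested along a channel), projects the sender $\p$ onto the already-committed continuation $G_j\proj\p$, and leaves the receiver $\p'$ with its branching type $\LRECVK{\p}{a_i}{U_i}{G_i\proj{\p'}}{i\in I}$ still to be consumed. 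On the plain branching construct the projection is unchanged, so $\mathrm{conf}(G) = (\vec{T};\vec{\NUL})$, which fixes the base case.

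The core is an operational correspondence, proved by analysing each rule of Definition~\ref{def:glts} against Definition~\ref{def:cllts}. \rulename{GR1} (emission) matches clause~(1): the sender's selection type fires $\p\p'!a_j$ and $a_j$ is appended to $w_{\p\p'}$; \rulename{GR2} (reception) matches clause~(2): the receiver's branching type consumes the head $a_j$ of $w_{\p\p'}$ and moves to $G_j\proj{\p'}$; \rulename{GR3} and \rulename{LR3} agree by unfolding, justified by guardedness. The asynchronous rules \rulename{GR4} and \rulename{GR5}, whose side conditions $\p,\q\notin\sub(\ell)$ (in \rulename{GR4}) and $\q\notin\sub(\ell)$ (in \rulename{GR5}) express exactly that the firing participant is uninvolved in the leading communication, correspond to the fact that in a configuration an action of a party other than the current head sender or receiver is enabled independently; I would package this as a commutation lemma stating that independent labels may be permuted in both LTSs.

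The crux, and the step I expect to be the main obstacle, is reconciling the eager branch resolution of the global LTS with the lazy resolution of the configuration semantics. At \rulename{GR1} the global type commits to branch $j$ and thus, in $\mathrm{conf}$, fixes the continuations $G_j\proj\q$ of every third party $\q$; but clause~(1) of Definition~\ref{def:cllts} never alters the local type of an uninvolved $\q$, which in the configuration keeps the merged type $\sqcup_{i\in I}(G_i\proj\q)$ until $\q$ itself performs the distinguishing reception. Bridging these requires the key property of the merge operator: from the mergeability condition (receptions from a common sender with pairwise distinct labels) one shows that $\sqcup_{i\in I}(G_i\proj\q)$ simulates each $G_j\proj\q$, and that, conversely, whenever the buffers feeding $\q$ carry only the labels of the committed branch $j$, the only traces the merged type can produce are those of $G_j\proj\q$.

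I would therefore maintain, as an invariant of the induction, that every reachable configuration of $(\vec{T};\vec{\NUL})$ equals $\mathrm{conf}(G')$ for some $G'$ with $G \RR G'$, so that buffer contents are always consistent with a single sequence of committed choices, and I would conclude each inclusion by induction on the length of the trace, using the commutation lemma to realign the order of independent actions and the merge lemma to match third-party behaviour. Well-formedness of $G$ (definedness of all projections, hence of all the merges used) is precisely what makes every step above total.
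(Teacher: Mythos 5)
Your proposal follows essentially the same route as the paper's proof: an extended projection mapping runtime global types (with the $\TOS{p}{p'}$ construct) to full configurations, a step-by-step operational correspondence between the rules of Definition~\ref{def:glts} and Definition~\ref{def:cllts}, and a dedicated device for the mismatch between the global LTS's eager branch commitment and the configuration's lazy resolution at third parties. Your ``merge lemma'' is precisely the role played in the paper by a subtyping relation $\prec$ on local types (generated by $T_i \prec \sqcup_{i\in I} T_i$ and closure under the type constructors) together with the observation that $\prec$ preserves traces of configurations.

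The one point to repair is your closing invariant: you state that every reachable configuration \emph{equals} $\mathrm{conf}(G')$ for some $G'$ with $G\RR G'$, but this contradicts your own (correct) analysis two sentences earlier --- after \rulename{GR1} commits to branch $j$, the extended projection assigns a third party $\q$ the type $G_j\proj\q$, whereas the actual configuration still holds $\sqcup_{i\in I}(G_i\proj\q)$, and these are in general distinct local types. Equality must be weakened to the one-sided relation $\projrel{G'}\prec s'$, and the step lemma must be proved in the form ``if $\projrel{G}\prec s$ then $G\TRANSS{\ell}G'$ iff $s\TRANSS{\ell}s'$, with $\projrel{G'}\prec s'$,'' which is exactly how the paper phrases it. You already have all the ingredients for this (your merge lemma is the trace-preservation of $\prec$), so this is a fix to the stated induction hypothesis rather than a missing idea.
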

%
%
{\bf Local types and CFSMs \ } 
Next we show 
how to algorithmically go from local types to CFSMs and back while
preserving the trace semantics.
We start by translating local types into CFSMs. 
\begin{definition}[translation from local types to CFSMs]
\rm 
Write $T'\in T$ if $T'$ occurs in $T$. 
Let $T_0$ be the local type of participant $\pp$
projected from $G$. The automaton corresponding to $T_0$ is
$\A(T_0)=(Q,C,q_0,\ASigma,\delta)$ where:
(1) $Q = \{ T'\ | \ T'\in T_0, \  T' \not = \typevar, T' \not = \mu \typevar.T  \}$;
(2) $q_0= T_0'$ with $T_0 = \mu \vec{\typevar}.T_0'$ and $T_0'\in Q$;   
(3) $C = \{ \p\q \PAR \p,\q \in \G\}$; 
(4) $\ASigma$ is the set of $\{a \in G\}$; and  
(5) $\delta$ is defined as:\\[1mm]
\begin{tabular}{l}
If $T = \pp'! \{a_j.T_j\}_{j\in J}\in Q$,
then $
\begin{cases}
(T,(\p\p'! a_j),T_{j})\in \delta & T_j \not = \typevar\\
(T,(\p\p'! a_j), T')\in\delta & 
T_j = \typevar, \ \ \mu \typevar\vec{\typevar}.T'\in T_0,  T'\in Q 
\end{cases}
$
\\
If $T = \pp' ?\{ a_j.T_j\}_{j\in J}\in Q$,  
then $
\begin{cases}
(T,(\p'\p? a_j),T_{j})\in \delta & T_j \not = \typevar\\
(T,(\p'\p? a_j),T')\in\delta & 
T_j = \typevar, \ \mu \typevar\vec{\typevar}.T'\in T_0,  T'\in Q 
\end{cases}
$
\end{tabular}
\end{definition}
The definition says that 
the set of states $Q$ 
are the suboccurrences of branching or selection or $\End$ in 
the local type; the initial state $q_0$ is 
the occurrence of (the recursion body of) $T_0$;  
the channels and alphabets 
correspond to those in $T_0$; and the transition is defined from the
state $T$ to its body $T_j$ with the action $\p\p'!a_j$ for the 
output and $\p\p'?a_j$ for the input. If $T_j$ is a recursive 
type variable $\typevar$, it points the state of the body of 
the corresponding recursive type. As an example of the translation, see $\mathtt{C}$'s local type in 
Example \ref{ex:commit} and its corresponding automaton in Figure~\ref{fig:commit}.


\begin{proposition}[local types to CFSMs]
\label{pro:basic}
Assume $T_\p$ is a local type. Then $\A(T_\p)$ is deterministic, 
directed and has no mixed states. 
\end{proposition}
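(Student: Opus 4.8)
The plan is to prove all three properties by a single case analysis on the syntactic shape of the states of $\A(T_\p)$. First I would fix the shape of the state set: by the translation, a state $T\in Q$ is an occurrence in $T_\p$ that is neither a recursion variable $\typevar$ nor a recursion $\mu\typevar.T'$, hence $T$ is necessarily a selection $\p'!\{a_j.T_j\}_{j\in J}$, a branching $\p'?\{a_j.T_j\}_{j\in J}$, or $\End$. All three claims then follow by reading off the transitions that $\delta$ attaches to each of these shapes.

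For the absence of mixed states and for directedness I would argue the two together. A selection state $\p'!\{a_j.T_j\}_{j\in J}$ carries, by definition of $\delta$, only transitions labelled $\p\p'!a_j$: these are all \emph{sending} actions and all directed to the single partner $\p'$. Dually, a branching state $\p'?\{a_j.T_j\}_{j\in J}$ carries only transitions $\p'\p?a_j$, all \emph{receiving} and all from the single partner $\p'$; and $\End$ carries none. Hence no state has both a send and a receive transition (no mixed states), and the outgoing actions of every state share a single participant and a single polarity (directed). This part is immediate once the shape of $Q$ is fixed.

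The only delicate property is determinism, and this is where I expect the (modest) real work to lie. Fix a state $T$ and a label $\ell$ with $(T,\ell,q'),(T,\ell,q'')\in\delta$. In the selection case $\ell=\p\p'!a$, and every matching transition arises from an index $j\in J$ with $a_j=a$; to conclude $q'=q''$ I need two auxiliary facts. First, such a $j$ is unique, which rests on the well-formedness convention that the labels $\{a_j\}_{j\in J}$ of a single selection or branching are pairwise distinct; since $T_\p$ is projected from a well-formed $G$, this distinctness is exactly what the side condition of the mergeability relation $\mergeop$ in Definition \ref{def:projection} preserves, so I would invoke that. Second, once $j$ is fixed the target is determined: it is $T_j$ when $T_j\neq\typevar$, and otherwise the body $T'$ of the unique recursion $\mu\typevar\vec{\typevar}.T'$ binding $\typevar$, uniqueness following from the usual assumption that bound recursion variables are pairwise distinct. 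The branching case is symmetric. The main obstacle is thus not a calculation but making these two implicit hypotheses explicit; I would record them at the outset so that the case analysis closes cleanly.
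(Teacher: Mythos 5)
Your proposal is correct and follows essentially the same route as the paper's own (very terse) argument: directedness and absence of mixed states are read off from the syntactic shape of branching/selection states, and determinism rests on the pairwise distinctness of the labels $a_i$ within a single branching or selection. Your additional care about the uniqueness of the target when the continuation is a recursion variable is a reasonable elaboration the paper leaves implicit, but it does not change the structure of the argument.
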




\noindent We say that a CFSM is {\em basic} if it is 
deterministic,  directed and has no mixed states. 
Any basic CFSM can be translated into a local type.  

\begin{definition}[translation from a basic CFSM to a local type]\rm
Let $M_\p=(Q,C,q_0,\ASigma,\delta)$ and assume 
$M_\q$ is basic. Then we define the translation $\LT(M_\p)$ such that 
$\LT(M_\p)=\LT_\NUL(q_0)$ where $\LT_{\VEC{q}}(q)$ is defined as:
%
%
%
%
%
%
%
%
%
\vspace{-2mm}
\begin{itemize}
\item[(1)]
$\LT_{\VEC{q}}(q)=\mu\typevar_q.\p'!\{
  a_j.\LT_{\VEC{q}\cdot q}^\circ(q_{j})\}_{j\in J}$
if $(q,\p\p'!a_j,q_{j}) \in \delta$; 
\item[(2)] 
$\LT_{\VEC{q}}(q)=
  \mu \typevar_q.\p'? \{
a_j.\LT_{\VEC{q}\cdot q}^\circ(q_{j})\}_{j\in J}$
if $(q,\p'\p?a_j,q_{j}) \in \delta$;
\item[(3)]
$\LT^\circ_{\VEC{q}}(q)=\LT_\NUL(q)=\End$ if $q$ is final;  
(4)
$\LT^\circ_{\VEC{q}}(q)=\typevar_{q_k}$ if $(q,\ell,q_k) \in
  \delta$ and $q_k\in \VEC{q}$; 
and 
\item[(5)] 
$\LT_{\VEC{q}}^\circ(q)=\LT_{\VEC{q}}(q)$ otherwise. 
\end{itemize}
Finally, we replace $\mu \typevar.T$ by $T$ if $\typevar$ is not in $T$.
\end{definition} 
In $\LT_{\VEC{q}}$, $\VEC{q}$ records visited states; 
(1,2) translate the receiving and sending states to branching and selection types, respectively; (3) translates the final state to $\End$; and 
(4) is the case of a recursion: since $q_k$ was visited, 
$\ell$ is dropped and replaced by the type variable. 

The following states that the translations preserve the semantics.

\begin{proposition}[translations between CFSMs and local types]
\label{pro:translation}
If a CFSM $M$ is basic, then $M\WB\LT(M)$. If $T$ is a local
type, then $T\WB\A(T)$.
\end{proposition}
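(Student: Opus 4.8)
Since $\WB$ denotes trace equivalence, for each of the two claims it suffices to build a label-preserving (strong) bisimulation between the LTS over local types (Definition~\ref{def:llts}) and the state-transition graph of the CFSM, because bisimilarity always implies equality of the trace sets. The point that makes this clean is that every rule of Definition~\ref{def:llts} produces a \emph{visible} move: \rulename{LR3} absorbs the unfolding of $\mu\typevar.T$ into the same $\ell$-labelled step rather than emitting a silent action, and each CFSM edge likewise carries a label $\ell$. Hence a bisimulation respecting labels directly yields $\TR(M)=\TR(\LT(M))$ and $\TR(T)=\TR(\A(T))$, i.e.\ the two $\WB$-statements.

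\textbf{From types to machines ($T\WB\A(T)$).}
The states of $\A(T_0)$ are exactly the selection/branching/$\End$ suboccurrences of $T_0$, with an edge whose continuation is a variable $\typevar$ redirected to the state carrying the body of the corresponding $\mu\typevar$-binder. The plan is to relate each local type $U$ reachable in the LTS from $T_0$ to the unique state $q$ from which $U$ arises by unfolding. First I would prove, by induction on the reduction sequence and using guardedness, the invariant that every reachable $U$ equals the full recursive unfolding of some state subterm $q$ under the ambient $\mu$-bindings of $T_0$. Taking $\mathcal{R}=\{(U,q)\mid U\text{ unfolds }q\}$, the matching of moves is then routine: a selection/branching state offers precisely the labels $\p\p'!a_j$ / $\p'\p?a_j$ that \rulename{LR1}/\rulename{LR2} and the defining cases of $\delta$ both prescribe, and when a continuation is a variable the LTS re-unfolds via \rulename{LR3} exactly as $\delta$ redirects to the recursion body. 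Initiality holds since $q_0$ is the body $T_0'$ of $T_0=\mu\vec\typevar.T_0'$, whose unfolding is $T_0$.

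\textbf{From machines to types ($M\WB\LT(M)$).}
Here the hypothesis that $M$ is basic guarantees the translation is well defined: every non-final state is all-sending or all-receiving and directed, so its outgoing edges form a single selection $\p'!\{\dots\}$ or branching $\p'?\{\dots\}$, and finiteness of $Q$ together with the visited-set $\VEC{q}$ ensures $\LT_{\NUL}(q_0)$ terminates as a finite type. I would relate each state $q$ reached along a DFS path with stack $\VEC{q}$ to the subterm $\LT_{\VEC q}(q)$ it generates, and argue this is a bisimulation. Tree edges match clauses~(1)/(2) directly; an edge $(q,\ell,q_k)$ with $q_k\in\VEC{q}$ is turned by clause~(4) into the variable $\typevar_{q_k}$, and the invariant to check is that unfolding this variable through the enclosing $\mu\typevar_{q_k}$-binder, via \rulename{LR3}, re-enters exactly the type associated with $q_k$, thereby reproducing $M$'s back-edge; cross/forward edges fall under clause~(5), which merely duplicates a subtree and is hence transparent for traces. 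Dropping an unused $\mu\typevar_q$-binder in the final step alters neither the labels offered nor the continuations, so it preserves the bisimulation.

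\textbf{Main obstacle.}
The delicate part in both directions is the treatment of recursion, namely reconciling the \emph{finite} cyclic structure of the automaton with the \emph{infinite} unfolding semantics forced by \rulename{LR3}. For $T\WB\A(T)$ this amounts to establishing the unfolding invariant that pins down which closed LTS-type corresponds to which open state subterm; for $M\WB\LT(M)$ it amounts to verifying that the DFS-with-visited-set captures every cycle of $M$ as a back-edge variable whose \rulename{LR3}-unfolding lands on the correct state, while the duplication arising from cross/forward edges leaves the trace set unchanged. Once this folding/unfolding correspondence is in place, the label-by-label matching of selection and branching moves is immediate from Definition~\ref{def:llts} and the clauses defining $\delta$ and $\LT$.
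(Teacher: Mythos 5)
Your proof is correct and follows essentially the same route as the paper, whose own argument is just a one-line appeal to structural induction on $M$ (resp.\ $T$) along the translations $\LT$ and $\A$, declared ``mechanical.'' Your explicit bisimulation relation and the unfolding invariant for recursion are simply the details that make that mechanical induction go through.
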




\section{Completeness and synthesis}
\label{sec:cmsa}
This section studies the synthesis and sound and complete characterisation
of the multiparty session types as communicating automata. We first note that
basic CFSMs correspond to the natural generalisation of half-duplex
systems~\cite[\S~4.1.1]{CeceF05}, in which each pair of machines linked by
two channels, one in each direction, communicates in a half-duplex way. In
this class, the safety properties of Definition~\ref{def:safetyliveness}
are however undecidable~\cite[Theorem 36]{CeceF05}. We therefore need 
a stronger (and decidable) property to force basic CFSMs to behave as
if they were the result of a projection from global types.
\\[1mm]
{\bf Multiparty compatibility}
In the two machines case, there exists a sound and complete condition
called {\em compatible}~\cite{GMY84}. Let us define the isomorphism
$\Phi:(\chanset \times \ASET{!,?} \times {\Bbb A})^\ast \RED (\chanset \times
\ASET{!,?} \times {\Bbb A})^\ast$ such that $\Phi(j?a)=j!a$,
$\Phi(j!a)=j?a$,
$\Phi(\NUL)=\NUL$, 
$\Phi(t_1\cdots t_n)=\Phi(t_1)\cdots \Phi(t_n)$.
 $\Phi$ exchanges a sending action with the corresponding
receiving one and vice versa.  The compatibility of two machines can be
immediately defined as $\TR(M_1)=\Phi(\TR(M_2))$ (i.e. the traces of $M_1$
are exactly the set of dual traces of $M_2$).  
The idea of the extension to the multiparty case comes from the observation
that from the viewpoint of the participant $\p$, the rest of all the
machines $(M_\q)_{\q\in \PSet\setminus \p}$ should behave as if they
were one CFSM which offers compatible traces $\Phi(\TR(M_\p))$, up to
internal synchronisations (i.e. 1-bounded executions). Below we define a way to group CFSMs.

\begin{definition}[Definition 37, \cite{CeceF05}]\rm 
Let $M_i=(Q_i,C_i,q_{0i},{\Bbb A}_i,\delta_i)$. The {\em associated CFSM} of
  $S=(M_1,..,M_n)$ is $M=(Q,C,q_0,\Sigma,\delta)$ such that:   
$Q=Q_1\times Q_2 \times \cdots \times  Q_n$, $q_0=(q_{01},\dots,q_{0n})$ 
and $\delta$ is the least relation verifying: 
$((q_1,...,q_i,...,q_n),\action,(q_1,...,q_i',...,q_n))\in \delta$ if 
$(q_i,\action,q_i')\in \delta_i$ ($1\leq i\leq n$). 
\end{definition}

Below we define a notion of compatibility extended to more than two
CFSMs.
We say that $\varphi$ is an {\em alternation} 
if $\varphi$ is an alternation of sending and 
corresponding receive actions (i.e. the action
$\p\q!a$ is immediately followed by $\p\q?a$). 


\begin{definition}[multiparty compatible system]\rm 
\label{def:multipartycompatible}
A system $S=(M_1,..,M_n)$ ($n\geq 2$) is
{\em multiparty compatible} if for any 1-bounded reachable stable state $s\in
\RS_1(S)$, 
for any sequence of actions $\ell_1\cdots \ell_k$
from $s$ in $M_i$, there is a sequence of transitions 
$\varphi_1\cdot t_1 \cdot \varphi_2\cdot t_2
\cdot \varphi_3
\cdots \varphi_k \cdot t_k$ from $s$ in 
a CFSM corresponding to $S^{-i}=(M_1,..,M_{i-1},M_{i+1},..,M_n)$ where
$\varphi_j$ is either empty or an alternation,  $\ell_j = \Phi(\act(t_j))$
and $i\not\in\act(\varphi_j)$ for $1\leq  j \leq k$ 
(i.e.~$\varphi_j$ does not contain actions to or from channel $i$).

\end{definition}
The above definition states that for each $M_i$, 
the rest of machines $S^{-i}$ can produce the compatible (dual) actions 
by executing alternations in $S^{-i}$. 
From $M_i$, these intermediate alternations can be seen as non-observable
internal actions. 


\begin{example}[multiparty compatibility]\rm
\label{ex:mcomp}
As an example, we can test the multiparty compatibility property on the commit example
of Figure~\ref{fig:commit}.  
We only detail here how to check the
compatibility from the point of view of $\Carol$. To check the compatibility for
the actions $\act(t_1\cdot t_2)=\mathit{BC}?\text{sig}\cdot\mathit{AC}!\text{commit}$, the only possible 1-bound (i.e.~alternating) execution is
$\mathit{AB}!\text{act}\cdot\mathit{AB}?\text{act}$, and
$\Phi(\act(t_1))=\mathit{BC}!\text{sig}$ sent from $\Bob$
and $\Phi(\act(t_2))=\mathit{AC}!\text{commit}$ sent from $\Alice$. 
To check the compatibility for
the actions $\act(t_3\cdot t_4)=\mathit{BC}?\text{save}\cdot\mathit{AC}?\text{finish}$,
the 1-bound execution is
$\mathit{AB}!\text{quit}\cdot\mathit{AB}?\text{quit}$, and
$\Phi(\act(t_3))=\mathit{BC}!\text{save}$ from $\Bob$
and $\Phi(\act(t_4))=\mathit{AC}!\text{finish}$ from $\Alice$. 
\end{example}
\begin{remark}
\label{rem:non-unique-sender}
In Definition \ref{def:multipartycompatible}, 
we require to check the compatibility from any 1-bounded reachable
stable state 
in the case one branch is selected by different senders. 
Consider the following machines:\\
$
\begin{array}{l|l} 
\mathtt{A}\rightarrow\hspace{-1cm}
\xymatrix@C=2em@R=2ex{
   & *++[o][F]{}\ar[r]^{BA?\text{a}}\ar[d]_{BA?\text{b}}
   & *++[o][F]{}\ar[r]^{CA?\text{c}}
   & *++[o][F=]{}  \\
   & *++[o][F]{}\ar[r]^{CA?\text{d}} & *++[o][F=]{} \\
 } 
\quad
\mathtt{B}\rightarrow\hspace{-1cm}
\xymatrix@C=2em@R=2ex{
   & *++[o][F]{}\ar[r]^{BA!\text{a}}\ar[d]_{BA!\text{b}}
   & *++[o][F=]{}\\
   & *++[o][F=]{} \\
 } 
\quad
\mathtt{C}\rightarrow\hspace{-1cm}
\xymatrix@C=2em@R=2ex{
   & *++[o][F]{}\ar[r]^{CA!\text{c}}\ar[d]_{CA!\text{d}}
   & *++[o][F=]{}\\
   & *++[o][F=]{} \\
 } \ & \
\mathtt{A'}\rightarrow\hspace{-1cm}
\xymatrix@C=2em@R=2ex{
   & *++[o][F]{}\ar@/^/[r]^{BA?\text{a}}@\ar@/_/[r]_{BA?\text{b}}
   & *++[o][F]{}\ar[r]^{CA?\text{c}}\ar[dr]_{CA?\text{d}}
   & *++[o][F=]{}  \\
   &  & & *++[o][F=]{} \\
 } 
\end{array}
$\\[2mm]
In  $\Alice$, $\Bob$ and $\Carol$, 
each action in each machine has its dual 
but they do not satisfy multiparty compatibility. For example, 
if $BA!\text{a}\cdot BA?\text{a}$ is executed, 
$CA!\text{d}$ does not have a dual action (hence 
they do not satisfy the safety properties). 
On the other hand,  the machines $\Alice'$, $\Bob$ and $\Carol$ 
satisfy the multiparty compatibility. 
\end{remark}

\begin{theorem} 
\label{pro:core:stable}
\label{pro:cmsa:safety}
Assume $S=(M_\p)_{\p \in \PSet}$ is basic and multiparty compatible. 
Then 
$S$ satisfies the three safety properties 
in Definition
\ref{def:safetyliveness}.
Further, if there exists at least one $M_\q$ which
includes a final state, then $S$ satisfies the liveness property. 
\end{theorem}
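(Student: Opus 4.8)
The plan is to lift the local, $1$-bounded guarantee of multiparty compatibility to a global statement about \emph{all} reachable configurations. First I would exploit that $S$ is basic: no state is mixed, every state is directed, and each machine is deterministic, so $S$ is essentially half-duplex in the sense of~\cite{CeceF05}. This yields a \textbf{commutation (diamond) lemma}: any two transitions whose actions involve disjoint channels commute, and more generally a receive can be postponed past, or advanced before, any action that neither produces nor consumes the message it reads. I would use this to put executions into a normal form in which receptions are scheduled as early as they are enabled.

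The technical heart is a \textbf{bridging lemma} that makes multiparty compatibility applicable at an arbitrary reachable configuration $s=(\vec{q};\vec{w})$. Let $\hat s=(\hat{\vec q};\vec{\NUL})$ be the last stable configuration visited on a fixed run $s_0\RR s$; since $s_0$ is stable such a $\hat s$ exists, and the suffix $\hat s \TRANSS{\sigma} s$ begins with a send. I would prove, by induction on the length of the run and using the diamond lemma, that every reachable stable configuration is in fact reachable by a $1$-bounded execution, i.e.~$\hat s\in\RS_1(S)$: each ``round'' between two consecutive stable configurations can be reordered so that every send is immediately followed by its matching receive, which never exceeds a buffer size of one. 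This induction is run jointly with reception safety, so that the reordered receives are always enabled. Establishing this lemma is the main obstacle, because it is exactly the point where the genuinely unbounded buffers a basic system can accumulate (e.g.~a purely sending machine looping in the commit example of Figure~\ref{fig:commit}) must be reconciled with the $1$-bounded scope of the compatibility check.

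Granting the bridging lemma, the three safety properties follow by applying Definition~\ref{def:multipartycompatible} at the anchor $\hat s\in\RS_1(S)$ and reading off the partial round $\sigma$. For \emph{reception-error-freedom}, a head message $a$ in $w_{\p\q}$ was emitted by some $M_\p$, so by compatibility at $\hat s$ the remainder $S^{-\p}$ offers the dual action, i.e.~$M_\q$ is poised to perform $\p\q?a$; as $M_\q$ is directed, deterministic and not mixed, its current state $q_\q$ is a receiving state with a transition on $\p\q?a$, and $a$ can be consumed. For \emph{orphan-message-freedom}, if every $q_\p$ were final while some $w_{\p\q}\neq\NUL$, the same argument would exhibit a receive of the (final, hence transition-less) machine $M_\q$, a contradiction. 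For \emph{deadlock-freedom}, a non-final reachable $s$ with $\vec{w}=\vec{\NUL}$ is itself stable, hence in $\RS_1(S)$; taking a non-final machine $M_i$, its enabled receive is matched by compatibility with a dual send performed in the rest of the system, whose first action is therefore enabled at $s$, contradicting that every machine is in a receiving state.

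Finally, for \emph{liveness} I would use compatibility constructively. From any reachable $s$, first drain the buffers using reception-error-freedom to reach a stable $s'\in\RS_1(S)$. Then I follow a synchronous schedule: at each stable configuration let the next machine send and immediately receive the message, so that the reached configurations stay $1$-bounded and stable. Deadlock-freedom guarantees this never blocks at a non-final state, and the existence of a machine $M_\q$ with a final state lets me steer the schedule towards termination — at each choice point I select the branch that brings $M_\q$, and through the matched dual actions its partners, closer to a final state — yielding a run $s'\RR s_f$ with $s_f$ final. The delicate point here, again handled by the diamond lemma and directedness, is to schedule the independent local completions together without reintroducing a blocked configuration.
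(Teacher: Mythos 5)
Your proposal follows essentially the same route as the paper's proof: a diamond/commutation lemma for basic systems, a reordering argument showing that reachable stable configurations are $1$-bounded reachable (the paper's ``stable property'', proved there via an input-availability lemma and a causality analysis that plays the role of your joint induction with reception safety), the three safety properties read off from multiparty compatibility at stable anchors, and liveness derived last. The decomposition, the identification of the bridging/stable lemma as the technical heart, and the way each safety property is extracted all match the paper's argument.
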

\begin{proof}
\ifpopllong
We first prove that any basic $S$ which satisfies multiparty compatible is
{\em stable} ($S$ is stable, if, for all $s\in \RS(S)$, 
there exists an execution $\TRANSS{\varphi'}$ such that 
$s\TRANSS{\varphi'} s'$ and $s'$ is stable, and 
there is a 1-bounded execution $s_0\TRANSS{\varphi''} s'$, i.e.
any trace can be translated into a 1-bounded execution
after some appropriate executions). The proof is non-trivial
using a detailed analysis of causal relations   
to translate into a 1-bounded executions. 
Then the orphan message- and the reception error-freedom are its
corollary. 
The deadlock-freedom is proved by the stable property and 
multiparty compatibility. Liveness 
is a consequence of the orphan message- and 
deadlock-freedom. See Appendix \ref{app:cmsa}. 
\QED
\else 
See \cite{fullversion}. 
\fi 
\end{proof}

\begin{proposition}
If all the CFSMs $M_\p$ ($\p\in \PSet$) are basic, 
there is an algorithm to check whether $(M_\p)_{\p\in \PSet}$
is multiparty compatible. 
\end{proposition}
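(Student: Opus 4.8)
The plan is to collapse the two infinitary features of Definition~\ref{def:multipartycompatible} --- the quantification over all $1$-bounded reachable stable states and the quantification over arbitrary action sequences of $M_i$ --- into finite searches over the already-computable set $\RS_1(S)$. First I would invoke the observation recorded in \S\ref{sec:cfsm} that for every $k$ the reachability set $\RS_k(S)$ is finite and computable; in particular the set of stable states $s\in\RS_1(S)$ ranged over in the definition is finite and obtainable by a $1$-bounded breadth-first exploration from $s_0$. The outer quantifiers over such $s$ and over the machine index $i$ (with $1\le i\le n$) are then finite, so it suffices to give, for fixed $s$ and $i$, a terminating procedure for the inner condition.

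Second, I would show that the witnessing executions demanded by the definition stay $1$-bounded, so the search space remains finite. Each $\varphi_j$ is by definition an alternation --- a block of sends each \emph{immediately} followed by the matching receive --- and contains no action incident to channel $i$; hence it leaves every buffer of $S^{-i}$ empty and does not touch the buffers to or from $M_i$. Each transition $t_j$, paired with the $M_i$-action $\ell_j=\Phi(\act(t_j))$ it matches, is a single communication on a channel incident to $i$ in which the send is immediately followed by its receive. Interleaving $M_i$'s moves with $\varphi_1 t_1\cdots\varphi_k t_k$ in this disciplined order therefore produces an execution of the full system $S$ in which no buffer ever holds more than one message; since $s\in\RS_1(S)$, every configuration it visits again lies in the finite set $\RS_1(S)$. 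Moreover the alternations only move $S^{-i}$ between its finitely many reachable stable configurations, so a search that never repeats a configuration is of bounded depth.

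Third, on this finite space I would decide the inner condition as a language-inclusion check. Since $M_i$ is basic, hence deterministic by Proposition~\ref{pro:basic}, each sequence $\ell_1\cdots\ell_k$ selects a unique path of $M_i$, and the existence of one matching execution $\varphi_1 t_1\cdots\varphi_k t_k$ amounts to inclusion of the prefix-closed language of $M_i$ into that of $S^{-i}$ after erasing internal alternations. I would decide this by an on-the-fly determinisation: maintain, for each prefix consumed from $M_i$, the finite set of configurations of $S$ in $\RS_1(S)$ reachable by some matching execution realising that prefix, and report failure whenever $M_i$ reaches a state offering an action $\ell$ that cannot be answered, from any configuration in the current set, by an internal alternation $\varphi$ with $i\notin\act(\varphi)$ followed by a transition $t$ with $\ell=\Phi(\act(t))$. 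Both the states of $M_i$ and the subsets of $\RS_1(S)$ are finite, so this worklist terminates and decides the inner condition; the overall algorithm runs it for every stable $s\in\RS_1(S)$ and every $i$, answering ``multiparty compatible'' iff no run reports failure.

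The hard part is exactly the passage from the \emph{a priori} unbounded objects in the definition --- both the matched sequence $\ell_1\cdots\ell_k$ and the internal alternations $\varphi_j$ --- to a finite problem, so the crux of the argument is the boundedness claim of the second step: that a matching execution, together with the $M_i$-actions it answers, is a $1$-bounded execution of $S$ and hence confined to $\RS_1(S)$. Some care is also needed because of the non-unique-sender phenomenon of Remark~\ref{rem:non-unique-sender}: compatibility must be re-established from \emph{every} $1$-bounded reachable stable state and not merely from $s_0$, so the outer enumeration over $\RS_1(S)$ cannot be dropped, and the finite inner check must be rerun at each such state for the procedure to be sound.
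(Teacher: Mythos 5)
Your proposal is correct and follows essentially the same route as the paper: both reduce the check to a finite exploration of the computable set $\RS_1(S)$, using the fact that the witnessing executions (alternations plus matched send/receive pairs) are $1$-bounded, and both iterate the inner check over every machine and every stable state of $\RS_1(S)$. Your formulation of the inner check as a subset-construction/language-inclusion test is a slightly more explicit rendering of the paper's per-state matching procedure, but it is the same underlying algorithm.
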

\begin{proof}
The algorithm to check $M_\p$'s compatibility with $S^{-\p}$ is
defined using the set $\RS_1(S)$ of reachable states using 1-bounded
executions. 
Note that the set $\RS_1(S)$ 
is decidable \cite[Remark 19]{CeceF05}.  
We start from $q=q_0$ and the initial configuration $s=s_0$. 
Suppose that, from $q$, we have the transitions $t_i=(q,\q\p!a_i,q'_i) \in
\delta_\p$.
We then construct $\RS_1(S)$ (without executing $\p$) until it includes
$s'$ such that $\ASET{s'\TRANSS{t_i}\TRANSS{t'_j} s_j}_{j\in J}$ where
$\act(t'_i)=\q\p?a_i$ and $I\subseteq J$.  If there exists no such $s'$, it
returns false and terminates.  The case where, from $q$, we have receiving
transitions $t=(q,\q\p?a_i,q'_i)$ is dual.
If it does not fail, we continue to check from state $q'_i$ and
configuration $s_i$ for each $i\in I$. 
We repeat this procedure until we visit all $q\in Q_\p$. Then repeat for
the other machines $\p'$ such that $\p'\in \PSet\setminus \p$. Then we
repeat this procedure for all stable $s\in \RS_1(S)$. 
\QED 
\end{proof}
%
{\bf Synthesis \ }
Below we state the lemma which will be crucial for the proof of
the synthesis and completeness. 
The lemma comes from the intuition that the transitions of multiparty
compatible systems are always permutations of one-bounded executions 
as it is the case in multiparty session types. 
See Appendix \ref{app:one_to_k} for the proof.  


\begin{lemma}[1-buffer equivalence]\label{lem:one_to_k}
Suppose $S_1$ and $S_2$ are two basic and multiparty compatible
communicating systems such that $S_1\approx_1 S_2$, then $S_1\approx S_2$.
\end{lemma}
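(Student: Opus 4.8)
The plan is to prove the two trace inclusions separately; by symmetry it suffices to show $\TR(S_1)\subseteq\TR(S_2)$, the other direction being identical. The engine of the proof is a \emph{permutation (causal) equivalence} $\sim$ on traces. Given a word $\varphi$, call two adjacent actions $\ell_i,\ell_{i+1}$ \emph{independent} when $\sub(\ell_i)\neq\sub(\ell_{i+1})$ and they do \emph{not} form a matched send/receive pair, i.e.\ it is not the case that $\ell_i=\p\q!a$ and $\ell_{i+1}=\p\q?a$ with the FIFO discipline pairing this very send with this very receive (equivalently, the numbers of sends and of receives on channel $\p\q$ occurring strictly before position $i$ coincide). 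Let $\sim$ be the least congruence on words generated by transposing independent adjacent actions. Since the matching of the $k$-th receive on a channel with the $k$-th send on that channel is determined by the word alone, independence is \emph{system-independent}. With $\sim$ in hand the argument factors into two claims: (B) $\sim$ preserves membership in $\TR(S)$ for \emph{every} communicating system $S$; and (A) for a basic, multiparty compatible $S$, every $\varphi\in\TR(S)$ is $\sim$-equivalent to some word $\psi$ with $\MEM(\psi)\le1$. Granting these, the lemma follows: take $\varphi\in\TR(S_1)$, use (A) in $S_1$ to obtain $\psi\sim\varphi$ with $\psi\in\TR(S_1)$ and $\MEM(\psi)\le1$; then $S_1\approx_1 S_2$ yields $\psi\in\TR(S_2)$; and (B) applied in $S_2$ transports this back along $\varphi\sim\psi$ to give $\varphi\in\TR(S_2)$.

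Claim (B) is the routine part, a \emph{diamond lemma}. I would show that whenever $\ell,\ell'$ are independent, $s\TRANSS{\ell}\TRANSS{\ell'}s''$ implies $s\TRANSS{\ell'}\TRANSS{\ell}s''$ with the same $s''$ (and symmetrically). This is a direct case analysis on the FIFO semantics of Definition~\ref{def:rs}: actions with distinct subjects touch disjoint local states, and the only way two different-subject actions can interfere is through a common channel buffer; the excluded case (an immediate produce--consume pair) is exactly the interfering one, while a send to the tail and a receive from the head of the same buffer commute whenever the receive does not consume that send. Because independence is fixed by the word, the same transposition is legal in $S_2$: after the common prefix we reach a configuration from which $\ell_i\ell_{i+1}$ fires (as the current word lies in $\TR(S_2)$), so the diamond yields that $\ell_{i+1}\ell_i$ fires to the same configuration. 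Iterating over the chain of transpositions witnessing $\varphi\sim\psi$ gives membership transfer in $S_2$.

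Claim (A) is the crux and the step I expect to be the main obstacle. Fix an execution $s_0\TRANSS{t_1}\cdots\TRANSS{t_n}s_n$ witnessing $\varphi$, and define the \emph{happens-before} partial order $\preceq$ on $\{1,\dots,n\}$ as the transitive closure of: $i\to j$ (for $i<j$) whenever $\sub(\act(t_i))=\sub(\act(t_j))$, or $t_i$ is a send whose message is consumed by the receive $t_j$. Two observations reduce (A) to finding a good linearisation: first, every linear extension of $\preceq$ is again a valid execution reaching $s_n$, and any two linear extensions are connected by transpositions of $\preceq$-incomparable neighbours, which are exactly the independent transpositions of $\sim$; second, $\MEM$ of a word equals the maximal number of simultaneously in-flight messages along it. Hence it suffices to exhibit a linear extension of $\preceq$ in which each send is immediately followed by its matching receive, i.e.\ a $1$-bounded one. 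This is where basicness and multiparty compatibility are essential: I would argue that they forbid the ``crossing'' dependencies (two concurrent sends whose receives each causally depend on the other send) that are the sole obstruction to $1$-bounded serialisability, precisely because a basic multiparty compatible system reaches only configurations from which a stable, $1$-bounded completion exists --- the stability property established in the proof of Theorem~\ref{pro:core:stable}. Concretely, I plan to build $\psi$ greedily, always scheduling the matching receive of the message currently in flight before firing any further send, and to invoke multiparty compatibility at each encountered stable $1$-bounded state (Definition~\ref{def:multipartycompatible}) to guarantee that this receive is $\preceq$-enabled, so the greedy schedule neither blocks nor ever keeps two messages in flight.

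The difficulty is thus concentrated in Claim (A): proving that the happens-before order of a basic multiparty compatible system always admits a $1$-bounded linearisation. I expect to lean on Theorem~\ref{pro:core:stable} (stability and the absence of reception errors) to rule out the crossing dependencies, and on the determinism and directedness of basic CFSMs to keep the causal structure tree-like, so that the greedy serialisation is well defined. The remaining bookkeeping --- that linear extensions of $\preceq$ are valid executions and that $\MEM$ measures buffer occupancy --- is standard and handled by induction on trace length.
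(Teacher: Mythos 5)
Your overall strategy --- commute causally independent adjacent actions so as to serialise a trace into 1-bounded form, transfer it to $S_2$ via $\approx_1$, then commute back --- is the same one the paper's proof uses, and your Claim (B) (the diamond/commutation step, together with the observation that independence of adjacent actions is determined by the word alone and hence replayable in $S_2$) is sound. The genuine gap is Claim (A) as you state it: it is false that every $\varphi\in\TR(S)$ is $\sim$-equivalent to some $\psi$ with $\MEM(\psi)\le 1$. Your relation $\sim$ only reorders, so it preserves the multiset of actions and in particular the value of $\#\text{sends}-\#\text{receives}$ on the whole word; any trace that ends with two or more messages still in flight therefore has $\MEM\ge 2$ under \emph{every} reordering. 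Such traces do occur in basic multiparty compatible systems --- in the commit example of Figure~\ref{fig:commit}, $\mathit{AB}!\text{act}\cdot \mathit{AC}!\text{commit}$ is already a counterexample. Your greedy scheduler breaks on exactly these traces as well: ``the matching receive of the message currently in flight'' need not occur anywhere in $\varphi$, since $\TR(S)$ is prefix-closed and contains traces with unmatched sends.

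The missing idea, which the paper supplies, is that one must \emph{extend} the trace before permuting: the input-availability lemmas (Lemmas~\ref{lem:cmsa:inputavailable} and~\ref{lem:cmsa:generalinputavailable}, themselves consequences of multiparty compatibility) guarantee that every outstanding send can be closed by appending further actions, and only then does commutation (the paper in fact descends one buffer level at a time, proving $\approx_n\Rightarrow\approx_{n+1}$ by locating the last unmatched send $\ell_0$, appending and commuting its receive $\overline{\ell_0}$ next to it, and invoking the induction hypothesis on the resulting lower-$\MEM$ word) yield something transferable to $S_2$; the original $\varphi$ is then recovered in $S_2$ by undoing the commutations and using prefix-closure of $\TR(S_2)$. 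So multiparty compatibility is not only needed, as you suggest, to rule out ``crossing'' dependencies inside a fixed execution; it is needed first to justify that the execution can be completed at all. Your (A)/(B) decomposition is repairable along these lines, but as written step (A) fails and the overall argument does not go through.
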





\begin{theorem}[synthesis] \label{thm:classical:syn}
Suppose $S$ is a basic system and multiparty compatible. 
Then there is an algorithm which successfully builds well-formed $G$ such that 
$S\approx G$ if such $G$ exists, and otherwise terminates. 
\end{theorem}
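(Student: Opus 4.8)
The plan is to synthesise $G$ by exploring the finite set $\RS_1(S)$ of $1$-bounded reachable stable configurations of $S$, reading off a global branching at each configuration, and then to establish $S\approx G$ through Lemma~\ref{lem:one_to_k} together with Theorem~\ref{thm:lts}. First I would define a recursive procedure $\mathsf{build}(s,V)$ taking a stable configuration $s\in\RS_1(S)$ and a set $V$ of already-visited stable configurations, each tagged with a fresh recursion variable $\typevar_s$. If all local states in $s$ are final, it returns $\End$; if $s\in V$, it returns $\typevar_s$; otherwise it adds $s$ to $V$ and continues. Since $s$ is stable and non-final, no machine can receive (all buffers are empty), so by deadlock-freedom (Theorem~\ref{pro:cmsa:safety}) some $M_\p$ must be in a sending state, and I pick one. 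As $M_\p$ is directed (Proposition~\ref{pro:basic}), all of its transitions at $s$ send to a single $\q$ with some label set $\{a_j\}_{j\in J}$; for each $j$, multiparty compatibility and reception-error-freedom guarantee that the alternation $\p\q!a_j\cdot\p\q?a_j$ fires and reaches a stable $s_j\in\RS_1(S)$, and I set $G_j=\mathsf{build}(s_j,V)$. The procedure returns $\mu\typevar_s.\GBRA{p}{q}{a_j}{U_j}{G_j}{j\in J}$, dropping the binder when $\typevar_s$ does not occur, and the algorithm itself calls $\mathsf{build}(s_0,\emptyset)$.

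Termination would follow because $\RS_1(S)$ is finite and computable \cite{CeceF05} and $V$ grows strictly between successive calls until a visited configuration is reached; the subsequent well-formedness test (computing each projection $G\proj\p$ and checking that the merges $\sqcup$ are defined) is likewise effective. For the soundness of a successful run, suppose the construction returns a well-formed $G$ with projections $\vec T=\{G\proj\p\}_{\p\in\PSet}$. These are basic (Proposition~\ref{pro:basic}) and, being projections of a well-formed global type, multiparty compatible. By construction every $1$-bounded trace of $S$ is matched by the corresponding alternation in $(\vec T;\vec\NUL)$ and conversely, so $S\approx_1(\vec T;\vec\NUL)$. As both systems are basic and multiparty compatible, Lemma~\ref{lem:one_to_k} upgrades this to $S\approx(\vec T;\vec\NUL)$, and Theorem~\ref{thm:lts} gives $(\vec T;\vec\NUL)\approx G$; hence $S\approx G$.

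For completeness I would argue the contrapositive of failure: assume some well-formed $G'$ with $S\approx G'$ exists. Then $S\approx_1(\vec{T'};\vec\NUL)$ for its projections, so at each stable configuration visited by the exploration the enabled communications are exactly those prescribed by $G'$, up to the asynchronous reorderings permitted by \rulename{GR4} and \rulename{GR5}. Consequently the sender/receiver pair and label set selected by $\mathsf{build}$ coincide, up to such a permutation, with a branching of $G'$, and the merge needed to project each reconstructed branching is defined because the corresponding branching of $G'$ projects. The construction therefore succeeds, and by the soundness argument its output is trace-equivalent to $G'$.

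The hard part will be the completeness step, and specifically the independence of the reconstruction from the choice of active sender when several senders are concurrently enabled at a stable configuration --- the phenomenon isolated in Remark~\ref{rem:non-unique-sender}. I expect to handle it by exploiting that multiparty compatibility is required from \emph{every} $1$-bounded reachable stable state, not merely from $s_0$: this is what forces the competing senders to expose mergeable continuations and guarantees that the merge conditions the algorithm must discharge are exactly those already met by the projections of $G'$. A secondary technical point is the correct placement of the recursion binders $\mu\typevar_s$, which I would justify by keying $V$ on stable configurations (finitely many, since $\RS_1(S)$ is finite), mirroring the recursion handling of the translation $\LT$ in Proposition~\ref{pro:translation}.
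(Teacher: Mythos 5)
Your proposal follows essentially the same route as the paper: a 1-bounded exploration of stable configurations that reads off one $\TO{p}{q}$-branching per step, ties recursion to revisited control-state tuples, checks projectability at the end, and concludes $S\approx G$ via $\approx_1$ together with Lemma~\ref{lem:one_to_k}, Theorem~\ref{thm:lts} and Proposition~\ref{pro:translation}. The only substantive divergence is that the paper closes a recursion at a revisited tuple only when every non-final participant has been involved since the previous visit (otherwise it keeps scheduling sender/receiver pairs among the uninvolved ones), a guard your configuration-keyed $V$ omits and which is needed so that a participant still waiting inside the loop is not silently projected to $\End$.
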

\vspace*{-5mm}
\begin{proof}
We assume $S=(M_\p)_{\p\in \PSet}$.
The algorithm starts from the initial states of all machines
  $(q^{\p_1}{}_{0},...,q^{\p_n}{}_{0})$.
We take a pair of the initial states which 
is a sending state $q^\p_{0}$ and a receiving state $q^\q_{0}$
from $\p$ to $\q$. 
We note that by directness, if there are more than two pairs, 
the participants in two pairs are disjoint, and by [G4] in Definition 
\ref{def:glts}, the order does not matter. 
We apply the algorithm with the 
invariant that all buffers are empty and that we repeatedly pick up one
pair such that $q_{\p}$ (sending state) and $q_{\q}$ (receiving state). 
We define $G(q_{1},...,q_{n})$ where ($q_{\p},q_{\q}\in
\{q_{1},...,q_{n}\}$) 
as follows:
  \begin{itemize}
    \item if $(q_{1},...,q_{n})$ has already been examined and if all
      participants have been involved since then (or the ones that have not
      are in their final state), we set $G(q_{1},...,q_{n})$ to be
      $\typevar_{q_{1},...,q_{n}}$.
      Otherwise, we select a pair sender/receiver from two participants
      that have not been involved (and are not final) and go to the next step;
    \item otherwise, in $q_{\p}$, from machine $\p$, we know that all the
      transitions are sending actions towards $\p'$ (by directedness),
      i.e. of the form $(q_{\p}, \p\q!a_i, q_i)\in \delta_\p$ for
      $i\in I$. 
      \begin{itemize}
      \item we check that machine $\q$ is in a receiving state $q_\q$ such
        that $(q_\q,\p\q?a_j,q'_j)\in \delta_{\p'}$ with $j\in J$ and
        $I\subseteq J$.  
        
      \item we set $\mu \typevar_{q_{1},...,q_{n}}.\TO{p}{q}\colon
        \ASET{a_i.G(q_{1},...,q_\p\isnow q_i,...,q_\q\isnow q_i',...,
            q_{n})}_{i\in I}$ (we replace $q_\p$ and $q_\q$ by $q_i$ and
          $q_i'$, respectively) and continue by recursive calls. 

      \item if all sending states in $q_{1},...,q_{n}$ become final, then we set 
        $G(q_{1},...,q_{n})=\End$.
      \end{itemize}
    \item we erase unnecessary $\mu \typevar$ if $\typevar\not\in G$ 
and check $G$ satisfies Definition \ref{def:projection}.
    \end{itemize}
    Since the algorithm only explores 1-bounded executions, 
    the reconstructed $G$ satisfies $G\approx_1 S$. By
    Theorem~\ref{thm:lts}, we know
    that $G\approx (\ASET{G\proj\p}_{\p\in\PSet};\vec{\NUL})$. Hence, by
    Proposition~\ref{pro:translation}, we have 
    $G\approx S'$ where $S'$ is the communicating system translated
    from the projected local types $\ASET{G\proj\p}_{\p\in\PSet}$ of $G$. By
    Lemma~\ref{lem:one_to_k}, $S\approx S'$ and therefore $S\WB
    G$. \QED
\end{proof}

\noindent The algorithm can generate the global type in Example~\ref{ex:commit} 
from CFSMs in Figure~\ref{fig:commit} and the global type 
$\Bob\to\Alice\{a:\Carol\to\Alice:\{c:\End,d:\End\}, b:\Carol\to\Alice:\{c:\End,d:\End\}\}$ 
from $\Alice'$, $\Bob$ and $\Carol$ in Remark \ref{rem:non-unique-sender}. 
Note that 
$\Bob\to\Alice\{a:\Carol\to\Alice:\{c:\End \}, b:\Carol\to\Alice:\{d:\End\}\}$ 
generated by 
$\Alice$, $\Bob$ and $\Carol$ in Remark \ref{rem:non-unique-sender}
is not projectable by Definition~\ref{def:projection}, hence it is not well-formed. 
%

By Theorems \ref{thm:lts},  
\ref{pro:cmsa:safety} and  
\ref{thm:classical:syn},  
and Proposition \ref{pro:translation}, 
we can now conclude:

\begin{theorem}[soundness and completeness in \MSACore]
\label{thm:soundcompletemsacore}
Suppose $S$ is 
basic and multiparty compatible. 
Then there exists $G$ such that 
$S \WB G$.
Conversely, if $G$ is well-formed,  
then there exists $S$ which 
satisfies the three safety properties in Definition
\ref{def:safetyliveness}
and $S\WB G$.  
\end{theorem}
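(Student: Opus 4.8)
The plan is to read this statement as a direct corollary of the machinery built up in Sections~\ref{sec:globallocal} and~\ref{sec:cmsa}, proving the two implications separately and in each case exhibiting the witnessing object explicitly before invoking the relevant theorem.

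For the forward direction, assume $S=(M_\p)_{\p\in\PSet}$ is basic and multiparty compatible. I would simply appeal to Theorem~\ref{thm:classical:syn}: its synthesis algorithm, run on a basic and multiparty compatible $S$, performs a $1$-bounded exploration and returns a $G$ with $S\approx_1 G$, which the chain of Theorem~\ref{thm:lts}, Proposition~\ref{pro:translation} and Lemma~\ref{lem:one_to_k} promotes to full trace equivalence $S\WB G$. The only point that goes slightly beyond citing the theorem is that the candidate $G$ must pass the projectability test of Definition~\ref{def:projection}; I would argue that multiparty compatibility is exactly what guarantees this, contrasting the two systems of Remark~\ref{rem:non-unique-sender}, where precisely the non-compatible one yields a non-well-formed global type.

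For the converse, assume $G$ is well-formed, so that every projection $G\proj\p$ is defined, and take as witness the system $S=(\A(G\proj\p))_{\p\in\PSet}$ obtained by translating each projected local type to a CFSM. Proposition~\ref{pro:basic} makes each $\A(G\proj\p)$ deterministic, directed and mixed-state-free, so $S$ is basic. For $S\WB G$ I would combine two facts: Theorem~\ref{thm:lts} gives $G\WB(\vec{T};\vec{\NUL})$ for $\vec{T}=\ASET{G\proj\p}_{\p\in\PSet}$, and Proposition~\ref{pro:translation} gives $T_\p\WB\A(T_\p)$ role by role. Since $\A$ merely renames the syntactic subterms of $T_\p$ as states while keeping the transition relation, it is in fact an isomorphism of the per-role LTSs; as the system semantics of Definition~\ref{def:cllts} and of Section~\ref{sec:cfsm} are the same pointwise product over shared FIFO buffers, this lifts to $S\WB(\vec{T};\vec{\NUL})$, whence $S\WB G$.

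It remains to secure the three safety properties of Definition~\ref{def:safetyliveness} for this $S$, and here I would route through Theorem~\ref{pro:cmsa:safety} by showing that $S$ is itself multiparty compatible. This is the step I expect to be the real work: one has to verify Definition~\ref{def:multipartycompatible} at every $1$-bounded reachable stable state, reading the required dual trace and its interleaved alternations off the structure of $G$. I would obtain it from the configuration-level correspondence underlying Theorem~\ref{thm:lts}---each such stable state of $S$ matches a state of the global LTS---together with the built-in duality of the global semantics, where every emission by one role has a matching reception enabled in the rest up to the internal communications captured by \rulename{GR4,5}; these internal communications are exactly the alternations of Definition~\ref{def:multipartycompatible}. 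With $S$ basic and multiparty compatible, Theorem~\ref{pro:cmsa:safety} then delivers deadlock-, orphan-message- and reception-error-freedom, completing the converse. (Alternatively, since $\A$ is an isomorphism and the global LTS is manifestly safe, one could transfer the three properties directly across the correspondence of Theorem~\ref{thm:lts}, bypassing multiparty compatibility altogether.)
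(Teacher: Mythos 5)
Your proof is correct and follows essentially the same route as the paper, which derives this theorem directly from Theorems~\ref{thm:lts}, \ref{pro:cmsa:safety} and \ref{thm:classical:syn} together with Proposition~\ref{pro:translation}. You are in fact more explicit than the paper about the two obligations it leaves implicit --- that the synthesised $G$ passes the projectability test, and that the system obtained by projecting and translating a well-formed $G$ is itself basic and multiparty compatible (so that Theorem~\ref{pro:cmsa:safety} applies) --- and your sketches for discharging them are consistent with the paper's machinery.
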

\section{Conclusion and related work} 
\label{sec:related}
\label{sec:conclusion}
This paper investigated the sound and complete characterisation of 
multiparty session types
into
CFSMs and developed a decidable synthesis algorithm from basic CFSMs. 
The main tool we used is a new extension to multiparty interactions of the
duality condition for binary session types, called {\em multiparty
  compatibility}. The basic condition (coming from the 
binary session types) and the multiparty compatibility property are a {\em
  necessary and sufficient condition} to obtain safe global types.
Our aim is to offer a duality notion 
which would be applicable to extend 
other theoretical foundations 
such as the Curry-Howard correspondence with 
linear logics \cite{CF10,Wadler2012} to multiparty
communications. Basic multiparty compatible CFSMs also define
one of the few non-trivial decidable subclass of CFSMs which satisfy deadlock-freedom.
The methods proposed here 
are palatable to a wide range of applications 
based on choreography protocol models 
and more widely, finite state machines. 
We are currently working on two applications based on the theory
developed in this paper: the Testable Architecture 
\cite{savara} which enables the communication structure of the
implementation to be inferred and to be tested against the
choreography; and dynamic monitoring for a large scale
cyberinfrastructure in \cite{ooi} where a central controller can check
that distributed update paths for monitor specifications (which form 
FSMs projected from a global specification) are safe by synthesis.

Our previous work~\cite{DY12} 
presented the first translation from global and local types into CFSMs.
It only analysed the 
properties of the automata resulting from such a translation.
The complete characterisation of global types independently from the
projected local types was left open, as was synthesis. 
This present paper closes this open
problem. 
There are a large number of paper that can be found in the literature about
the synthesis of CFSMs. See~\cite{AM10} for a summary of recent results. 
The main distinction with CFSM synthesis is, apart from 
the formal setting (i.e.~types), 
about the kind of the target specifications to be generated 
(global types in our case). Not only our synthesis
is concerned about trace properties (languages) like the standard synthesis
of CFSMs (the problem of the closed synthesis of CFSMs is usually defined as the
construction from a regular language $L$ of a machine satisfying certain
conditions related to buffer boundedness, deadlock-freedom and words swapping),  but we also generate concrete syntax or choreography descriptions 
as {\em types} of programs or software. 
Hence they are directly applicable to programming languages 
and can be straightforwardly 
integrated into the existing frameworks that are based on session types. 

Within the context of multiparty session types, \cite{esop09} first
studied the reconstruction of a global type from its projected local
types up to asynchronous subtyping and \cite{LT12} recently offers a
typing system to synthesise global types from local types.
Our synthesis based on CFSMs is more general 
since CFSMs do not depend on the syntax. For example, 
\cite{esop09,LT12} cannot treat the synthesis for 
$\Alice'$, $\Bob$ and $\Carol$ in Remark \ref{rem:non-unique-sender}. 
These works also do not study the completeness (i.e.~they build a
global type from a set of projected local types (up to subtyping),
and do not investigate necessary and sufficient conditions to 
build a well-formed global type). A difficulty of
the completeness result is that it is generally unknown if the global
type constructed by the synthesis can simulate executions with arbitrary
buffer bounds since the synthesis only directly looks at 1-bounded executions. 
In this paper, we proved Lemma~\ref{lem:one_to_k} and bridged this gap
towards the complete characterisation. 
Recent work by \cite{CastagnaDP11,Tevfik12} focus on 
proving the semantic correspondence between global and local
descriptions (see \cite{DY12} for more detailed comparison), but  
no synthesis algorithm 
is studied.

\vspace*{-2mm}

{\footnotesize
\bibliographystyle{abbrv}
\bibliography{session}
}

\appendix 

\ifpopllong
\section{Appendix for Section \ref{sec:globallocal}}

\subsection{Proof of Theorem \ref{thm:lts}}
\label{app:lts}

\myparagraph{Local Types Subtyping}

In order to relate global and local types, we define in
Figure~\ref{fig:subtyping} a subtyping relation $\prec$ on local types. 
Local type $T'$ is a super type of local type $T$, written $T\prec
T'$, if it offers more receive transitions. We note that $T_i \prec \sqcup_{i\in I}
T_i$.

\begin{figure} \large
\[
\begin{array}{l@{\quad}l@{\quad}l@{\quad}l}
\frac{\forall i\in I, T_i \prec T_i'}{\LSENDK{\pp}{a_i}{U_i}{T_i}{i\in I} \prec
  \LSENDK{\pp}{a_i}{U_i}{T_i'}{i\in I}} &
\frac{I\subseteq J \quad \forall i\in I, T_i \prec
  T_i'}{\LRECVK{\pp}{a_i}{U_i}{T_i}{i\in I} \prec
  \LRECVK{\pp}{a_j}{U_j}{T_j'}{j\in J}} &
\frac{}{\typevar\prec\typevar} &
\frac{T\prec T'}{\mu\typevar.T\prec\mu\typevar.T'}
\end{array}
\]
\caption{Subtyping between local types}\label{fig:subtyping}
\end{figure}

This subtyping relation can be extended to configurations in the following way:
$(\vec{T};\vec{w})\prec(\vec{T'};\vec{w'})$ if $\vec{w}=\vec{w'}$ and $\forall
\p\in\PSet, T_\p \prec T_\p'$.

The main properties of subtyping is that it preserves traces, i.e. if $s\prec
s'$, then $s \approx s'$.

\myparagraph{Extension of projection}

In order to prove Theorem~\ref{thm:lts}, we extend the definition of
projection to global intermediate states.

We represent the projected configuration $\projrel{G}$ of a global type $G$ as a
configuration
$\ASET{G\proj\p}_{\p\in\PSet},\projrel{G}_{\ASET{\NUL}_{\q\q'\in\PSet}}$
where 
the content of the buffers $\projrel{G}_{\ASET{\NUL}_{\q\q'\in\PSet}}$ is
given by:
\[
\begin{array}{rcl}
  \projrel{\TOS{\p}{\p'}\colon a_j.G_j }_{\ASET{w_{\q\q'}}_{\q\q'\in\PSet}} & = &
  \projrel{G_j}_{\ASET{w_{\q\q'}}_{\q\q'\in\PSet}[w_{\p\p'}=w_{\p\p'}\cdot
    a_j]}\\
  \projrel{\TO{\p}{\p'}\colon a_j.G_j}_{\ASET{w_{\q\q'}}_{\q\q'\in\PSet}} & = &
  \projrel{G_j}_{\ASET{w_{\q\q'}}_{\q\q'\in\PSet}}\\
  \projrel{\GBRA{p}{p'}{a_j}{U_j}{G_j}{j\in J}}_{\ASET{w_{\q\q'}}_{\q\q'\in\PSet}} & = &
  \projrel{G_1}_{\ASET{w_{\q\q'}}_{\q\q'\in\PSet}}
  \\
  \projrel{\mu\typevar.G}_{\ASET{w_{\q\q'}}_{\q\q'\in\PSet}} & = & \ASET{w_{\q\q'}}_{\q\q'\in\PSet}
  \\
  \projrel{\End}_{\ASET{w_{\q\q'}}_{\q\q'\in\PSet}} & = & \ASET{w_{\q\q'}}_{\q\q'\in\PSet}
\end{array}
\]
and where the projection algorithm $\proj \q$ is extended by:
\[
\begin{array}{c}
\GBRAS{p}{p'}{j}{a_i}{U_i}{G_i}{i\in I}
\proj{\qq}  = 
\begin{cases}
\LRECVK{\pp}{a_i}{U_i}{\G_i\proj{\qq}}{i\in I} & \qq=\pp' \\
\G_j\proj{\qq} & \text{otherwise}\\
\end{cases}
\end{array}
\]

This extended projection allows us to match global type and projected
local type transitions step by step.

\myparagraph{Theorem~\ref{thm:lts}}

We prove Theorem~\ref{thm:lts} by combining the local type subtyping and
extended projection into a step equivalence lemma.
Theorem~\ref{thm:lts} is a simple consequence of Lemma~\ref{lem:stepequivalence}.

\begin{lemma}[Step equivalence]\label{lem:stepequivalence}
  For all global type $G$ and local configuration $s$, if $\projrel{G}\prec s$,
  then we have $G\TRANSS{\ell}G' \Leftrightarrow s\TRANSS{\ell}s'$ and $\projrel{G'}\prec s'$.
\end{lemma}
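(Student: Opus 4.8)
\emph{Approach.} The statement is a strong bisimulation (up to the subtyping $\prec$ of Figure~\ref{fig:subtyping}) between the extended projected configuration $\projrel{G}$ and the local configuration $s$. The plan is to prove the two implications separately, each by induction on the derivation of the transition: the rules \rulename{GR1}--\rulename{GR5} of Definition~\ref{def:glts} for the ($\Rightarrow$) direction, and the configuration clauses of Definition~\ref{def:cllts} together with \rulename{LR1}--\rulename{LR3} for the ($\Leftarrow$) direction. Throughout I will use that $\prec$ is reflexive and transitive, that it preserves traces ($s\prec s'$ implies $s\approx s'$), and the facts recorded after Figure~\ref{fig:subtyping}, chiefly $T_i\prec\sqcup_{i\in I}T_i$. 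Since at top level $\projrel{G}=(\ASET{G\proj\p}_\p;\vec\NUL)=s$, Theorem~\ref{thm:lts} then follows by iterating this lemma along any trace.

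\emph{Forward direction, base cases.} Write $s=(\vec T;\vec w)$, so $\projrel{G}\prec s$ unfolds to equal buffers and $G\proj\p\prec T_\p$ for every $\p$. For \rulename{GR1}, $G=\GBRA{p}{p'}{a_i}{U_i}{G_i}{i\in I}$ and $\ell=\p\p'!a_j$; then $G\proj\p$ is a selection, and $G\proj\p\prec T_\p$ forces $T_\p$ to be a selection on the same labels with $G_i\proj\p\prec T_{\p,i}$, so $s$ can fire the matching send (clause~(1) of Definition~\ref{def:cllts} using \rulename{LR1}), appending $a_j$ to $w_{\p\p'}$. To re-establish the invariant I check $G'\proj\q\prec T_\q'$ for each $\q$ via the extended projection: for $\q=\p$ the new component is $G_j\proj\p\prec T_{\p,j}$; for $\q=\p'$ the branching is unchanged; and for a bystander $\q$ the component becomes $G_j\proj\q$, which satisfies $G_j\proj\q\prec\sqcup_i G_i\proj\q=G\proj\q\prec T_\q$ by transitivity. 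The buffer update matches the buffer projection of $\GBRAS{p}{p'}{j}{a_i}{U_i}{G_i}{i\in I}$. Case \rulename{GR2} is dual: $G\proj{\p'}\prec T_{\p'}$ exposes a branching on a superset of labels ($I\subseteq J$), and the in-transit $a_j$ --- which the buffer projection places at the \emph{head} of $w_{\p\p'}$ --- is consumed by the configuration receive (clause~(2)), leaving $G_j$ whose projections are again dominated by the residual types.

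\emph{Recursion and congruence.} The cases \rulename{GR3}/\rulename{LR3} are handled by a separate unfolding lemma stating that projection commutes with unfolding, i.e.\ $\bigl(G_0\subs{\mu\typevar.G_0}{\typevar}\bigr)\proj\p$ and the unfolding of $(\mu\typevar.G_0)\proj\p$ coincide up to the $\End$-versus-variable clause of Definition~\ref{def:projection}; applying the induction hypothesis to the unfolded global type then closes the case. The asynchronous congruence rules \rulename{GR4},\rulename{GR5} are where the hypothesis is used substantively: the fired label $\ell$ has a subject disjoint from the top-level communicating pair, so it acts only on a bystander component $T_\rr$ with $\rr=\sub(\ell)$. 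Here I apply the hypothesis inside the branch(es) and reassemble, the delicate point being that for \rulename{GR4} the bystander projection is the \emph{merge} $\sqcup_{i\in I}G_i\proj\rr$: I must show that a step available in every merged branch is available in the merge and that the merged residuals still lie below $T_\rr'$, which follows from the definition of $\mergecup$ and determinism of the local LTS on a fixed label.

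\emph{Main obstacle and backward direction.} The crux is the buffer/FIFO bookkeeping: I must maintain as an invariant that, for every channel $\p\p'$, the in-transit messages recorded by the intermediate constructors $\TOS{p}{p'}$ appear in $w_{\p\p'}$ in exactly the FIFO order in which \rulename{GR2} will consume them, so that no later receive overtakes an earlier one. This is precisely what the side conditions $\q\notin\sub(\ell)$ in \rulename{GR5} (and $\p,\q\notin\sub(\ell)$ in \rulename{GR4}) guarantee, since they forbid a receive on $\p\p'$ from being pushed past an outstanding message on the same channel; checking that these conditions line up with the head-of-queue discipline of Definition~\ref{def:cllts} is the most technical part of the argument. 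The backward direction ($\Leftarrow$) is symmetric but additionally requires an inversion step: given a configuration move $s\TRANSS{\ell}s'$, I reconstruct the global rule that produces it --- for a receive, locating the outermost $\TOS{p}{p'}$ on the relevant channel whose message sits at the head of the buffer --- and then appeal to the same invariant to conclude $\projrel{G'}\prec s'$.
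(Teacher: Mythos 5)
Your proposal is correct and follows essentially the same route as the paper's own proof: induction on the derivation of the transition, with a case analysis over \rulename{GR1}--\rulename{GR5} for the forward direction and an inversion-plus-\rulename{GR4}/\rulename{GR5}-permutation argument for the backward direction, re-establishing the invariant via $G_j\proj\q\prec\sqcup_{i}G_i\proj\q$ and the extended buffer projection. The only differences are presentational (you make the FIFO bookkeeping invariant explicit, which the paper leaves implicit in the side conditions of \rulename{GR4,5}).
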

\begin{proof}
The proof is by induction on the possible global and local transitions.

\paragraph{\bf Correctness} By induction on the structure of each reduction
  $G\TRANS{\ell}G'$, we prove that $\projrel{G}\TRANS{\ell}s$ with
  $\projrel{G'}\prec s$. We use the fact that if $s\prec s'$, then $s \approx
  s'$, to consider only matching transition for $\projrel{G}$.
  \begin{itemize}
  \item[\rulename{GR1}] where $G=\GBRA{p}{p'}{a_i}{U_i}{G_i}{i\in I}\
    \TRANSS{\p\p'!a_j}\ G'=\GBRAS{p}{p'}{j}{a_i}{U_i}{G_i}{i\in I}$.
    The projection of $G$ is $\projrel{G}=s_T=\ASET{T_\q}_{\q\in
      \PSet},\ASET{w_{\q\q'}}_{\q\q'\in\PSet}$.  The local types are: $T_\p=
    G\proj\p= \LSENDK{\p'}{a_i}{U_i}{\G_i\proj{\p}}{i\in I}$ and $T_{\p'}=
    G\proj\p'=\LRECVK{\p}{a_i}{U_i}{\G_i\proj{\p'}}{i\in I}$ and (for
    $\q\notin\{\p,\p'\}$) $T_\q=\sqcup_{i\in I} \G_i\proj{\qq}$.  Rule
    \rulename{LR1} allows $\LSENDK{\p'}{a_i}{}{\G_i\proj{\p}}{i\in
      I}\TRANS{\p\p'!a_j}\G_j\proj{\p}$.  We therefore have
    $s_T\TRANS{\p\p'!a_j}\ASET{T'_\q}_{\q\in
      \PSet},\ASET{w'_{\q\q'}}_{\q\q'\in\PSet}$, with $T'_\q=T_\q$ if
    $\q\neq\p$, and $T'_\p=\G_j\proj{\p}$, and with $w'_{\q\q'}=w_{\q\q'}$ if
    $\q\q'\neq\p\p'$, and $w'_{\p\p'}=w_{\p\p'}\cdot a_j$. 

    Since $\G_j\proj{\qq} \prec \sqcup_{i\in I} \G_i\proj{\qq}$, we have
    $\ASET{T'_\q}_{\q\in \PSet},\ASET{w'_{\q\q'}}_{\q\q'\in\PSet} \prec
    \projrel{G}$. 

    This corresponds exactly to the projection $\projrel{G'}$ of $G'$.
  \item[\rulename{GR2}] where $G=\GBRAS{p}{p'}{j}{a_i}{U_i}{G_i}{i\in I}$
    $\TRANSS{\p\p'?a_j}\ G'=G_j$. The projection of $G$ is
    $\projrel{G}=s_T=\ASET{T_\q}_{\q\in
      \PSet},\ASET{w_{\q\q'}}_{\q\q'\in\PSet}$.  The local types are:
    $T_\p= G\proj\p= \G_j\proj{\p}$ and $T_{\p'}=
    G\proj\p'=\LRECVK{\p}{a_j}{U_j}{\G_j\proj{\p'}}{}$ and (for
    $\q\notin\{\p,\p'\}$) $T_\q=\G_j\proj{\qq}$. We also know that
    $w_{\p\p'}$ is of the form $w'_{\p\p'}\cdot a_j$.

    Using \rulename{LR2}, $\ASET{T_\q}_{\q\in
      \PSet},\ASET{w_{\q\q'}}_{\q\q'\in\PSet}\TRANSS{\p\p'?a_j}\
    \ASET{G_j\proj\q}_{\q\in \PSet},\ASET{w'_{\q\q'}}_{\q\q'\in\PSet}$ with
    $w'_{\q\q'}=w_{\q\q'}$ if $\q\q'\neq\p\p'$.
    The result of the transition is the same as the projection
    $\projrel{G'}$ of $G'$.
  \item[\rulename{GR3}] where $G=\mu\typevar.G' \TRANSS{\ell} G''$. 

    By hypothesis, we know that $G'\subs{\typevar}{\mu\typevar.G'}
    \TRANSS{\ell} G''$. By induction, we know that
    $\projrel{G'\subs{\typevar}{\mu\typevar.G'}}=s_T=\ASET{T_\q}_{\q\in
      \PSet},\ASET{w_{\q\q'}}_{\q\q'\in\PSet}$ can do a reduction
    $\TRANSS{\ell}$ to $\projrel{G''}=s_T=\ASET{T'_\q}_{\q\in
      \PSet},\ASET{w'_{\q\q'}}_{\q\q'\in\PSet}$. Projection is homomorphic
    for recursion, hence $G'\subs{\mu\typevar.G'}{\typevar}\proj\q =
    G'\proj\q\subs{\mu\typevar.G'\proj\q}{\typevar}$.  We use
    \rulename{LR4} to conclude.
  \item[\rulename{GR4}] where $\GBRA{p}{q}{a_i}{U_i}{G_i}{i\in I}
    \TRANS\ell \GBRA{p}{q}{a_i}{U_i}{G_i'}{i\in I}$ and
    $\pp,\qq\notin\subj(\ell)$. By induction, we know that, $\forall i\in
    I, \projrel{G_i} \TRANSS{\ell} \projrel{G_i'}$. We need to prove that
    $\projrel{\GBRA{p}{q}{a_i}{U_i}{G_i}{i\in
        I}}\TRANS\ell\projrel{\GBRA{p}{q}{a_i}{U_i}{G_i'}{i\in I}}$.  The
    projections for all participants are identical, except for
    $\qq'=\subj(\ell)$, whose projection is (computed by merging)
    $\sqcup_{i\in I} \G_i\proj{\qq'}$. Since $\forall i\in I, \projrel{G_i}
    \TRANSS{\ell} \projrel{G_i'}$, we know that all the $G_i\proj{\qq'}$
    have at least the prefix corresponding to $\ell$, and that, using
    either $\rulename{LR1}$ or $\rulename{LR2}$, the continuations are the
    $G_i'\proj{\qq'}$. We can then conclude that the $\sqcup_{i\in I}
    \G_i\proj{\qq'}\TRANSS{\ell} \sqcup_{i\in I} \G_i'\proj{\qq'}$.
  \item[\rulename{GR5}] where $\GBRAS{p}{q}{j}{a_i}{U_i}{G_i}{i\in I} \TRANS\ell
    \GBRAS{p}{q}{j}{a_i}{U_i}{G_i'}{i\in I}$ and $\qq\notin\subj(\ell)$ with
    $G'_i=G_i$ for $i\neq j$. By induction, we know that, $\projrel{G_j}
    \TRANSS{\ell} \projrel{G_j'}$. We need to prove that
    $\projrel{\GBRAS{p}{q}{j}{a_i}{U_i}{G_i}{i\in
        I}}\TRANS\ell\projrel{\GBRA{p}{q}{j}{a_i}{U_i}{G_i'}{i\in I}}$.  The
    projections for all participants are identical, except for
    $\qq'=\subj(\ell)$, whose projection is
    $\G_j\proj{\qq'}$. By induction,
    $\G_j\proj{\qq'}\TRANS\ell\G_j'\proj{\qq'}$, which allows us to conclude.
  \end{itemize}

\paragraph{\bf Completeness} \ We prove by induction on
  $\projrel{G}=$\\$\ASET{T_\p}_{\p\in
    \PSet},\ASET{w_{\q\q'}}_{\q\q'\in\PSet}\TRANS{\ell}\ASET{T'_\p}_{\p\in
    \PSet},\ASET{w'_{\q\q'}}_{\q\q'\in\PSet}$ that $G\TRANS{\ell}G'$ with
  $\projrel{G'}\prec\ASET{T'_\p}_{\p\in
    \PSet},\ASET{w'_{\q\q'}}_{\q\q'\in\PSet}$.

  \begin{itemize}
  \item[\rulename{LR1}] There is $T_\p= G\proj\p=
    \LSENDK{\p'}{a_i}{U_i}{\G_i\proj{\p}}{i\in I}$. By definition of
    projection, $G$ has $\GBRA{p}{q}{a_i}{U_i}{G_i}{i\in I}$ as subterm,
    possibly several times (by mergeability). By definition
    of projection, we note that no action in $G$ can involve $\p$ before any of the
    occurrences of $\GBRA{p}{q}{a_i}{U_i}{G_i}{i\in I}$. Therefore we can
    apply as many times as needed \rulename{GR4} and \rulename{GR5}, and use \rulename{GR1} to reduce to
    $\TOS{p}{q}\colon a_j.G_j$. The projection of the
    resulting global type corresponds to a subtype to the result of \rulename{LR1}.
  \item[\rulename{LR2}] There is $T_\p=
    G\proj\p=\LRECVK{\qq}{a_j}{U_j}{\G_j\proj{\p}}{j\in J}$.  To activate
    \rulename{LR2}, there should be a value $a_j$ in the buffer
    $w_{\p\q}$. By definition of projection, $G$ has therefore
    $\GBRAS{p}{q}{j}{a_i}{U_i}{G_i}{i\in I}$ as subterm, possibly several
    times (by mergeability). By definition of projection, no action in $G$
    can involve $\p$ before any of the occurrences of
    $\GBRAS{p}{q}{j}{a_i}{U_i}{G_i}{i\in I}$. We can apply as many times as
    needed \rulename{GR4} and \rulename{GR5} and use \rulename{GR2} to
    reduce to $G_j$. The projection of the resulting global type
    corresponds to the result of \rulename{LR2}.

  \item[\rulename{LR3}] where $T=\mu\typevar.T'$. Projection is homomorphic with
    respect to recursion. Therefore $G$ is of the same form. We can use
    \rulename{GR3} and induction to conclude.
  \end{itemize}
\end{proof}


\subsection{Local types and CFSMs}

\myparagraph{Proposition \ref{pro:basic}} 
For the determinism, we note that all $a_i$ in 
$\LRECVK{\pp}{a_i}{}{T_i}{i\in I}$ 
and $\LSENDK{\pp}{a_i}{U_i}{T_i}{i\in I}$ are distinct. 
Directdness is by the syntax of branching and selection types. 
Finally, for non-mixed states, we can check a state is either  
sending or receiving state as one state represents either 
branching and selection type. 

\myparagraph{Proposition \ref{pro:translation}} 
The first clause is by the induction of $M$ using the translation of $\LT$. The second clause is by the induction of $T$ using the translation of $\A$. Both are mechanical.

\section{Appendix for Section \ref{sec:cmsa}}
\label{app:cmsa}
We say that a configuration $s$ with $t_1$ and $t_2$ satisfies the {\em
  one-step diamond property} if, assuming $s \TRANSS{t_1} s_1$ and $s
\TRANSS{t_2} s_2$ with $t_1\not = t_2$, there exists $s'$ such that $s_1
\TRANSS{t_1'} s'$ and $s_2 \TRANSS{t_2'} s'$ where $\act(t_1)=\act(t_2')$
and $\act(t_2)=\act(t_1')$. We use the following lemma to permute the two 
actions. 

\begin{lemma}[diamond property in basic machines]
\label{lem:MSAdiamond}
Suppose $S=(M_\p)_{\p \in \PSet}$ and $S$ is basic. 
Assume $s\in \RS(S)$ and $s   \TRANSS{t_1} s_1$ and $s \TRANSS{t_2} s_2$. 
\begin{enumerate}
\item If $t_1$ and $t_2$ are both sending actions such that 
$\act(t_1)= \p_1\q_1! a_1$ and $\act(t_2)= \p_2\q_2! a_2$, we have
either:  
\begin{enumerate}
\item $\p_1=\p_2$ and $\q_1=\q_2$ and $a_1=a_2$ with $s_1=s_2$; 
\item $\p_1=\p_2$ and $\q_1=\q_2$ and $a_1\not =a_2$;
\item $\p_1\not=\p_2$ and $\q_1\not =\q_2$ with $a_1 \not=a_2$, and 
$s$ with $t_1$ and $t_2$ satisfies the diamond property. 
\end{enumerate}
\item  If $t_1$ and $t_2$ are both receiving actions such that 
$\act(t_1)= \p_1\q_1? a_1$ and $\act(t_2)= \p_2\q_2? a_2$, we have
either:  
\begin{enumerate}
\item $\p_1=\p_2$ and $\q_1=\q_2$ and $a_1=a_2$ with $s_1=s_2$; 
\item $\p_1\not=\p_2$ and $\q_1\not =\q_2$ with $s_1 \not=s_2$, and 
$s$ with $t_1$ and $t_2$ satisfies the diamond property. 
\end{enumerate}
\item  If $t_1$ is a receiving action and $t_2$ is a sending action such that 
$\act(t_1)= \p_1\q_1? a_1$ and $\act(t_2)= \p_2\q_2! a_2$, we have
either: 
\begin{enumerate}
\item $\q_1=\q_2$ and $\p_1 \not =\p_2$; or 
\item $\p_1=\p_2$ and $\q_1 \not =\q_2$; or 
\item $\p_1\not=\p_2$ and $\q_1\not =\q_2$ 
\end{enumerate}
with $s_1 \not=s_2$, and 
$s$ with $t_1$ and $t_2$ satisfies the diamond property. 
\end{enumerate}
\vspace{-2ex}
\end{lemma}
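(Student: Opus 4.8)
The plan is to prove the lemma by a case analysis that follows its three items (send/send, receive/receive, receive/send), and to split in each item according to whether the \emph{subjects} of $t_1$ and $t_2$ coincide. I work directly from the firing relation of Definition~\ref{def:rs} and from the three properties that make $S$ basic, namely determinism, directedness, and the absence of mixed states (Proposition~\ref{pro:basic}). The organising idea is that two transitions sharing a subject are in conflict at one machine and must fall into one of the algebraic subcases, whereas two transitions with distinct subjects touch disjoint local states and distinct FIFO channels, hence commute and yield the diamond. Recall that the subject of a send $\p_1\q_1!a_1$ is its sender $\p_1$ and the subject of a receive $\p_1\q_1?a_1$ is its receiver $\q_1$.

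For item~(1) both $t_1,t_2$ are sends with subjects $\p_1,\p_2$. If $\p_1=\p_2$ the two transitions leave the same local state of $M_{\p_1}$; directedness forces $\q_1=\q_2$, and determinism then gives either the identical transition, so $s_1=s_2$ (subcase a), or two distinct branches $a_1\neq a_2$ (subcase b). If $\p_1\neq\p_2$ the sends originate in different machines and append to the distinct channels $\p_1\q_1$ and $\p_2\q_2$; I exhibit $t_1',t_2'$ carrying the swapped labels and check from Definition~\ref{def:rs} that firing $t_1$ then $t_2'$ and firing $t_2$ then $t_1'$ both reach the configuration obtained by updating the two local states and appending to the two buffers, which is the diamond of subcase~(c). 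Item~(2) is dual: the subject of a receive is its receiver, so $\q_1=\q_2$ together with directedness of $M_{\q_1}$ forces $\p_1=\p_2$, and since both read the head of the single queue $w_{\p_1\q_1}$ we get $a_1=a_2$ and the identical transition (subcase a); when $\q_1\neq\q_2$ the two reads are on distinct queues and change distinct local states, so they commute (subcase b). In item~(3) the absence of mixed states does the essential work: were $\q_1=\p_2$, that machine's state would carry both the receive $t_1$ and the send $t_2$ and hence be mixed; so $\q_1\neq\p_2$, the transitions live in different machines, and in each enumerated subcase the read channel $\p_1\q_1$ and the write channel $\p_2\q_2$ are distinct, so the updates again commute.

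The routine but essential obstacle is the commutation bookkeeping in the distinct-subject subcases: from the clauses of Definition~\ref{def:rs} I must verify that firing one transition preserves enabledness of the other (the other machine's local state and the relevant buffer head or tail are untouched) and that both firing orders land in \emph{exactly} the same $(\vec{q}';\vec{w}')$, whence the swapped transitions $t_1',t_2'$ with $\act(t_1)=\act(t_2')$ and $\act(t_2)=\act(t_1')$ witness the diamond. The only genuinely delicate configuration would be a send and a receive contending for one channel (tail-append versus head-removal on a single FIFO queue); I note that each enumerated subcase of item~(3) already forces the two channels to be distinct, so this situation does not arise among the cases to be proved, and in any event it commutes, since a receive is enabled only when its queue is non-empty, so the append leaves the head intact and removal-then-append equals append-then-removal. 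Finally, the inequalities $s_1\neq s_2$ asserted in the diamond subcases are immediate, as $t_1$ and $t_2$ modify different components of the configuration.
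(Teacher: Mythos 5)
Your proof takes essentially the same route as the paper's: split on whether the two transitions are fired by the same machine, use directedness and determinism (item 1), the FIFO discipline plus directedness (item 2), and the absence of mixed states (item 3) to settle the same-machine cases, and commute the two firings in the distinct-machine cases. You are in fact more explicit than the paper on the substantive content, since the paper's proof consists only of ruling out certain combinations of $\p_1,\p_2,\q_1,\q_2$ and leaves the entire commutation bookkeeping (enabledness preservation, equality of the two resulting configurations) implicit; your spelling it out from Definition~\ref{def:rs} is exactly what is needed for the lemma's later uses.

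The one place you genuinely diverge is the combinatorial exclusion step, and it is the only candidate for a gap. The lemma does not just assert the diamond property: it asserts that the listed subcases are \emph{exhaustive}, e.g.\ in item~(1) that $\p_1\neq\p_2$ forces $\q_1\neq\q_2$ and $a_1\neq a_2$, and in item~(3) that $\p_1=\p_2\wedge\q_1=\q_2$ cannot occur. The paper derives these exclusions (from directedness for items (1)--(2) and from no-mixed-states for item (3)); you do not, and instead prove commutation uniformly for every distinct-subject pair, explicitly including the same-channel send/receive pair. So, read against the literal statement, your proof leaves the exhaustiveness of the enumerations unestablished. You should be aware, however, that the missing exclusions are themselves questionable: directedness constrains the outgoing transitions of \emph{one} state of \emph{one} machine, so it does not prevent two distinct senders from both targeting the same receiver in item~(1); and in item~(3) the sender and the receiver on a single channel are different machines, so no mixed state arises --- a recursive protocol in which $\Alice$ repeatedly sends $a$ to $\Bob$ reaches a configuration where $\Alice$ can send again while $\Bob$ can receive, i.e.\ $\p_1=\p_2$ and $\q_1=\q_2$. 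Your uniform commutation argument (including your observation that append-then-remove equals remove-then-append on a non-empty queue) is therefore the more robust way to obtain what the lemma is actually used for downstream; but to prove the statement \emph{as written} you would have to supply the exclusions the paper claims, and you should note where they do not in fact hold.
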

\begin{proof}
For (1), 
there is no 
case such that $\p_1 \neq \p_2$ and  $\q_1 = \q_2$ since $S$ is directed.  
Then if 
$\p_1=\p_2$ and $\q_1=\q_2$ and $a_1=a_2$, then $s_1=s_2$ by the determinism.  
For (2), 
there is no 
case such that $\p_1 \neq \p_2$ and  $\q_1 = \q_2$ since $S$ is directed. 
Also there is no case such that 
$\p_1=\p_2$ and $\q_1=\q_2$ and $a_1\not =a_2$ since 
the communication between the same peer is done via an FIFO queue. 
For (3), there is no case such that 
$\q_1=\q_2$ and $\p_1 =\p_2$ because of no-mixed state. \QED 
\end{proof}

The following definition aims to explicitly describe 
the causality relation between the actions. These are useful 
to identify the permutable actions. 

\begin{definition}[causality]\rm
\label{def:causality}
\begin{enumerate}
\item Suppose $s_0 \TRANSS{\varphi} s$ and $\varphi=\varphi_0 \cdot t_1 \cdot 
\varphi_1 \cdot t_2 \cdot \varphi_2$. 
We write $t_1 \triangleleft t_2$ ($t_2$ depends on $t_1$) if 
either (1) $t_1 = \p\q!a$ and $t_1 = \p\q?a$ for some $\p$ and $\q$ 
or (2) $\subj(t_1)=\subj(t_2)$. 

\item We say $\varphi=t_0\cdot t_1 \cdot t_2 \cdots t_n$ is {\em the causal 
chain} if $s_0 \TRANSS{\varphi'} s'$ and $\varphi\subseteq \varphi'$ 
with, for all $0 \leq k \leq n-1$, there exists $i$ such that $i> k$ 
and $t_k \triangleleft t_i$. We call $\varphi$ the maximum causal
chain if there is no causal chain $\varphi''$ such that $\varphi\subsetneq 
\varphi''\subseteq \varphi'$. 

\item Suppose $s_0 \TRANSS{\varphi} s$ and $\varphi=\varphi_0 \cdot t_1 \cdot 
\varphi_1 \cdot t_2 \cdot \varphi_2$. We write $t_i\sharp t_j$ if 
there is no causal chain from $t_i$ to $t_j$ with $i < j$. 
\end{enumerate}
\end{definition}

By Lemma \ref{lem:MSAdiamond}, we have: 

\begin{lemma}[maximum causality]
\label{lem:causality}
Suppose $S$ is basic and $s\in \RS(S)$. Then for all
$s\TRANSS{\varphi}s'$, we have $s\TRANSS{\varphi_m\cdot \varphi''}s'$
and $s\TRANSS{\varphi''\cdot \varphi_m'}s'$ where
$\varphi_m,\varphi_m'$ are the maximum causal chain. 
\end{lemma}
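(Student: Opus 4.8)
The plan is to obtain both decompositions by nothing more than repeatedly commuting \emph{adjacent $\triangleleft$-incomparable} transitions. The engine is the diamond property of Lemma~\ref{lem:MSAdiamond}: whenever two transitions are not related by the dependency relation $\triangleleft$ of Definition~\ref{def:causality} (i.e.\ they have distinct subjects and are not a matching send/receive pair), they fall into one of the diamond cases of that lemma, so if they occur consecutively in an execution the one on the right is already enabled before the one on the left and the two may be swapped while leaving the configuration they reach unchanged. Hence every elementary swap is endpoint-preserving, and it suffices to show that the transitions of $\varphi$ lying outside a suitably chosen maximum causal chain can be slid past that chain using only such swaps.

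First I would prove the decomposition $s\TRANSS{\varphi_m\cdot\varphi''}s'$. Fix a maximum causal chain $\varphi_m$ occurring in $\varphi$. The structural fact I would establish is that $\varphi_m$ is \emph{downward closed} for $\triangleleft$: if $t$ occurs in $\varphi$ and $t\triangleleft t_i$ for some $t_i$ in $\varphi_m$, then $t$ already lies in $\varphi_m$. Otherwise, inserting $t$ into $\varphi_m$ at its original position would still meet the chain condition of Definition~\ref{def:causality} — the inserted $t$ has the later dependent $t_i$, and every earlier element keeps the dependent it already had — producing a strictly larger causal chain and contradicting maximality. Given downward closure, no transition outside $\varphi_m$ is a $\triangleleft$-predecessor of any chain transition. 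I would then bubble the chain to the front from left to right: the leftmost chain transition has no $\triangleleft$-predecessor at all (any such predecessor would precede it and, by closure, already belong to $\varphi_m$, contradicting leftmostness), so by iterating Lemma~\ref{lem:MSAdiamond} it commutes past every transition before it; the next chain transition can depend only on chain transitions already placed at the front, so it commutes to position two; and so on. The result is $\varphi_m\cdot\varphi''$ reaching the same $s'$, with the internal order of both $\varphi_m$ and $\varphi''$ preserved.

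The decomposition $s\TRANSS{\varphi''\cdot\varphi_m'}s'$ is the exact dual, read from right to left. Here I would take $\varphi_m'$ to be a maximum causal chain satisfying the mirror condition (each non-initial element depends on an earlier one), which the symmetric extension-versus-maximality argument shows to be \emph{upward closed}: if $t_i$ is in $\varphi_m'$ and $t_i\triangleleft t$ then $t$ is in $\varphi_m'$. Consequently no transition outside $\varphi_m'$ depends on a chain transition, so each chain element, processed from right to left, can be slid rightwards past every non-chain transition following it — again only $\triangleleft$-incomparable swaps occur — yielding $\varphi''\cdot\varphi_m'$ with the same endpoints.

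The delicate point is the bookkeeping of the iterated commutation rather than any single algebraic step. I must check that each swap I invoke really is between $\triangleleft$-incomparable transitions (so that Lemma~\ref{lem:MSAdiamond} applies, and in particular that the transition being moved is co-enabled at the intermediate configuration), and that moving one chain transition into place never manufactures a fresh dependency that would block a later move. Both are consequences of the closure property, so the genuine crux is deriving that property — downward closure for the front decomposition, upward closure for the back — from maximality of the chain; once it is in hand, preservation of the final configuration $s'$ is automatic because every elementary swap preserves the reached configuration by the diamond property.
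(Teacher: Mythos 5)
Your argument is sound, and it is worth noting that the paper itself offers essentially no proof of this lemma: it is asserted as an immediate consequence of Lemma~\ref{lem:MSAdiamond}, so your commutation argument is precisely the filling-in that the paper presupposes. Your two key observations are correct and are the right ones. First, maximal causal chains in the sense of Definition~\ref{def:causality}(2) are downward closed under $\triangleleft$ (inserting a $\triangleleft$-predecessor of a chain element preserves the chain condition, since the inserted element acquires a later dependent and no other element loses one), and downward closure is exactly what lets each chain element be bubbled leftwards past only $\triangleleft$-incomparable neighbours. Second, and more importantly, you are right that the back decomposition cannot use the same notion: a maximal causal chain in the paper's literal sense is \emph{not} upward closed (take $u\cdot t\cdot v$ with $u\triangleleft t$, $u\triangleleft v$ and $t,v$ independent; $\{u,t\}$ is maximal but cannot be pushed past $v$ since $v$ depends on $u$), so the statement only holds for $\varphi_m'$ if one reads ``causal chain'' in the mirror sense where each non-initial element depends on some earlier chain element, whose maximal instances are upward closed. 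This is a genuine repair of an imprecision in the paper's statement, and your symmetric argument for it is correct. The one point I would state more carefully is co-enabledness: that the right-hand member of an adjacent $\triangleleft$-incomparable pair is already enabled at the earlier configuration does not follow from the closure property of the chain but from the FIFO semantics together with the hypotheses packaged into ``basic'' (directedness, determinism, no mixed states) — in particular the case of a receive $\p\q?b$ immediately following a non-matching send $\p\q!a$ with $a\neq b$ needs the observation that $b$ must already have been at the head of the buffer. This is exactly the content of Lemma~\ref{lem:MSAdiamond}, which you correctly invoke as the engine, so the gap is one of attribution rather than of substance.
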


\begin{lemma}[output-input dependency]
\label{lem:OI}
Suppose $S$ is basic. Then 
there is no causal chain $t_0\cdot t_1 \cdot t_2 \cdots t_n$ 
such that $\act(t_0)=\p\q!a$ and 
$\act(t_n)=\p\q'?b$ with $a\not = b$ and $\act(t_i)\not = \p\q?c$ for any
$c$ 
($1\leq i
\leq n-1$). 
\end{lemma}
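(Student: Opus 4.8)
The plan is to argue by contradiction and by induction on the length $n$ of the chain, dissecting the first dependency link $t_0 \triangleleft t_1$. By Definition~\ref{def:causality} this link has one of two shapes: either it is a matched communication, so that $t_1 = \p\q?a$, or it is a same-subject link, so that $\subj(t_1) = \subj(t_0) = \p$ and $t_1$ is again an action of $M_\p$. I would first dispose of the matched-communication case: if $n = 1$ then $t_1 = t_n = \p\q'?b$, which forces $\q' = \q$ and $b = a$, contradicting $a \neq b$; and if $n \geq 2$ then $t_1 = \p\q?a$ is an intermediate action of the excluded form $\p\q?c$ (with $c = a$), again a contradiction. Hence the first link must be a same-subject link, and the chain necessarily begins by staying inside the sending machine $M_\p$.

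From there I would track how the chain threads through $M_\p$ using the fact that $S$ is basic. Since $t_n = \p\q'?b$ is the reception of a message emitted by $\p$, its matching send $\p\q'!b$ is itself an action of $M_\p$, so between $t_0$ and the first point at which the chain leaves $\p$ — the first matched-communication link $t_k = \p Y!m \triangleleft t_{k+1} = \p Y?m$ — the machine $M_\p$ follows a path starting from the selection state that fires $\p\q!a$. Directedness and the absence of mixed states constrain this path: every emitting state of $M_\p$ addresses a single participant, so the only way for $\p$ later to address a participant other than $\q$, or to emit on channel $\p\q$ a message distinct from $a$, is to have changed state. I would then combine this with the FIFO discipline of channel $\p\q$ and with the permutation machinery already available, namely the one-step diamond property of Lemma~\ref{lem:MSAdiamond} and the maximum-causal-chain analysis of Lemma~\ref{lem:causality}: reaching, across the chain, the reception of a later and distinct message $b$ originating at $\p$ cannot be made causally independent of $\q$ consuming the earlier message $a$ from channel $\p\q$. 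This is precisely what forces an intermediate action of the form $\p\q?c$ onto the chain, contradicting the hypothesis and closing the induction.

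The main obstacle I anticipate is the interplay between the two kinds of dependency. The same-subject links allow the chain to wander arbitrarily far inside $M_\p$ (and, after a hop, inside other machines), so one cannot simply read off that the first inter-machine hop occurs on channel $\p\q$; intervening receptions by $\p$ itself may legitimately redirect its subsequent sends. The delicate point is therefore to show that any such rerouting which eventually yields a reception of $b \neq a$ emitted by $\p$ must be mediated by $\q$ consuming from channel $\p\q$, and this requires exploiting directedness at the branching and merging points together with the FIFO ordering, rather than a purely local inspection of a single link. Making the induction hypothesis strong enough to survive the hop into a second machine, and formalising this ``no shortcut past $\p\q?c$'' step, is where the real work lies.
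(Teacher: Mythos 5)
Your setup is sound as far as it goes: the contradiction/induction frame, and the split of the first link $t_0 \triangleleft t_1$ into the matched-communication case (dispatched essentially as the paper does, by $b\neq a$ when $n=1$ and by the exclusion of $\p\q?c$ when $n\geq 2$) and the same-subject case, are all correct. But the proof stops exactly where the lemma needs an argument, and you say so yourself: the ``no shortcut past $\p\q?c$'' step is announced, not proved. This is a genuine gap rather than a presentational one, because the route you sketch for closing it --- FIFO discipline combined with the diamond property of Lemma~\ref{lem:MSAdiamond} and the maximal-chain analysis of Lemma~\ref{lem:causality} --- is not the right machinery. Lemma~\ref{lem:OI} is a purely combinatorial statement about which chains of $\triangleleft$-links can exist; the permutation lemmas speak about reachable configurations and commuting executions, and give no handle on the shape of the dependency relation itself.

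The missing idea, which is the entire content of the paper's proof, is subject invariance along the chain. A $\triangleleft$-link is either a matched send/receive pair, which moves the subject from the sender to the receiver, or a same-subject link, which by definition preserves it. The paper decomposes the chain into alternating output-causal segments (consecutive sends by one participant) and input-causal segments (consecutive receives by one participant), observes that the two non-matched ways of passing from one segment to the next both preserve the subject, and that the hypothesis $\act(t_i)\neq\p\q?c$ rules out the matched boundary; propagating the subject $\p$ from $t_0$ through every segment and comparing sender and receiver names across the boundaries then forces $\p=\q$, contradicting the requirement $\p\neq\q$ in the definition of channels. Your worry that ``intervening receptions by $\p$ may redirect its subsequent sends'' dissolves under this view: a reception by $\p$ has subject $\p$, so it does not let the chain escape $M_\p$ at all; only a matched pair can change the subject, and that is exactly what the hypothesis forbids. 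Without this observation your induction has nothing to carry it past the first link, and no amount of FIFO or diamond reasoning substitutes for it.
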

\begin{proof}
We use the following definition. The causal chain 
$\varphi
=t_0\cdot t_1 \cdots t_n$ is called 
\begin{enumerate}
\item {\em O-causal chain} if for all  $1\leq i\leq n$, 
$t_i=\p\q_i!a_i$ with some $\q_i$ and $a_i$. 
\item {\em I-causal chain} if for all  $1\leq i\leq n$, 
$t_i=\q_i\p?a_i$ with some $\q_i$ and $a_i$. 
\end{enumerate}
Then any single causal chain 
$\varphi=\VEC{t}_0\cdot \VEC{t}_1 \cdots \VEC{t}_n$ 
can be decomposed into 
alternating O and I causal chains where 
$t_i = \cdot t_{i0} \cdots t_{in_i}$
 with either 
(1) $\act(t_{in_i})=\p\q!a$ and $\act(t_{i+10})=\q'\p?b$; 
(2) $\act(t_{in_i})=\p\q?a$ and $\act(t_{i+10})=\q\p'!b$; or 
(3) $\act(t_{in_i})=\p\q!a$ and $\act(t_{i+10})=\p\q?a$. 
In the case of (1,2), we note $\subj(t_{ih})=\subj(t_{i+1k})$
for all  $0\leq h \leq n_i$ and  $0\leq k \leq n_{i+1}$. 

Now assume $S$ is basic 
and there is a sequence $\varphi
=t_0\cdot t_1 \cdots t_n$ such that 
$\act(t_0)=\p_0\q_0!a_0$ and 
$\act(t_n)=\p_n\q_n?a_n$ with 
$\p_0 = \q_n$,  
$a_0\not = a_n$ 
and $\act(t_i)\not = \p_0\q_0?a$ for any $a$ ($1\leq i
\leq n-1$). We prove $\varphi$ is not a causal chain by the induction of
the length of $\varphi$. \\[1mm]
{\bf Case} $n=1$. By definition, $t_0\sharp t_n$.\\[1mm] 
{\bf Case} $n>1$. If $\varphi$ is a causal chain, 
there is a decomposition into O and I causal chains such that 
$\varphi=
\VEC{t}_0\cdot \VEC{t}_1 \cdots \VEC{t}_m$
where 
$t_i = t_{i0} \cdots t_{in_i}$. 
By the condition 
$t_i\not = \p_0\q_0?a$ for any $a$ ($1\leq i
\leq n-1$), the case (3) above is excluded. 
Hence 
we have 
$\subj(t_{ih})=\subj(t_{i+1k})$
for all  $0\leq h \leq n_i$ and  $0\leq k \leq n_{i+1}$. 
This implies 
\begin{enumerate}
\item 
$\p_0=\p_{ij}$ with $i$ even (in the O causal chains)
\item 
$\q_{ij}=\q_0$ with $i$ odd (in the I causal chains); and 
\item 
$\p_{in_i}=\q_{i+10}$ with $i$ even. 
\end{enumerate}
This implies $\p_0=\q_0$ which contradicts 
the definition of the channels of CFSMs (i.e. 
$\p_0\not=\q_0$ if $\p_0\q_0$ is a channel). 
Hence there is no causal chain from 
$\act(t_0)=\p_0\q_0!a_0$ to 
$\act(t_n)=\p_0\q_0?a_n$ 
if $\act(t_i)\not = \p_0\q_0?a$ and $a_0\not = a_n$. 
\end{proof}

\begin{lemma}[input availablity]
\label{lem:cmsa:inputavailable}
Assume $S=(M_\p)_{\p \in \PSet}$ is basic and multiparty compatible. 
Then for all 
$s\in \RS(S)$, if
$s\TRANSS{\p\p'!a}s'$, then 
$s'\TRANSS{\varphi}s_2\TRANSS{\p\p'?a} s_3$. 
\end{lemma}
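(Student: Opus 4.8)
The plan is to trace the freshly sent message $a$ back to a configuration where multiparty compatibility can be invoked, to extract the dual reception $\p\p'?a$ there, and then to transport that reception forward to $s'$. First I would reduce to the case where the channel $\p\p'$ is empty in $s$, so that in $s'$ the message $a$ is the unique (hence head) content of $w_{\p\p'}$. If $w_{\p\p'}$ already holds messages $b_1\cdots b_m$ ahead of $a$ in $s$, each $b_r$ was deposited by an earlier send $\p\p'!b_r$ from a reachable state; by an inner induction on $m$, applying the statement to those earlier sends together with the FIFO discipline of Definition~\ref{def:rs}, I can first drive $s'$ through receptions $\p\p'?b_1,\dots,\p\p'?b_m$, draining the queue until $a$ reaches the head. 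It therefore suffices to exhibit a reception of $a$ when $a$ is at the head of $w_{\p\p'}$.

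Next, using the diamond property (Lemma~\ref{lem:MSAdiamond}) and maximum causality (Lemma~\ref{lem:causality}), I would permute the witnessing execution $s_0\TRANSS{\psi}s\TRANSS{\p\p'!a}s'$ so as to expose a stable, $1$-bounded reachable configuration $\hat s\in\RS_1(S)$ from which $M_\p$ still offers the send $\p\p'!a$. Concretely, every transition not lying on the causal chain (in the sense of Definition~\ref{def:causality}) that leads to this send can be postponed, and since $S$ is basic any two independent actions commute by Lemma~\ref{lem:MSAdiamond}; repeatedly commuting isolates a $1$-bounded, stable checkpoint at which the send is enabled.

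At $\hat s$ I invoke multiparty compatibility (Definition~\ref{def:multipartycompatible}) with $i=\p$ and the single action $\ell_1=\p\p'!a$. It yields a sequence $\varphi_1\cdot t_1$ in the CFSM of $S^{-\p}$ with $\varphi_1$ empty or an alternation whose actions do not involve $\p$, and with $\act(t_1)=\Phi(\p\p'!a)=\p\p'?a$. Thus after performing the internal alternation $\varphi_1$ the partner $\p'$ reaches a receiving state enabling $\p\p'?a$. I then transport this witness back to the real configuration $s'$: since $\varphi_1$ contains no action with subject $\p$ and leaves channel $\p\p'$ untouched, it commutes (again by Lemma~\ref{lem:MSAdiamond}) with the extra sends of $\p$ that distinguish $s'$ from the $\hat s$-context, while Lemma~\ref{lem:OI} rules out any intervening message on $\p\p'$ that would force $\p'$ to expect a label other than $a$, so the reception is not blocked. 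Concatenating $\varphi_1$ (reinterpreted from $s'$) with the earlier drain of $b_1\cdots b_m$ produces the required $s'\TRANSS{\varphi}s_2\TRANSS{\p\p'?a}s_3$.

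I expect the reduction to the stable, $1$-bounded configuration $\hat s$ to be the main obstacle. Multiparty compatibility is postulated only at stable, $1$-bounded reachable states, whereas $s$ may carry arbitrary buffer contents and an arbitrarily non-$1$-bounded history; consequently the technical weight lies in the permutation argument, driven by the causality relation $\triangleleft$ of Definition~\ref{def:causality} and the diamond property, that realigns the send with a point at which compatibility actually speaks, and in verifying that commuting $\varphi_1$ back into $s'$ does not disturb the enabledness of $\p\p'?a$.
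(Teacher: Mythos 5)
Your overall strategy differs from the paper's: the paper argues by contradiction, supposing the reception $\p\p'?a$ is never available, locating the first ``incompatible'' input branch $(q,\q\p'?b,q')$ that machine $M_{\p'}$ must then take, and analysing the maximal causal chain (Definition~\ref{def:causality}, Lemmas~\ref{lem:MSAdiamond}, \ref{lem:causality}, \ref{lem:OI}) leading to that divergent input to contradict multiparty compatibility. You instead try a direct argument: permute back to a stable $1$-bounded checkpoint $\hat s$, invoke compatibility there, and transport the resulting reception forward. This inverts the paper's dependency structure in a way that creates a genuine gap: the claim that repeated commuting ``isolates a $1$-bounded, stable checkpoint at which the send is enabled'' is essentially the stable property of Appendix~\ref{app:cmsa:proof:stable}, and the paper proves that property \emph{using} the present lemma. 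Postponing all transitions not causally below the send does isolate a causal prefix, but nothing guarantees that prefix is stable --- it is stable only if every send in it is matched by a receive, which is exactly the input-availability fact you are trying to establish. You acknowledge this is the main obstacle but do not supply an argument that breaks the circle.

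Two further steps are underjustified. First, your reduction to the head-of-queue case ``by an inner induction applying the statement to the earlier sends'' does not go through with the statement as given: the lemma applied to an earlier send $\p\p'!b_r$ yields a reception reachable from the configuration \emph{immediately after that send}, not from $s'$, which lies further along a possibly divergent history; what you need is the stronger Lemma~\ref{lem:cmsa:generalinputavailable}, proved separately in the paper. Second, the transport of $\varphi_1$ from $\hat s$ to $s'$ is not a mere commutation with ``extra sends of $\p$'': the postponed suffix separating $\hat s$ from $s'$ may contain actions of $\p'$ and of the participants appearing in the alternation $\varphi_1$, and in particular $\p'$ may there take a receiving branch incompatible with the one $\varphi_1$ requires. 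That divergent-branch scenario is precisely what the paper's contradiction argument (Cases 1--3, using directedness, FIFO order, and Lemma~\ref{lem:OI}) is built to exclude, and your proposal does not address it.
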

\begin{proof}
We use Lemma \ref{lem:MSAdiamond} and Lemma \ref{lem:causality}. 
Suppose $s\in \RS(S)$ and $s\TRANSS{t}s'$ such that 
$\act(t)=\p\p'!a$. By contradiction, assume there is no $\varphi'$ 
such that $s'\TRANSS{\varphi'}\TRANSS{t'}s''$ with $\act(t)=\p\p'?a$. 
Then there should be some input state $(q,\q\p'?b,q')\in \delta_{\p'}$
where $q\TRANSS{\q\p'?b'}q''\TRANSS{p_1}\TRANSS{\p\p'?a}q'''$ where
$b\not = b'$ (hence $q'\not = q''$ by determinism), i.e. $\q\p'?b$
leads to an incompatible path with one lead to the action $\q\p'?a$.

Suppose $s'\TRANSS{\varphi_0}\TRANSS{t_{bi}}s''$  
with $t_{bi}=(q,\q\p'?b,q')$. Then 
$\varphi_0$ should include 
the corresponding output action $\act(t_{bo})=\q\p'!b$. 
By Lemma \ref{lem:causality}, without loss of generality, we assume 
$\varphi_0\cdot t_{bi}$ is the maximum causal chain to 
$t_{bi}$. 
Let us write $\varphi_0 = t_0 \triangleleft t_1 \triangleleft \cdots
\triangleleft t_{n}$. By Lemma \ref{lem:MSAdiamond}, 
we can set $t_{bo}=t_{n}$. 
Note that for all $i$, $\act(t_i)\not = \p\p'?a'$ by the assumption:
since if 
$\act(t_i)\not = \p\p'?a$, then it contradicts the assumption such that 
$t$ does not have a corresponding input; and 
if $\act(t_i) = \p\p'?a'$ with $a\not= a'$ then, 
by directedness of $S$, 
it contradicts 
to the assumption that $t_{bi}$ is the first input which leads to the
incompatible path. 
Then there are three cases. 

\begin{enumerate}
\item there is a chain from $t$ to $t_{n}=t_{bo}$, 
i.e. there exists 
$0\leq i\leq n$ such that $t \triangleleft t_i\triangleleft \cdots
\triangleleft t_{n}$. 

\item there is no direct chain from $t$ to $t_{n}$ but there is 
a chain to $t_{bi}$, 
i.e. there exists $0\leq i\leq n$ 
such that $t \triangleleft t_i \triangleleft \cdots
\triangleleft t_{bi}$. 

\item there is no chain from $t$ to either $t_{n}$ or $t_{bi}$. 
\end{enumerate}
{\bf Case 1:} By the assumption, there is no $t_j$ such that
$\act(t_j)=\p\p'?a'$. Hence $t_i=\p\p''!a'$ for some $a'$ and $\p''$.\\[1mm] 
{\bf Case 1-1:} there is no input in $t_j$ in 
$t \triangleleft t_i\triangleleft \cdots
\triangleleft t_{n-1}$. Then $\p=\q$, i.e.~$\q\p'!b=\p\p'!b$. Then by
the definition of $s\TRANS{t} s'$ (i.e. by FIFO semantics at each
channel),  
$\p\p'?b$ cannot perform 
before $\p\p'?a$. This case contradicts to the assumption $\p\p'?a$ is
not available. \\[1mm]
{\bf Case 1-2:} there is an input $t_j$ in 
$t \triangleleft t_i\triangleleft \cdots
\triangleleft t_{n-1}$. By $t \triangleleft t_i$, 
$\subj(\act(t_i))=\p$. Hence  
we have either $\act(t_i)=\p\q_i!a_i$ with $\q\not = \q_i$ or 
$\act(t_i)=\q_i\p?a_i$. \\[1mm]
{\bf Case 1-2-1:} $\act(t_i)=\p\q_i!a_i$. Then there is 
a path $q\TRANS{\p\q!a}\TRANS{\p\q_i!a_i}q'$ in $M_\p$. Hence by the
multiparty compatibility, there should be the traces 
$\p\q?a\cdot \varphi \cdot \p\q_i?a_i$ with $\varphi$ alternation from 
the machine with respect to $\{ M_\participant{r}\}_{
\participant{r} \in \PSet\setminus \p}$. 
This contradicts to the assumption that $\p\p'?a$ is
not available. 
\\[1mm]
{\bf Case 1-2-2:} 
$\act(t_i)=\q_i\p?a_i$. 
Similarly with the case {\bf Case 1-2-1},
by the multiparty compatibility, there should be the traces 
$\p\q?a\cdot \varphi \cdot \p\q_i?a_i$ with $\varphi$ alternation from 
the machine with respect to $\{ M_\participant{r}\}_{
\participant{r} \in \PSet\setminus \p}$. Hence it contradicts to the
assumption.  
\\[1mm]
{\bf Case 2:} 
Assume the chain 
such that $t \triangleleft t_i \triangleleft \cdots
\triangleleft t_{bi}$ and $t\sharp t_n$. 
As the same reasoning as {\bf Case 1}, $\p \not = \q$ 
and $t_i$ is either $\p\q_i!a_i$ or $\q_i\p?a_i$.  
Then we use the multiparty compatibility.  
\\[1mm]
{\bf Case 3:} 
Suppose 
there exists $s_{04}\in \RS(S)$ such that
$s_{04}\TRANSS{t_4}\TRANSS{\varphi_{4}}\TRANSS{\varphi_{0}}\TRANSS{t_{bi}}$ 
and 
$s_{04}\TRANSS{t_4'}\TRANSS{\varphi_{4}'}\TRANSS{t}$ 
where $t_4$ leads to $t_{bi}$ and 
$t_4'$ leads to $t$. \\[1mm]
{\bf Case 3-1:} 
Suppose $t_4$ and $t_4'$ are both sending actions.    
By Lemma \ref{lem:MSAdiamond}, there are three cases. \\[1mm]
{\bf (a)} This case does not satisfy the assumption since $s_1=s_2$.\\[1mm]
{\bf (b)} We set $\act(t_4) = \p_4\q_4!d$ and 
$\act(t_4') = \p_4\q_4!d'$ with $d\not = d'$. In this case, 
we cannot execute both $t$ and $t_{bi}$. Hence there is no possible
way to execute $t_{bi}$. This contradicts to the assumption.\\[1mm]  
{\bf (c)} Since this case satisfy the diamond property, we apply the
same routine from $s'$ such that
$s_{04}\TRANSS{t_4}\TRANSS{t_41}s'$ and 
$s_{04}\TRANSS{t_4'}\TRANSS{t_42}s'$ and 
$\act(t_4)=\TRANSS{t_42}$ and 
$\act(t_4')=\TRANSS{t_41}$ where the length of the sequences to $t$
and $t_{bi}$ is reduced (hence this case is eventually matched with 
other cases). 
\\[1mm] 
{\bf Case 3-2:} 
Suppose $t_4$ and $t_4'$ are both sending actions.    
By Lemma \ref{lem:MSAdiamond}, there are two cases. The case {\bf (a)} is
as the same as the case {\bf 3-1-(b)} and 
the case {\bf (b)} is
as the same as the case {\bf 3-1-(c)}. \\[1mm]
{\bf Case 3-3:} 
Suppose $t_4$ is a sending action and $t_4'$ is receiving action. 
This case is as the same as the case {\bf 3-1-(c)} and  
This  concludes the proof. 
 \QED
\end{proof}

We can extend the above lemma. The proof is similar.  

\begin{lemma}[general input availablity]
\label{lem:cmsa:generalinputavailable}
Assume $S=(M_\p)_{\p \in \PSet}$ is basic and multiparty compatible. 
Then for all 
$s\in \RS(S)$, if
$s\TRANSS{\p\p'!a}s_1\TRANSS{\varphi}s'$ with $\p\p'?a\not\in \varphi$, then 
$s'\TRANSS{\varphi'}s_2\TRANSS{\p\p'?a} s_3$. 
\end{lemma}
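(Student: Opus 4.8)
The plan is to reduce the statement to Lemma~\ref{lem:cmsa:inputavailable}, whose proof already establishes availability from the state immediately following a send. The crucial observation is that the hypothesis of Lemma~\ref{lem:cmsa:inputavailable} is used only to guarantee that, at the configuration under analysis, the message $a$ sits in channel $\p\p'$ as a \emph{pending} (not yet consumed) message; the fact that the send is the last transition fired is never essential. Hence it suffices to show that $a$ is still pending at $s'$ and that $s'\in\RS(S)$, and then to replay the same argument based at $s'$.

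First I would establish persistence of $a$ by a FIFO argument. Writing $w_{\p\p'}=w_0$ for the content of channel $\p\p'$ at $s$, the transition $s\TRANSS{\p\p'!a}s_1$ appends $a$, so $w_{\p\p'}=w_0\cdot a$ at $s_1$. Along $s_1\TRANSS{\varphi}s'$, channel $\p\p'$ is modified only by sends, which append to its tail, and by receptions $\p\p'?c$, which remove its head. Since $\p\p'?a\notin\varphi$, the occurrence of $a$ appended at $s_1$ can never reach the head and be popped, because popping it is precisely a $\p\p'?a$ action. Therefore at $s'$ we have $w_{\p\p'}=u\cdot a\cdot v$ for some $u,v\in\ASigma^\ast$, where $u$ is whatever remains of $w_0$ and $v$ collects the messages $\varphi$ sent on $\p\p'$; in particular $a$ is pending in $w_{\p\p'}$ and $s'\in\RS(S)$ (reachable by $s\TRANSS{\p\p'!a}s_1\TRANSS{\varphi}s'$).

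Then I would re-run the contradiction argument of Lemma~\ref{lem:cmsa:inputavailable}, now based at $s'$ with the pending $a$ in place of a just-sent one. Assuming $\p\p'?a$ is never available from $s'$, the receiver $\p'$ must be diverted onto an input $(q,\q\p'?b,q')$ incompatible with the branch of $\p'$ that leads to $\p\p'?a$ (with $b\neq b'$ and $q'\neq q''$ by determinism). The same case analysis on the maximal causal chain reaching that diverting input — Cases~1--3 of Lemma~\ref{lem:cmsa:inputavailable}, appealing to the diamond property (Lemma~\ref{lem:MSAdiamond}), maximal causality (Lemma~\ref{lem:causality}), the output--input dependency (Lemma~\ref{lem:OI}) and multiparty compatibility — yields a contradiction, since at $s'$ the message $a$ occurs in $w_{\p\p'}$ exactly as it does in the base lemma. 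This produces the required $\varphi'$ with $s'\TRANSS{\varphi'}s_2\TRANSS{\p\p'?a}s_3$.

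The main obstacle is making precise that the base lemma's reasoning depends only on pendingness and reachability, not on the send being the final transition. Two points require care: the intervening $\varphi$ may have advanced $\p'$ and appended further messages $v$ behind $a$, so the diverting input may occur deeper in $\p'$ and the prefix $u\cdot a$ must be cleared first — but this is exactly the situation the base lemma already tolerates, since its analysis allows a nonempty buffer ahead of $a$; and $\varphi$ may itself contain further actions of the sender $\p$, which is handled by Case~1-2 of Lemma~\ref{lem:cmsa:inputavailable} together with Lemma~\ref{lem:OI}. An alternative, purely inductive route on $|\varphi|$ — commuting the send past each transition of $\varphi$ via Lemma~\ref{lem:MSAdiamond} and invoking Lemma~\ref{lem:cmsa:inputavailable} in the base case — also works whenever the peeled transition $t$ satisfies $\subj(t)\neq\p$, but it breaks at transitions with $\subj(t)=\p$, which cannot be commuted before the send; this is precisely why the pendingness reformulation above is the cleaner path, and why "the proof is similar" to that of Lemma~\ref{lem:cmsa:inputavailable}.
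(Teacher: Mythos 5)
Your proposal is correct and matches the paper's intent: the paper proves this lemma only by the remark ``the proof is similar'' to Lemma~\ref{lem:cmsa:inputavailable}, i.e.\ by replaying that argument with the message $a$ merely pending rather than just sent, which is exactly the reduction you make precise (and your explicit FIFO persistence argument and the observation about why the naive commutation induction fails at actions of $\p$ are useful additions the paper leaves implicit).
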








\subsection{Proofs of Theorem \ref{pro:cmsa:safety}}
\label{app:cmsa:proof:stable}
We first prove the following stable property. 

\begin{proposition}[stable property]
Assume $S=(M_\p)_{\p \in \PSet}$ is basic and multiparty compatible. 
Then 
$S$ satisfies the stable property, i.e. 
if, for all $s\in \RS(S)$, 
there exists an execution $\TRANSS{\varphi'}$ such that 
$s\TRANSS{\varphi'} s'$ and $s'$ is stable, and 
there is a 1-bounded execution $s_0\TRANSS{\varphi''} s'$.
\end{proposition}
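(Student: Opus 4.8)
The plan is to fix, for a given reachable $s$, a witnessing execution $s_0 \TRANSS{\psi} s$, and to establish the two conjuncts in turn: first that $s$ can be driven to a stable configuration $s'$, and then that the resulting full execution from $s_0$ to $s'$ admits a $1$-bounded re-ordering. Both steps rest on the permutation machinery already developed: the diamond property (Lemma~\ref{lem:MSAdiamond}), the maximum-causal-chain decomposition (Lemma~\ref{lem:causality}), the output--input independence (Lemma~\ref{lem:OI}), and the two input-availability lemmas (Lemmas~\ref{lem:cmsa:inputavailable} and~\ref{lem:cmsa:generalinputavailable}).

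For reaching a stable state I would induct on the total number of messages held in the buffers of $s$. If that number is $0$ then $s$ is already stable. Otherwise I pick a pending message, i.e.\ a send occurring in $\psi$ whose matching reception does not occur in $\psi$; by general input availability (Lemma~\ref{lem:cmsa:generalinputavailable}) there is an extension $s \TRANSS{\varphi'} s_2 \TRANSS{\p\q?a} s_3$ consuming it. Scheduling these extensions in causal order (using Lemma~\ref{lem:causality}), consuming the oldest pending message first, keeps every buffer from growing, so the total buffer content strictly decreases at each round. Being a non-negative integer it reaches $0$, giving an execution $s_0 \TRANSS{\psi\cdot\varphi'} s'$ with $s'$ stable. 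This conjunct is routine; the real content lies in the second.

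For the $1$-bounded re-ordering I would work with the dependency partial order $\triangleleft^{*}$ that the transitions of $\psi\cdot\varphi'$ induce (Definition~\ref{def:causality}). By iterated use of the diamond property (Lemma~\ref{lem:MSAdiamond}) together with Lemma~\ref{lem:causality}, every linearisation of this partial order is again a valid execution from $s_0$ to the \emph{same} stable $s'$, so it suffices to exhibit one linearisation that is $1$-bounded. Since $s'$ is stable, every send is matched by a unique reception, and by FIFO on each channel $\p\q$ the sends $s^{1},s^{2},\dots$ pair with receptions $r^{1},r^{2},\dots$. I would then build the linearisation greedily under the invariant that each buffer holds at most one message: as soon as a channel carries a pending $s^{i}$, schedule its reception $r^{i}$ before emitting the next send $s^{i+1}$ on that channel. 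The point to check is compatibility with $\triangleleft^{*}$, namely that $s^{i+1}$ is never forced before $r^{i}$; this is exactly what Lemma~\ref{lem:OI} excludes, since a causal chain from the send $s^{i+1}=\p\q!a'$ to the earlier reception $r^{i}=\p\q?a$ would demand an intervening reception on $\p\q$, which FIFO places only at $r^{i+1}$ or later, contradicting $r^{i}\,\triangleleft^{*}\,r^{i+1}$. Hence $r^{i}$ may always be scheduled ahead of $s^{i+1}$, the invariant is preserved, and the produced execution $s_0 \TRANSS{\varphi''} s'$ is $1$-bounded.

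The main obstacle is precisely this last compatibility check: certifying that scheduling each reception before the next send on its channel never violates the causal order. Everything hinges on Lemma~\ref{lem:OI} controlling output-to-input causal chains, together with the FIFO order of receptions; the diamond lemma only makes \emph{independent adjacent} actions commute, so the genuine work is the global causal analysis guaranteeing the existence of a $1$-bounded linearisation rather than a mere local swap. The decomposition into O-causal and I-causal segments underlying Lemma~\ref{lem:OI} is where directedness and the absence of mixed states are really used, and handling the same-label case $a'=a$ (where Lemma~\ref{lem:OI} does not apply directly) via the subject-based dependencies $s^{i}\triangleleft s^{i+1}$ and $s^{i}\triangleleft r^{i}$ is the delicate corner I expect to spend the most care on.
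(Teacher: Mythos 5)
Your architecture is genuinely different from the paper's: you first drive $s$ to a stable $s'$ and only afterwards try to re-order the entire execution into a $1$-bounded one by a global greedy linearisation of the causal order. The paper instead interleaves the two steps: it locates the \emph{first} unmatched send $t_1$ after a $1$-bounded prefix, uses multiparty compatibility to obtain an \emph{alternating} extension $\varphi_2$ reaching $\overline{t_1}$, and commutes the already-matched pairs of $\varphi_2$ in front of $t_1$ one at a time via Lemma~\ref{lem:MSAdiamond}, so the $1$-bounded prefix grows at every round. The alternation structure supplied by Definition~\ref{def:multipartycompatible} is what makes both termination and $1$-boundedness come for free there, and your proof never imports it.

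That omission is the gap, and it is not cosmetic. Your key step --- that the reception $r^i$ on channel $\p\q$ can always be scheduled before the next send $s^{i+1}$ on that channel --- is justified only by Lemma~\ref{lem:OI}, i.e.\ by basicness. But basicness alone does not exclude a causal chain from $s^{i+1}$ to $r^i$ routed through a third party: take $M_\p$ performing $\p\q!a_1\cdot\p\q!a_2\cdot\p\rr!c$, $M_\rr$ performing $\p\rr?c\cdot\rr\q!d$, and $M_\q$ performing $\rr\q?d\cdot\p\q?a_1\cdot\p\q?a_2$. This system is basic and reaches a stable final configuration, yet $\p\q!a_2\triangleleft\p\rr!c\triangleleft\p\rr?c\triangleleft\rr\q!d\triangleleft\rr\q?d\triangleleft\p\q?a_1$ is a causal chain with no intervening reception on $\p\q$, so \emph{no} $1$-bounded execution reaches the stable state and your scheduler has no legal move. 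The system is of course not multiparty compatible, so the proposition survives --- but your re-ordering phase never invokes compatibility, so it would ``prove'' the impossible re-ordering exists. Lemma~\ref{lem:OI} cannot save you here: its proof assumes the chain returns to an input performed by the original sender $\p$ and derives the contradiction $\p=\q$; a chain from a send by $\p$ to a receive by $\q$ on the same channel is a different configuration and, as the example shows, is perfectly realisable in basic systems. Two further points need repair: termination of your first phase (``buffers never grow'') requires the extension of Lemma~\ref{lem:cmsa:generalinputavailable} to be an alternation, which again is exactly what multiparty compatibility provides; and your greedy schedule must also be shown free of cross-channel cycles, where keeping channel $c$ within its bound forces a send on $d$ whose enabling reception causally needs the blocked send on $c$ --- the per-channel check alone does not address this.
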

\begin{proof}
We proceed by the induction of the total number of messages (sending actions)
which should be closed by the corresponding received actions. 
Once all messages are closed, we can obtain 1-bound execution. 

Suppose $s_1,s_2$ are the states such that 
$s_0\TRANSS{\varphi_1} s_1 \TRANSS{t_1} s_2 \TRANSS{\varphi_1'} s'$ 
where $\varphi_1$ is a 1-bounded execution and 
$s_1 \TRANSS{t_1} s_2$ is the first transition which is not followed 
by the corresponding received action. 
Since $\varphi_1$ is a 1-bounded execution, 
there is $s_3$ such that $s_2 \TRANSS{t_2} s_3$ where $t_1$ and $t_2$
are both sending actions. 
Then by the definition of the compatibility and Lemma \ref{lem:cmsa:inputavailable}, we have 
\begin{equation}
\label{proof:inputavailability}
s_1 \TRANSS{t_1} s_2 \TRANSS{\varphi_2}\TRANSS{\overline{t_1}}s_3'
\end{equation}
where $\varphi_2$ is an alternation execution and
$\overline{t_1}=\p\q?a$. Assume $\varphi_2$ is a minimum execution
which leads to $\overline{t_1}$. We need to show 
\[
s_1 \TRANSS{\varphi_2}\TRANSS{t_1}\TRANSS{\overline{t_1}}
s_3'\TRANSS{t_2} s_4
\]
Then we can apply the same routine for $t_2$ to close 
it by the corresponding receiving action $\OL{t_2}$. 
Applying this to the next sending state one by one,  
we can reach an 1-bounded execution. 
Let $\varphi_2= t_4\cdot \varphi_2'$. Then 
by the definition  of multiparty compatibility, 
$\act(t_4)=\p'\q'!c$ and $\p'\not =\p$ and $\q'\not =\q$. 
Hence by Lemma \ref{lem:MSAdiamond}(1), there exists the execution 
such that 
\[ 
s_1 \TRANS{t_4}\TRANS{t_1}\TRANS{\varphi_2'}\TRANS{\overline{t_1}}
s_3'\TRANS{t_2} s_4
\]
Let $\varphi_2'={\overline{t_4}}\cdot\varphi_2''$ where 
$\overline{t_1}=\p'\q'?c$. Then this time, by Lemma \ref{lem:MSAdiamond}(2), 
we have: 
\[ 
s_1 \TRANS{t_4}\TRANS{\overline{t_4}}\TRANS{t_1} \TRANS{\varphi_2''}\TRANS{\overline{t_1}}
s_3'\TRANS{t_2} s_4
\]
where $\varphi_1 \cdot t_4 \cdot \overline{t_4}$ is a 1-bounded
execution. Applying this permutation repeatedly, we have 
\[ 
s_1 \TRANS{\varphi_3}\TRANS{t_1}\TRANS{\overline{t_1}}
s_3'\TRANS{t_2} s_4
\]
where $\varphi_3$ is an 1-bounded execution. We apply the same routine 
for $t_2$ and conclude $s_1 \TRANS{\varphi'} s'$ for some stable
$s'$. \QED 
\end{proof}

From the stable property, 
the orphan message- and the reception error-freedom are immediate. 
Also the liveness 
is a corollary by the orphan message- and deadlock-freedom.
Hence we only prove the deadlock-freedom assuming the stable property. 

\myparagraph{Deadlock-freedom}
Assume $S$ is basic and satisfy the multiparty session compatibility. 
By the above lemma, $S$ satisfies the stable property. 
Hence we only have to check 
for all $s\in \RS_1(S)$, 
$s$ is not dead-lock. Suppose by the contradiction, $s$ contains 
the receiving states $t_1,...,t_n$. Then by the multiparty compatibility, 
there exists 1-bounded execution $\varphi$ such that 
$s\TRANS{\varphi}\TRANS{\overline{t}_1}s'$. Hence $s'\TRANS{t_1} s''$
and $s''$ is stable. Applying this routine 
to the rest of receiving states $t_2,...,t_n$, we conclude the proof.
\QED 

\subsection{Proof for Lemma \ref{lem:one_to_k}}
\label{app:one_to_k}

\begin{proof}
  We prove by induction that $\forall n, S_1\approx_n S_2 \implies
  S_1\approx_{n+1} S_2$. Then the lemma follows.  

  We assume $S_1\approx_n S_2$ and then prove, by induction on the length
  of any execution $\varphi$ that uses less than $n$ buffer space in $S_1$,
  that $\varphi$ is accepted by $S_2$. If the length $|\varphi|<n+1$, then the buffer
  usage of $\varphi$ for $S_1$ cannot exceed $n$, therefore $S_2$ can
  realise $\varphi$ since $S_1\approx_n S_2$.

  Assume that a trace $\varphi$ in $S_1$ has length $|\varphi|=k+1$, that
  $\varphi$ is $(n+1)$-bound, and that any trace strictly shorter than $\varphi$
  or using less buffer space is accepted by $S_2$.

  We denote the last action of $\varphi$ as $\ell$. We name $\ell_0$
  the last unmatched send transition $\p\q!a$ of $\varphi$ that is not
  $\ell$. We can therefore write $\varphi$ as $\varphi_0 \ell_0
  \varphi_1 \ell$, with $\varphi_1$ minimal. 
 I.e.~there is no permutation 
such that $\varphi_0 \ell \varphi_0' \ell_0$. 
In $S_1$, we have
\begin{equation}\label{eq:main}
S_1: \ s_0
\TRANS{\varphi_0}\TRANS{\ell_0}\TRANS{\varphi_1}s_1\TRANS{\ell}
s
\end{equation}
By Lemma~\ref{lem:cmsa:generalinputavailable},
we have a trace $\varphi_2$ such that:
\begin{equation}\label{eq:second}
S_1: \ s_0 \TRANS{\varphi_0}\TRANS{\ell_0}\TRANS{\varphi_1}s_1
\TRANS{\varphi_2}\TRANS{\overline{\ell_0}}s_1'
\end{equation}
{\bf Case} $\varphi_2 = \NUL$. Hence 
\begin{equation}\label{eq:NUL}
S_1: \ s_0 \TRANS{\varphi_0}\TRANS{\ell_0}\TRANS{\varphi_1}s_1
\TRANS{\overline{\ell_0}}s_1' \quad 
\text{and} \quad 
s_1 \TRANS{\ell}s
\end{equation}
Let $\ell=\p_1\q_1!b$. Then by 
Lemma \ref{lem:MSAdiamond} (3),  
$s_1 \TRANS{\OL{\ell_0}}\TRANS{\ell}s''$ as required. \\[1mm]
{\bf Case} $\varphi_2 = \ell_1\cdot\varphi_2'$. 
\begin{enumerate}
\item If $\ell=\p_1\q_1!b$ and $\ell_1=\p_2\q_2?c$, then 
by Lemma \ref{lem:MSAdiamond} (3),  
$s_1 \TRANS{\ell_1}\TRANS{\ell}s''$. Hence we apply the induction on
$\varphi_2'$. 

\item If $\ell=\p_1\q_1!b$ and $\ell_1=\p_2\q_2!c$, then by
directedness, we have three cases: 
\begin{enumerate}
\item $\p_1\not = \p_2$ and $\q_1\not = \q_2$. 
By Lemma \ref{lem:MSAdiamond} (1), we have 

\begin{equation}\label{eq:INDone}
s_1\TRANS{\ell_2}s \TRANS{\ell} s_2' \TRANS{\varphi_2'} s_1' 
\end{equation}
Hence we conclude by the induction on $\varphi_2'$. 

\item $\p_1 = \p_2$ and $\q_1 = \q_2$ and $b\not = c$. 

In this case, by Lemma~\ref{lem:cmsa:generalinputavailable},
there exists $\varphi_3$ such that $s_1 \TRANS{\ell}\TRANS{\varphi_3}
\TRANS{\OL{\ell_0}}$. Hence this case is subsumed into (a) or (c) below. 

\item $\p_1 = \p_2$ and $\q_1 = \q_2$ and $b=c$. 

Since $\ell_0$ and $\ell$ is not permutable, there 
is the causality such that  
$t_0 \triangleleft t_1 \triangleleft \cdots
\triangleleft t_n \triangleleft \cdots \triangleleft t_{n+m}$ 
with 
$\act({t_0})=\ell_0$, 
$\act({t_n})=\ell$ and 
$\act({t_{n+m}})=\OL{\ell_0}$. 
We note that since $l_0$ is the first outstanding output, 
by multiparty compatibility, 
$t_i$ ($1\leq i \leq n-1$)  does not include $\p_1\q_1?a$.   
Then by Lemma \ref{lem:OI}, this case does not exist. 



\end{enumerate}
\end{enumerate}
Applying Case (a), we can build in $S_1$ a sequence of
transitions that allows $\ell$ using strictly less buffer space as: 
\begin{equation}
S_1: \ s_0 \TRANS{\varphi_0}
\TRANS{\varphi_0'}\TRANS{\ell_0}\TRANS{\varphi_3}\TRANS{\overline{\ell_0}}\TRANS{\ell}
\end{equation}
where $\varphi_3$ is the result of the combination of $\varphi_1$ and
$\varphi_2$ using commutation.

By the assumption ($S_1\WB_n S_2$), $S_2$ can simulate this sequence as:
\begin{equation}
\label{S2}
S_2: \ s_0 \TRANS{\varphi_0}
\TRANS{\varphi_0'}\TRANS{\ell_0}\TRANS{\varphi_3}\TRANS{\overline{\ell_0}}\TRANS{\ell}
\end{equation}
All the commutation steps used in $S_1$ are also valid in $S_2$ 
since they are solely based on
causalities of the transition sequences.  
We therefore can permute (\ref{S2}) back to:
\begin{equation}
S_2: \ s_0 \TRANS{\varphi_0}\TRANS{\ell_0}\TRANS{\varphi_3}\TRANS{\ell}
\end{equation}
It concludes this proof.

\end{proof}
\section{Generalised Multiparty Session Automata}
\label{sec:graph}

As an addition to the main results, we extend the results obtained on classical
multiparty session types to tackle generalised multiparty session
types~\cite{DY12}, an extension with new features such as flexible fork, choice,
merge and join operations for precise flow specification. It strictly subsumes
classical MPST.

\subsection{Generalised global and local types}
\label{sec:graphs-global}

In this subsection, we recall definitions from~\cite{DY12}.

\myparagraph{Generalised global types} We first define {\em generalised
  global types}. The syntax is defined \iflong in
Fig.~\ref{fig:globaltypes}.  \else below.  \fi \iflong\begin{figure}[t]\fi
\iflong  \mycaption{Generalised Global Types \label{fig:globaltypes}}\fi
{\small
\[
      \begin{array}{ll}
        \begin{array}{r@{}c@{\quad}l@{\quad}l@{}}
        \multicolumn{4}{l}{
        \GG \ \grmeq \ \defk\ \GV \  \In\  \xx \qquad
        \text{Global type}} \\[1mm]
        G & \grmeq & \xx =  \TO{p}{p'}:a\;;\xx' &
        \text{Messages} \\
        & \grmor & \xx = \xx' \PAR \xx'' & \text{Fork}\\
        & \grmor & \xx \PAR \xx' = \xx'' & \text{Join}\\
      \end{array} &
      \begin{array}{r@{}c@{\quad}l@{\quad}l@{}}
        \\[1mm]
        & \grmor & \xx =  \End & \text{End}\\
        & \grmor & \xx = \xx' + \xx'' & \text{Choice}\\
        & \grmor & \xx + \xx' = \xx'' & \text{Merge}\\
      \end{array}

    \end{array}
\]}
%
\iflong\hrule\else
\fi
\iflong \end{figure}
\fi %


A global type $\GG = \defk\ \GV \ \In\ \xx_0$ describes an interaction
between a fixed number of participants. We explain each of the constructs
by example, in Figure~\ref{fig:data-global}, alongside the corresponding
graphical representation inspired by the BPMN 2.0 business processing
language. This example features three participants, with $A$ sending data
to $B$ while $C$ concurrently records a log entry of the transmission.

\begin{figure}[t]
\[
\begin{array}{@{}l@{}l@{}}
  \begin{array}{rcl}
 \GG= \defk\ \xx_0 & = & \xx_1 \PAR \xx_2 \\
   \xx_1 + \xx_5 & = & \xx_3 \\
   \xx_3 & = & \TO{A}{B}:\textit{data}\;;\xx_4  \\
   \xx_4 & = & \xx_5 + \xx_6  \\
   \xx_6 & = & \TO{A}{B}:\textit{eof}\;;\xx_7  \\
   \xx_2 & = & \TO{A}{C}:\textit{log}\;;\xx_8  \\
   \xx_7 \PAR \xx_8 & = & \xx_9 \\
   \xx_9 & = & \TO{B}{C}:\textit{save}\;;\xx_{10}  \\
   \xx_{10} & = & \End \ \In\ \xx_0\\[2ex]
   \multicolumn{3}{c}{\text{Data transfer example}}
  \end{array} &
    \begin{minipage}{15em}
      \includegraphics[trim=6cm 9cm 5cm 11cm,width=12.5em]{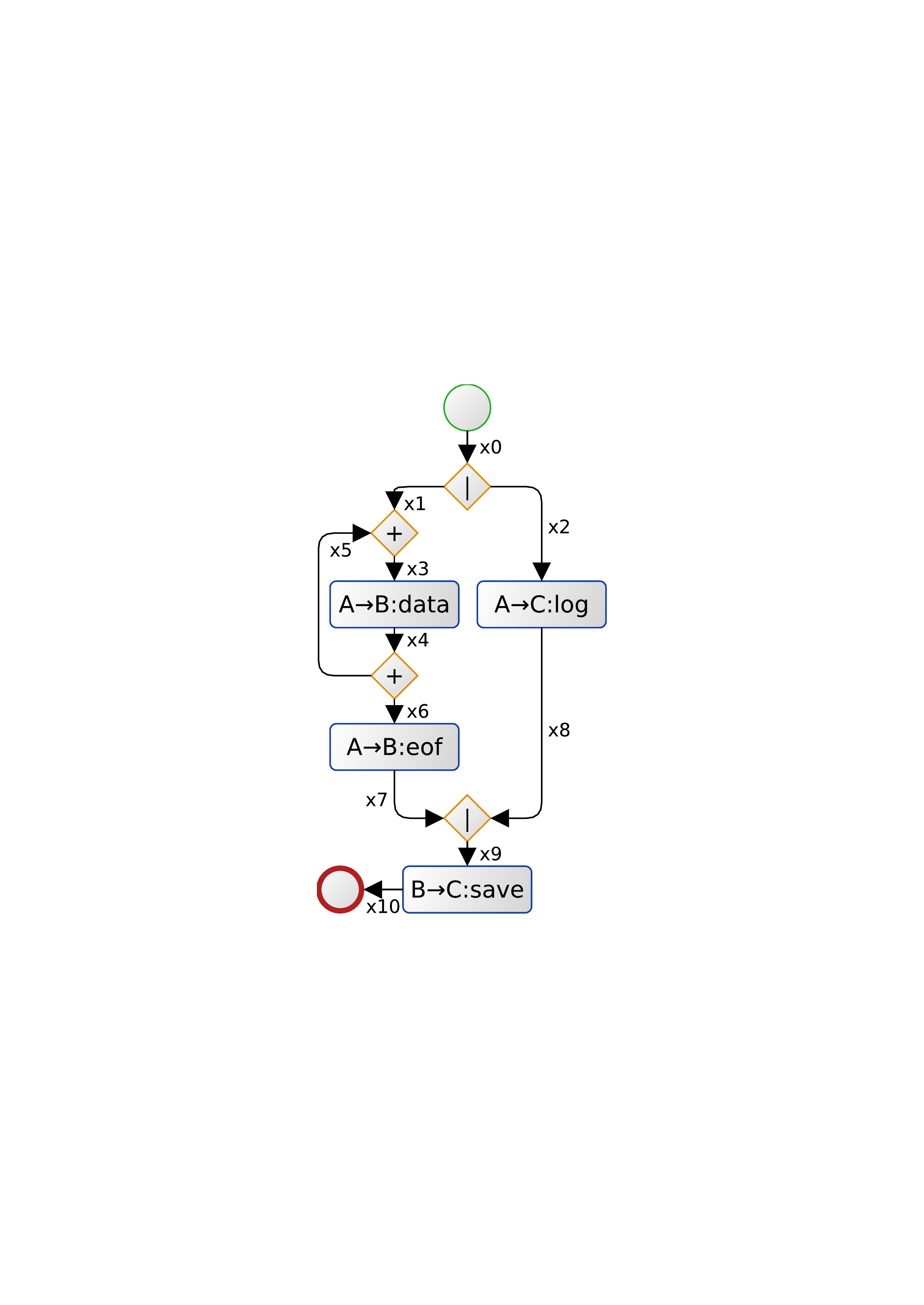}
    \end{minipage}
\end{array}
\]
\caption{Generalised global type and graph representation}
\label{fig:data-global}
\end{figure}

The prescribed interaction starts from $\xx_0$, which we call the {\em
  initial state} (in green in the graphical representation), and proceeds
according to the transitions specified in \GV\ (the diamond or boxes
operators in the picture). The {\em state variables} $\xx$ in $\GV$ (the
edges in the graph) represent the successive distributed states of the
interaction.  Transitions can be {\em message exchanges} of the form $\xx_3 =
\TO{A}{B}:\textit{data}\;;\xx_4$ where this transition specifies that $A$ can go from
$\xx_3$ to the continuation $\xx_4$ by sending message $\textit{data}$, while $B$ goes
from $\xx_3$ to $\xx_4$ by receiving it. In the graph, message exchanges
are represented by boxes with exactly one incoming and one outgoing edges.
$\xx_4 = \xx_5 + \xx_6$ represents the choice 
between continuing with $\xx_5$ or $\xx_6$ and $\xx_0 = \xx_1
\PAR \xx_2$ represents forking the interactions, allowing the
interleaving of actions at $\xx_1$ and $\xx_2$. These forking threads
are eventually collected by joining construct of the form $\xx_7 \PAR \xx_8 = \xx_9$.
Similarly choices (i.e. mutually exclusive paths) are closed by merging
construct $\xx_1 + \xx_5 = \xx_3$, where they share a continuation. Forks,
choices, joins and merges are represented by diamond ternary operators in
the graphical notation. Fork and choice have one input and two outputs,
join and merge have two inputs and one output. Fork and join use the
diamond operator with the $\PAR$ symbol, while choice and merge use a
diamond with the $+$ symbol. The $\xx_{10} = \End$ transition is
represented by a red circle.  Note that the two representations (syntax and
graph) are equivalent.

The motivation behind this choice of syntax is to support general control
flows, as classical global type syntax tree, even with added operators fork
$\PAR$ and choice $+$~\cite{CastagnaDP11,BettiniCDLDY08,CHY07,DY11}, is
limited to series-parallel control flow graphs.

\myparagraph{Generalised local types}
As for global types, a local type $\TT$ follows a shape of 
a state machine-like definition: local types are of the form $\defk\ \TV \ \In\
\xx_0$.  The different 
actions include send ($\pp!a$ is the action of sending
to $\pp$ a message $a$), 
receive ($\pp?a$ is the action of receiving 
from $\pp$ a message $a$), fork, internal choice,
external choice, join, merge, indirection and end.  Note that merge is used
for both internal and external choices. Similarly to global types, an obvious
graphical representation exists.

\iflong
\begin{figure}[t]
\fi
\iflong%
\mycaption{Generalised Local Types}
\label{fig:localtypes}
\fi%
{\small 
$$
\begin{array}{l}
\begin{array}{@{}r@{\,}c@{\,}ll@{\,}r@{\ }c@{\ }ll}
  \TT & \grmeq & \defk\ \TV \  \In\  \xx &
   \text{local type}  & \\
  T &\grmeq & \xx = \pp!a.{\xx'} & \text{send}&
  &\grmor & \xx = \xx' \oplus \xx'' & \text{internal choice}\\
  &\grmor & \xx = \pp?a.{\xx'} & \text{receive} & 
  &\grmor & \xx = \xx' \OR \xx'' & \text{external choice} \\ 
  &\grmor & \xx = \xx' \PAR \xx'' & \text{fork} &
  &\grmor & \xx + \xx' = \xx'' & \text{merge}\\
  &\grmor & \xx \PAR \xx' = \xx'' & \text{join} &
  &\grmor & \xx = \xx' & \text{indirection} \\
  &\grmor & \xx = \End & \text{end}
\end{array}
\end{array}
$$}
\iflong%
\hrule
\else%
\fi%
\iflong
\end{figure}
\fi

The local types are obtained 
from the global type by successive projection to
each participant. We define the
projection of a well-formed global type $\GG$ to the local type of participant
$\p$ (written $\GG\proj\p$). 
The projection is 
\ifpopllong
given in Appendix \ref{app:graph} 
\else
omitted 
\fi
because it is straightforward: 
for example, $\xx = \TO{p}{q}:a\;;\xx'$ is projected to 
the output $\xx = \SEND{\p'}{a}.\xx'$ 
from $\pp$'s viewpoint 
and an input 
$ \xx = \RECV{\p}{a}.\xx'$ 
from $\qq$'s viewpoint; 
otherwise it creates an indirection link from $\xx$ to
$\xx'$. Choice 
$\xx = \xx' + \xx''$ is projected to the internal choice 
$\xx = \xx' \oplus \xx''$ 
if 
$\pp$ is the unique participant deciding on which
branch to choose; otherwise the projection gives an external choice 
$\xx = \xx' \& \xx''$ (\cite{DY12} gives the definition).
Forks, joins and merges are kept identical. As an
example, Figure~\ref{fig:data-local} features on the left, in
graphical notation, the result of the projection to $A$ from the global
type $\GG$ of Figure~\ref{fig:data-global}. Its structure is exactly the
same as the original global type, except for the silent transition
$\xx_9=\xx_{10}$ which is silent from the point of view of $A$ and
therefore is just elided in the local type.

\subsection{Labelled transitions of generalised global and local types} %
\label{sec:lts-general}
It is possible to define a labelled semantics for global and local types by
considering the type (whether local or global) as a state machine
specification in which each participant (or the participant, in the case of
local type) can evolve, as they would in a CFSMs. As for CFSMs and
classical multiparty session types, we keep the syntax of labels ($\ell,
\ell', ...$).

We use the following notation to keep track of local states (with
parallelism, each participant can now execute several transitions concurrently):
\[
\begin{array}{l}
  \X  \grmeq  \xx_i  \grmor  \X \PAR \X\quad \quad 
  \X[\_]  \grmeq  \_  
  \grmor  \X[\_] \PAR \X 
  \grmor  \X \PAR \X[\_] 
\end{array}
\]


\myparagraph{LTS for global types} %
We first define, for a global type $\GG=\defk\ \GV \ \In\ \xx_0$, a
transition system $\defk\ \GV \ \In\ \XXV,\wv \TRANS{\ell} \defk\ \GV \
\In\ \XXV',\wv'$, where $\XXV$ and $\XXV'$ represents a vector recording
the state of each of the participants $\XXV=\ASET{\XX_\p}_{\p\in\PSet}$ and
where $\wv$ represents the content of the communication buffers
$\ASET{w_{\q\q'}}_{\q\q'\in\PSet}$. The states for the global type
$\G=\defk\ \GV \ \In\ \xx_0$ are equipped with an equivalence relation
$\equivGV$, 
\ifpopllong
defined in Appendix \ref{app:graph:equiv},
\else
omitted here, 
\fi
which covers associativity, commutativity, forks
and joins, choices and merges. Initially,
$\XXV_0=\ASET{\xx_0}_{\p\in\PSet}$ and $\wv_0=\ASET{\NUL}_{\q\q'\in\PSet}$.
The LTS for global types is defined in Figure~\ref{fig:globallts}.

The semantics of global types, as given by the rules~\tsrule{GGR1,2},
follows the intuition of communicating systems: if the global type allows,
a participant at the right state can put a value in a communication buffer
and progress to the next state (\tsrule{GGR1}) or, if a value can be read,
a participant at the right state can consume it and proceed
(\tsrule{GGR2}). Rule~\tsrule{GGR3} allows participants that are not
concerned by a transition to go there for free. Fork, join, choice and
merge transitions are passed through silently by rule~\tsrule{GGR4}.

\begin{figure}[t]
\[
\begin{array}{c}
  \frac
  {\xx =  \TO{p}{p'}:a\;;\xx' \in \GV \quad \XX_\p = \XX[\xx] \quad w_{\p\p'}\in\wv
  }{\defk\ \GV \  \In\  \XXV,\wv\TRANS{\p\p'!a}\defk\ \GV \  \In\
    \XXV[\XX_\p\isnow\XX[\xx']],\wv[w_{\p\p'}\isnow w_{\p\p'}\cdot 
a]} \tsrule{GGR1}
  \\[5mm]
  \frac
  { \xx =  \TO{p}{p'}:a\;;\xx' \in \GV \quad \XX_{\p'}=\XX[\xx] \quad
    w_{\p\p'}\in\wv \quad w_{\p\p'}=a \cdot w_{\p\p'}'}
  {\defk\ \GV \  \In\ \XXV,\wv \TRANS{\p\p'?a}\defk\ \GV \ \In
    \ \XXV[\XX_{\p'}\isnow\XX[\xx']],\wv[w_{\p\p'}\isnow w_{\p\p'}']} \tsrule{GGR2}
  \\[5mm]
  \frac
  {\scriptsize\begin{array}{@{}c@{}}
      \xx =  \TO{p}{p'}:a\;;\xx' \in \GV \quad \XX_{\q}=\XX[\xx] \quad
    \q\not\in\ASET{\p,\p'}\\
    \defk\ \GV \  \In\  \XXV[\XX_{\q}\isnow \XX[\xx']],\wv \TRANS{\ell}
    \defk\ \GV \  \In\  \XXV',\wv' 
  \end{array}} 
  {\defk\ \GV \  \In\  \XXV,\wv \TRANS{\ell}\defk\ \GV \  \In\
    \XXV',\wv'} \tsrule{GGR3}
  \\[3mm]
  \frac
  {    \XX_{\p}=\XX \quad \XX\equivGV\XX' \quad
    \defk\ \GV \  \In\  \XXV[\XX_{\p}\isnow \XX'],\wv \TRANS{\ell} \defk
    \ \GV \  \In\  \XXV',\wv'} 
  {\defk\ \GV \  \In\  \XXV,\wv \TRANS{\ell}\defk\ \GV \  \In
    \ \XXV',\wv'} \tsrule{GGR4}
\end{array}
\]
\caption{Global LTS}\label{fig:globallts}
\end{figure}

\myparagraph{LTS for local types} %
We define in Figure~\ref{fig:locallts} a transition system $\TTV,\wv
\TRANS{\ell} \TTV',\wv'$, where $\TTV$ represents a set of local types
$\ASET{\defk\ \TV \ \In\ \XXV_\p}_{\p\in\PSet}$ and $\wv$ represents the content of the
communication buffers $\ASET{w_{\q\q'}}_{\q\q'\in\PSet}$. Initially,
$\TTV_0$ sets all the local types to $\xx_0$ and
$\wv_0=\ASET{\NUL}_{\q\q'\in\PSet}$. The principle is strictly identical to
the LTS for global types, with, again, an omitted structural equivalence
$\equivTV$ between local states.

\begin{figure}[t]
  \[ \begin{array}{c}
    \frac
    {\xx = \pp'!a.{\xx'}\in \TV \quad \TT_\p=\defk\ \TV \ \In\  \X[\xx] \quad w_{\p\p'}\in\wv
    }{\TTV,\wv
      \TRANS{\p\p'!a}\TTV[\TT_\p\isnow\defk\ \TV \ \In\ \X[\xx']],\wv[w_{\p\p'}\isnow
      w_{\p\p'}\cdot \p\p'!a]} \tsrule{GLR1}
    \\[5mm]
    \frac
    { \xx =  \pp'?a.\xx' \in \TV \quad \TT_{\p'}=\defk\ \TV \  \In\ \XX[\xx] \quad
      w_{\p\p'}\in\wv \quad w_{\p\p'}=\p\p'!a \cdot w_{\p\p'}'}
    {\TTV,\wv \TRANS{\p\p'?a}\TTV[\TT_{\p'}\isnow \defk\ \TV \ \In
      \ \XX[\xx']],\wv[w_{\p\p'}\isnow w_{\p\p'}']}  \tsrule{GLR2}
    \\[3mm]
  \frac
  {    \TT_\p=\defk\ \TV \ \In\  \XX \quad \XX\equivTV\XX' \quad 
    \TV[\TT_\p\isnow \defk\ \TV \  \In\  \XX'],\wv \TRANS{\ell} \TTV',\wv'} 
  {\TV,\wv \TRANS{\ell}\TV',\wv'}  \tsrule{GLR3}
  \end{array}
  \]
\caption{Local LTS}\label{fig:locallts}
\end{figure}

\myparagraph{Equivalence between generalised local and global types}

Given the similarity in principle between the global and local LTSs, and
considering that the projection algorithm for generalised global types is
quasi-homomorphic, we can easily get the trace equivalence between the
local and global semantics.

\begin{theorem}[soundness and completeness of projection]
\label{thm:general:lts}
If $\vec{\TT}$ is the projection of a global type $\GG$ to all roles, then
$\GG \WB (\vec{\TT},\NUL)$.
\end{theorem}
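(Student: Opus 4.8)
The plan is to follow the strategy used for Theorem~\ref{thm:lts}, establishing a step-by-step correspondence between the global LTS of Figure~\ref{fig:globallts} and the local LTS of Figure~\ref{fig:locallts}, and then reading off trace equivalence. The fact to exploit is that projection of a generalised global type is \emph{quasi-homomorphic}: a message transition $\xx = \TO{p}{p'}:a\;;\xx'$ is projected to the corresponding send at $\pp$, to the corresponding receive at $\pp'$, and to a silent indirection link $\xx = \xx'$ at every other participant, while fork, join, choice and merge transitions are kept structurally identical (a choice becoming an internal $\oplus$ or external $\OR$ choice according to the deciding participant). Hence the graph of state variables underlying $\GG$ is, up to silent indirections, the very object from which each projected local type $\GG\proj\p$ is built.

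First I would define a relation $R$ pairing a global configuration $\defk\ \GV \ \In\ \XXV,\wv$ with the local configuration $\vec{\TT},\wv$ in which each participant $\p$ runs its projection $\GG\proj\p$ poised at the same local state $\XX_\p$, sharing the same buffers $\wv$. Since both transition systems range over the identical per-participant state vector $\XXV$ and the identical buffer component $\wv$, the relation $R$ is just the graph of projection lifted to configurations. The two initial configurations---all participants at $\xx_0$, all buffers empty---are immediately $R$-related, matching $\XXV_0$ with $\vec{\TT}_0$.

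Next I would show that $R$ is preserved by matching transitions, by induction on the derivation of each labelled step, aligning the rules pairwise. Rules \tsrule{GGR1} and \tsrule{GGR2} correspond directly to \tsrule{GLR1} and \tsrule{GLR2}: by definition of projection the send (resp.~receive) enabled in $\GG$ at $\xx$ for $\pp$ (resp.~$\pp'$) is exactly the one enabled in $\GG\proj\pp$ (resp.~$\GG\proj{\pp'}$), and both sides update $\XX_\p$ and the buffer $w_{\p\p'}$ identically. The two administrative rules \tsrule{GGR3} and \tsrule{GGR4}, which on the global side let an uninvolved participant $\q\notin\ASET{\pp,\pp'}$ advance past a message (via the link $\xx=\xx'$) or rewrite its state by $\equivGV$ before the labelled action fires, are matched by the single local administrative rule \tsrule{GLR3}: the silent indirection $\xx=\xx'$ that projection inserts at $\q$ is folded into indirection-following and $\equivTV$, and $\equivGV$ restricts participant-by-participant to $\equivTV$ precisely because fork, join, choice and merge are preserved by projection.

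The main obstacle will be exactly this alignment of the administrative closure: showing that advancing uninvolved participants by \tsrule{GGR3} globally and by the indirection/$\equivTV$ step locally produce matching realignments of $\XXV$, and that neither introduces nor hides any visible label. In the tree-shaped classical setting of Theorem~\ref{thm:lts} projection was homomorphic and the only complication was merging, whereas here a single message can force silent indirection moves at arbitrarily many participants, and these interleave with the commutation permitted by fork/join and the collapsing permitted by choice/merge. I would therefore have to prove that the closure of a local configuration under indirection links and $\equivTV$ matches exactly the closure of the corresponding global configuration under \tsrule{GGR3} and $\equivGV$, so that $R$ is stable under all administrative rewriting. With that established, the step-by-step correspondence yields $\GG \WB (\vec{\TT},\NUL)$, since $R$ relates the initial configurations and preserves the set of enabled visible labels at every step.
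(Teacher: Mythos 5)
Your proposal is correct and follows essentially the same route as the paper: the paper offers no detailed proof of Theorem~\ref{thm:general:lts}, justifying it only by the remark that the two LTSs are similar in principle and that projection of generalised global types is quasi-homomorphic, which is precisely the observation your configuration relation $R$ and rule-by-rule matching (\tsrule{GGR1,2} against \tsrule{GLR1,2}, and the administrative closure under \tsrule{GGR3,4}/$\equivGV$ against indirection and $\equivTV$ via \tsrule{GLR3}) make explicit. Your identification of the administrative-closure alignment as the only real work to be done is accurate, and is simpler here than in Theorem~\ref{thm:lts} because the generalised projection keeps the state-variable graph intact rather than merging branches, so no subtyping relation is needed.
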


\subsection{Translations between general local types and CFSMs}
\label{sec:translationtolocal}
Now that we have proved the equivalence from global to local types, we
establish the conversion of local types to and from CFSMs.

\myparagraph{Translation to CFSMs} %

We first give the already known translation from local types to
CFSMs~\cite{DY12}. The illustration of that translation on the Data
transfer example is given on the top-right corner of
Figure~\ref{fig:data-local}.

\begin{definition}[translation from local types to CFSMs~\cite{DY12}]
\label{def:graph:translation}
\rm If $\TT = \defk\ \TV \ \In\ \xx_0$ is the local type of participant $\pp$
  projected from $\GG$, then the corresponding automaton is
  $\A(\TT)=(Q,C,q_0,{\Bbb A},\delta)$ where:
\begin{itemize}
\item $Q$ is defined as 
the set of well-formed states $\X$
built from the state variables $\{ \xx_i \}$ of $\TT$.  
%
%
$Q$ is defined up to the equivalence relation $\equivTV$ mentioned
in \S~\ref{sec:lts-general}.


\item $C = \{ \p\q \PAR \p,\q \in \GG\}$
\item $q_0= \xx_0$
\item $\Sigma$ is the set of $\{a \in \GG\}$ 
\item $\delta$ is defined by:
  \begin{itemize}
  \item $(\X[\xx],(\p\p'!a),\X[\xx'])\in \delta$ if $\xx =
    \p'!a.{\xx'} \in \TV$.
  \item $(\X[\xx],(\p'\p?a),\X[\xx'])\in \delta$ if $\xx = \pp'?a.{\xx'}
    \in \TV$. 
  \end{itemize}
\end{itemize}
\end{definition}

\begin{figure}

\[
\begin{array}{@{}l@{\hspace{-2em}}l@{}}
    \begin{minipage}{12.5em}
      \includegraphics[trim=6cm 9cm 5cm 11cm,width=12.5em]{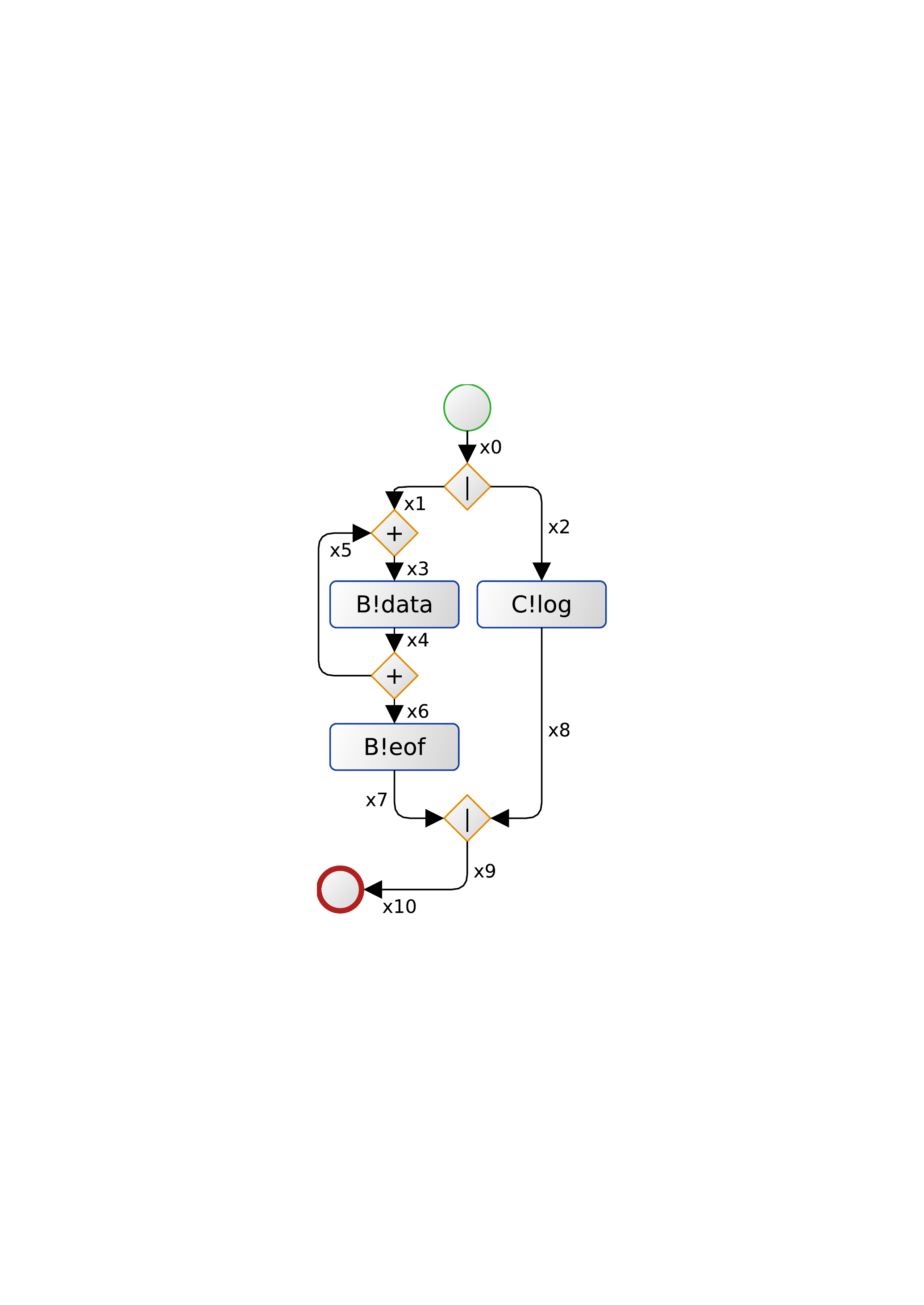}

      \smallskip
      \begin{center}
         General local type for A
       \end{center}
     \end{minipage} &
    \begin{array}{@{}c@{}}
      \xymatrix@C=4em@R=4ex{
        *++[o][F=]{} \ar[r]_{AC!\text{log}} \ar[d]_{AB!\text{data}}
        & *++[o][F]{}\ar[d]_{AB!\text{data}}\\
        *++[o][F]{}\ar[d]_{AB!\text{eof}}\ar[r]_{AC!\text{log}}  \ar@(lu,ld)[]_{AB!\text{data}}
        & *++[o][F]{}\ar[d]_{AB!\text{eof}}  \ar@(ru,rd)[]^{AB!\text{data}}\\
        *++[o][F]{}\ar[r]_{AC!\text{log}} 
        & *++[o][F]{} \\
      }\\[1.5ex]
      \text{CFSM}\\
    \begin{minipage}{12.5em}
      \includegraphics[trim=6.5cm 12cm 6.5cm 11.5cm,width=12.5em]{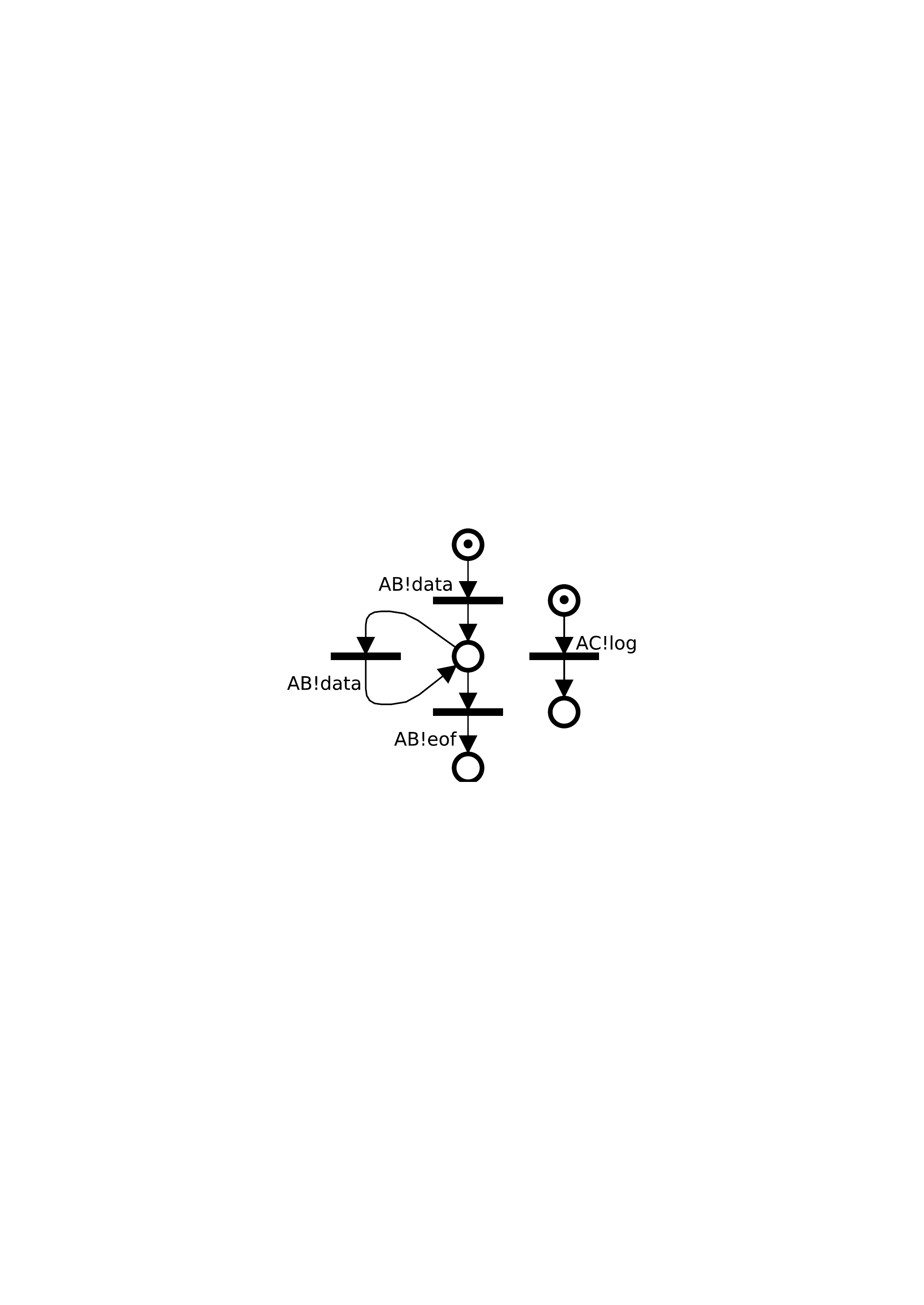}

      \smallskip
      \begin{center}
         Inferred labelled Petri net
       \end{center}
    \end{minipage}
  \end{array}
\end{array}
\]
\caption{Data transfer example: local translations}\label{fig:data-local}
\end{figure}

\myparagraph{Translations from CFSMs} %
The converse translation is not as obvious as local types feature explicit
forks and joins, while CFSMs only propose choices between interleaved
sequences. The translation from a CFSM to a local type therefore comes in 3
steps.

First, we apply a generic translation from minimised CFSMs to Petri
nets~\cite{nielsen1992elementary,cortadella1998deriving}. This translation
relies on the polynomial computation of the graph of
regions~\cite{badouel1998theory}, preserves the trace semantics of the CFSM
and, by the minimality of the produced net, makes the concurrency
explicit. 
Figure~\ref{fig:data-local} illustrates on the Data transfer example the shape
of the Petri net that can be produced by such a generic translation. Note
that the produced Petri net is always safe and free choice.

The second step of the conversion is to take the Petri net with labelled
transitions and enrich it with new silent transitions and new places so
that it can be translated into local types. Notably, it should have only
one initial marked place, one final place and all labelled transitions
should have exactly one incoming and one outgoing arc. Then, we constrain
all transitions to be linked with no more than 3 arcs (2 incoming and 1
outgoing for a join transition, or 1 incoming and 2 outgoing for a fork
transition, 1 incoming and 1 outgoing for all the other
transitions). Places should have no more that 2 incoming and 2 outgoing
arcs: if there are two incoming (merge), then the transitions they come
from should only have one incoming arc each; if there are 2 outgoing
(choice), then the transitions they lead to should have only one outgoing
arc each.

In the end, the translation to local type is simple, as each place
corresponds to a state variable $\xx$, and the different local type
transitions can be simply identified. For the lightness of the
presentation, instead of defining formally this last step, we describe the
converse translation. From it, it is possible to infer the local type
generation. 

\begin{definition}[Petri net representation]\label{def:petri-net-local}\rm
Given a local type $\TT=\defk\ \TV \ \In\ \xx_0$, we define the Petri net
$\PN(\TT)$ by:
\begin{itemize}
\item Each state variable $\xx\in\TV$ is a place in $\PN(\TT)$.
\item All the places are initially empty, except for one token in $\xx_0$.
\item Transitions in $\TV$ are translated as follows:
  \begin{itemize}
  \item If $\xx=\p!a.\xx'\in \TV$ then there is a
    transition labelled in $\PN(\TT)$, whose unique input arc comes from $\xx$ and
    whose unique output arc goes to $\xx'$.
  \item If $\xx= \p?a.\xx'\in \TV$ then their is a
    transition in $\PN(\TT)$, whose unique input arc comes from $\xx$ and
    whose unique output arc goes to $\xx'$.
  \item If $\xx_1=\xx_2\PAR\xx_3\in \TV$ then there is a transition in
    $\PN(\TT)$, whose unique input arc comes from $\xx_1$ and whose two
    outputs arcs go to $\xx_2$ and $\xx_3$.
  \item If $\xx_1=\xx_2+\xx_3\in \TV$ (internal or external choice) then
    there are two transitions in $\PN(\TT)$, that each have an input arc
    from $\xx_1$ and that respectively have an output arc to $\xx_2$ and
    $\xx_3$. 
  \item If $\xx_1+\xx_2=\xx_3\in \TV$ then there are two transitions in
    $\PN(\TT)$, that respectively have an input arc from $\xx_1$ and $\xx_2$
    and that both have an output arc to $\xx_3$.
  \item If $\xx_1\PAR\xx_2=\xx_3\in \TV$ then there is a transition in
    $\PN(\TT)$, whose two input arcs respectively come from $\xx_1$ and
    $\xx_2$ and whose unique output arc goes to $\xx_3$.
  \end{itemize}
\end{itemize}
\end{definition}
The idea of the translation back from a Petri net to a local type is
to identify the transitions and place patterns and convert them into local
type transitions.

Note that, in Figure~\ref{fig:data-local}, the inferred Petri Net will not
give back the local type on the left: in the general case, going through
the translation from local type to CFSM and then back to local type will
only give an isomorphic local type. The traces are of course preserved.



\subsection{Parallelism and local choice condition}
\label{subsec:par_localchoice}
This subsection introduces the conditions that CFSMs should respect in
order to correspond to well-formed local types projected from generalised
global types. It extends the conditions that were sufficient for classical
multiparty session types for two reasons. First, we now have concurrent
interactions and the no-mixed choice condition does not hold
anymore. Second, the well-formedness condition corresponding to
projectability in classical multiparty session types needs to take into
account the complex control flows of generalised multiparty session types. 


We start by a commutativity condition for mixed states in CFSMs: a state is
mixed parallel if any send transition satisfies the diamond property with
any receive transition. Formally:
 
\begin{definition}[mixed parallel]
  \rm Let $M=(Q,\chanset,q_0,\ASigma,\delta)$.  We say local state $q$ in
  $M$ is {\em mixed parallel} if for all
  $(q,\ell_1,q_1'),(q,\ell_2,q_2')\in \delta$ such that $\ell_1$ is a send
  and $\ell_2$ is a receive we have $(q_1',\ell_2,q'),(q_2',\ell_1,q') \in
  \delta$ for some $q'$.
\end{definition}


Next, we introduce two conditions for the choice that are akin to the local
choice conditions with additional data of~\cite[Def.~2]{GenestMP03} or the
``knowledge of choice'' conditions of~\cite{CastagnaDP11}. 

\begin{definition}[local choice condition]\rm
\label{def:localchoice}
\begin{enumerate}
\item 
The set of {\em receivers} of transitions 
$s_1 \TRANSS{t_1\cdots t_m} s_{m+1}$ 
is defined as $\Rcv(t_1\cdots t_m)=\ASET{ \q \PAR 
\exists i\leq  m, t_i= (s_{i},\p\q?a,s_{i+1})}$. 
\item 
The set of {\em active senders} are defined as 
$\ASend(t_1\cdots t_m)=
\ASET{\p \PAR \exists i\leq m, t_i=
(s_{i},\p\q!a,s_{i+1}) \wedge \forall k<i. \ t_k \not =
(s_{k},\p'\p?b,\s_{k+1})}$ 
and represent the participants who could immediately send from state $s_1$.

\item Suppose $s_0 \TRANSS{\varphi} s$ and $\varphi=\varphi_0 \cdot t_1 \cdot 
\varphi_1 \cdot t_2 \cdot \varphi_2$. 
We write $t_1 \triangleleft t_2$ ($t_2$ depends on $t_1$) if 
either (1) $\Phi(\act(t_2))=\act(t_1)$ or (2) $\subj(t_1)=\subj(t_2)$ 
unless $t_1$ and $t_2$ are parallel. 

\item We say $\varphi=t_0\cdot t_1 \cdot t_2 \cdots t_n$ is {\em the causal 
chain} if $s_0 \TRANSS{\varphi'} s'$ and $\varphi\subseteq \varphi'$ 
with, for all $0 \leq k \leq n-1$, there exists $i$ such that $i> k$ 
and $t_k \triangleleft t_i$.    


\item $S$ satisfies the {\em receiver property} if,   
for all $s\in \RS(S)$ and $s\TRANSS{t_1} s_1$ and 
$s\TRANSS{t_2} s_2$ with $\act(t_i) = \p\q_i!a_i$,  
there exist $s_1 \TRANSS{\varphi_1}s_1'$ and $s_2 \TRANSS{\varphi_2}s_2'$  
such that $\Rcv(\varphi_1)=\Rcv(\varphi_2)$.

\item $S$ satisfies the {\em unique sender property}
if $s_0 \TRANSS{\varphi_1}s_1 \TRANSS{t_1} s_1'$ and 
$s_0 \TRANSS{\varphi_2}s_2 \TRANSS{t_2} s_2'$, 
with $\act(t_1)=\p_1\p?a_1$, $\act(t_2)=\p_2\p?a_2$, $a_1 \not
= a_2$, $\neg t_1 \triangleleft t_2$ and $\neg t_2 \triangleleft t_1$, 
and $\varphi_i\cdot t_i$ the maximum causal chain. 
Then $\ASend(\varphi_1\cdot t_1)=\ASend(\varphi_2\cdot t_2) = \ASET{\q}$.
\end{enumerate}
\end{definition}
Together with multiparty compatibility, the receiver property 
ensures deadlock-freedom while the unique sender property 
guarantees orphan message-freedom. 

\begin{proposition}[stability]
\label{thm:graph:stable}
  Suppose $S=\ASET{M_\p}_{\p \in \PSet}$ and each $M_\p$ is
deterministic. If (1)
$S$ is multiparty compatible; (2) each mixed state in $S$ is mixed
parallel; and (3) for any local state that can do two receive
transitions, either they commute (satisfy the diamond property) or 
the state satisfies the unique sender condition, then $S$ is stable and
satisfies the reception 
error freedom and orphan message-freedom properties. 
\end{proposition}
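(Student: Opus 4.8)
The plan is to reproduce, for the generalised setting, the three-step architecture that established the basic case (Theorem~\ref{pro:cmsa:safety}): first a commutation/diamond analysis of configurations, then an input-availability lemma driven by multiparty compatibility, and finally the \emph{stable} property, from which reception error-freedom and orphan message-freedom follow as corollaries. The essential new difficulty is that the ``directed'' and ``no mixed state'' hypotheses of the basic development are gone, so the diamond analysis of Lemma~\ref{lem:MSAdiamond} must be rebuilt using hypotheses (2) and (3) in their place.

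First I would prove the adapted diamond lemma. Given $s\TRANSS{t_1}s_1$ and $s\TRANSS{t_2}s_2$ with $t_1\neq t_2$, I case-split on the polarities of $t_1,t_2$. Two sends on disjoint channels---arising either from distinct participants or from parallel threads of one participant---commute by the FIFO semantics exactly as before, whereas genuine internal-choice alternatives (same sender, incompatible messages) are, as in the basic case, \emph{not} required to commute. The two genuinely new cases are (a) a send and a receive enabled at the same mixed state, which commute precisely because hypothesis~(2) forces that state to be \emph{mixed parallel}, and (b) two receives at one state, which by hypothesis~(3) either satisfy the diamond property directly or are covered by the unique-sender condition. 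This yields a one-step diamond property for every commuting pair, after which the maximum-causality and output--input-dependency statements (the analogues of Lemmas~\ref{lem:causality} and~\ref{lem:OI}) transfer, now using the refined dependency $\triangleleft$ of Definition~\ref{def:localchoice}, which already excludes parallel transitions.

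Next I would re-establish input availability: for every reachable $s$ with $s\TRANSS{\p\q!a}s'$ there is a continuation $s'\TRANSS{\varphi}s_2\TRANSS{\p\q?a}s_3$. The skeleton of Lemma~\ref{lem:cmsa:inputavailable} is reused---tracing the maximum causal chain from the offending output to a blocking input and deriving a contradiction through multiparty compatibility (Definition~\ref{def:multipartycompatible})---but the two local-choice properties of Definition~\ref{def:localchoice} enter exactly here. The \emph{receiver property} guarantees that the branches of a send-choice eventually service the same set of receivers, ruling out a branch in which $\p\q?a$ is permanently absent, and the \emph{unique sender property} guarantees that when a receiver faces incompatible inputs they originate from a single $\ASend$, so that compatibility forces the matching receive to become reachable. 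With input availability in hand, the inductive stability argument of the appendix proof of Theorem~\ref{pro:cmsa:safety} carries over: by induction on the number of outstanding (unmatched) sends, each such message is permuted to its matching receive using the diamond property, producing a $1$-bounded execution from $s$ to a stable $s'$.

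Finally, stability yields the two claimed safety properties directly: from any reachable $s$ one can drive the system to a stable configuration, so no buffer content is permanently blocked (reception error-freedom) and no message can be stranded once all machines are final (orphan message-freedom); note that deadlock-freedom is deliberately \emph{not} asserted here. The main obstacle I expect is the combination of the adapted diamond lemma and the input-availability proof: reconciling genuine concurrency (where transitions must be shown to commute) with sequential choice (where they must not) is delicate, and it is precisely at these two points that hypotheses (2) and (3), together with the receiver and unique-sender conditions, must replace the directedness and no-mixed-state arguments that are no longer available.
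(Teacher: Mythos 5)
Your architecture matches the paper's: the proof is obtained by rerunning the stability argument of Theorem~\ref{pro:cmsa:safety}, with the commutation/diamond analysis repaired by the mixed-parallel condition in place of ``directed, no mixed states'', and the input-availability step repaired by hypothesis (3); the paper's own proof says exactly this and spells out only the input-availability patch (if the matching receive $\overline{t_1}$ were blocked, the blocking state of the receiver would have to be a receive state with two incompatible, non-commuting inputs, and the unique-sender condition then forces the missing $\p\q?a$ to occur inside the interfering causal chain, a contradiction).

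There is one genuine flaw: you invoke the \emph{receiver property} in your input-availability argument, but the receiver property is not among the hypotheses of this proposition --- it is only added later, in Theorem~\ref{thm:graph:df}, to obtain deadlock-freedom. The paper's stability proof does not need it: once $\p\q!a$ has been fired, the sender's branch is already committed, so the only obstruction to $\p\q?a$ is a receive state of $\q$ offering two causally unrelated, incompatible inputs, and that is precisely the case excluded by hypothesis (3) together with multiparty compatibility. If your argument genuinely relied on the receiver property at that point, it would only establish stability for session-compatible systems rather than for all systems satisfying (1)--(3); you should excise that appeal and carry the input-availability step on the unique-sender condition alone, as the paper does. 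The rest of your plan --- the adapted diamond lemma, the induction on outstanding sends, and deriving the two safety properties from stability --- is faithful to the paper's proof.
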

\begin{proof}
The proof is similar to Proposition \ref{pro:core:stable}, noting that 
the unique sender condition guarantees the input availability. 
See 
\ifpopllong
Appendix \ref{app:graph}.  
\else 
\cite{fullversion}.
\fi
\QED 
\end{proof}

\begin{theorem}[deadlock-freedom]
\label{thm:graph:df}
Suppose $S=\ASET{M_\p}_{\p \in \PSet}$ 
satisfies the 
same conditions as Proposition \ref{thm:graph:stable}. 
Assume, in addition, that $S$ satisfies the receiver condition. 
Then $S$ is deadlock-free. 
\end{theorem}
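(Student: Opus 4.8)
The plan is to mirror the classical deadlock-freedom argument inside the proof of Theorem~\ref{pro:cmsa:safety} and to pinpoint the single place where the richer control flow of generalised types forces the extra hypothesis. First I would reduce the claim to one-bounded stable configurations. By Proposition~\ref{thm:graph:stable} the system $S$ is stable, so from every $s\in\RS(S)$ there is a one-bounded execution reaching a stable configuration. A deadlock configuration has all buffers empty and is therefore already stable; since no transition is enabled from it, the witnessing execution of the stable property must be empty and the configuration coincides with the stable state it reaches. Consequently every reachable deadlock already lies in $\RS_1(S)$, and it suffices to prove that no $s\in\RS_1(S)$ is a deadlock. This is also exactly the set over which multiparty compatibility (Definition~\ref{def:multipartycompatible}) is quantified, so the hypothesis is directly available at $s$.

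Next I would argue by contradiction. Assume $s\in\RS_1(S)$ is a deadlock: $s$ is not final, all its buffers are empty, and every local state $\XX_\p$ is a receiving state. Since $s$ is not final, some $M_i$ sits in a non-final receiving state and hence offers an enabled receive transition $t$ with $\act(t)=\q i?a$ for some $\q$. Treating $\act(t)$ as a length-one action sequence issued by $M_i$ from $s$, multiparty compatibility supplies an execution $\varphi_1\cdot t_1$ of the CFSM associated with $S^{-i}$, where $\varphi_1$ is empty or an alternation, $i\notin\act(\varphi_1)$, and $\act(t_1)=\Phi(\act(t))=\q i!a$. The first transition of $\varphi_1\cdot t_1$ is necessarily a send -- the opening send of the alternation $\varphi_1$, or $t_1$ itself when $\varphi_1=\NUL$ -- and, being executable from $s$, it requires some participant of $S^{-i}$ to be in a state enabling a send. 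This contradicts the assumption that every local state in $s$ is receiving, which closes the core case.

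The delicate point, and where I expect the receiver hypothesis to be essential, is that the send delivered by compatibility must belong to the branch the system has actually committed to at $s$, not to a sibling branch of some earlier choice. In the generalised calculus a single sender may choose between emissions aimed at \emph{distinct} receivers, so a participant waiting at $s$ could a priori expect a message that the chosen branch never produces -- a situation invisible to a per-participant compatibility check. The receiver property of Definition~\ref{def:localchoice}(5) rules this out: it guarantees that any two send-branches of a common sender converge, after internal progress, on the same receiver set $\Rcv(\cdot)$, i.e. every participant eventually acquires knowledge of the choice. I would use it to show that the waiting receiver of $M_i$ is indeed scheduled to be served in the committed branch, so that the dual send produced by compatibility is genuinely enabled from $s$ and the contradiction of the previous paragraph applies branch by branch.

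The main obstacle is the bookkeeping imposed by parallelism and choice in this last step: unlike the \emph{basic} systems of \S~\ref{sec:cmsa}, here local states may be mixed parallel and a participant may run several concurrent threads, so ``all local states are receiving'' no longer automatically means ``no progress is possible''. I would discharge this by replaying, thread by thread, the commutation (diamond) properties -- in the spirit of Lemma~\ref{lem:MSAdiamond}, but now justified by the mixed-parallel hypothesis and by condition~(3) of Proposition~\ref{thm:graph:stable} -- to permute away the silent fork, join and merge steps and reduce each blocked thread to the single-sequence situation handled above. Once each thread is reduced in this way, the receiver property fixes the committed branch and the compatibility argument yields an enabled send, contradicting the deadlock and establishing deadlock-freedom.
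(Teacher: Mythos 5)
Your core argument (the first two paragraphs) is valid and is, in essence, the basic-case argument of Theorem~\ref{pro:cmsa:safety} transplanted to the generalised setting: a deadlock configuration has empty buffers and no enabled transition, so by the stable property it must itself lie in $\RS_1(S)$ and be stable; multiparty compatibility then demands, for any pending receive of $M_i$, a matching sequence in $S^{-i}$ whose first transition is a send, and a machine offering a send is by definition not in a receiving state. This closes the \emph{literal} statement --- Definition~\ref{def:safetyliveness}'s notion of deadlock, in which every machine is simultaneously blocked --- and, notably, it never uses the receiver condition. The paper's proof (Appendix~\ref{app:graph:df}) takes a genuinely different route: it reduces deadlock-freedom to the stronger per-participant progress property ``no input waits forever on an empty queue'' and proves \emph{that}, via a case analysis on the 1-bounded execution serving the input, the diamond properties, and crucially the receiver condition of Definition~\ref{def:localchoice}(5). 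The receiver condition is what rules out a sender committing to a branch that never serves one particular waiting receiver --- a situation that starves that receiver without ever producing a global deadlock, and which your argument therefore never needs to confront. Your approach buys simplicity and makes explicit that global deadlock-freedom already follows from stability plus multiparty compatibility; the paper's buys the stronger progress guarantee that the extra hypothesis is really there to secure.

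Two caveats. First, your last two paragraphs are promissory rather than demonstrative (``I would use it to show\ldots'', ``I would discharge this by\ldots''): they contain no actual argument, and for the literal statement they are also unnecessary, since your contradiction needs only the \emph{existence} of an enabled send somewhere in $S^{-i}$, not that this send serves the particular waiting receiver or belongs to any ``committed branch''. If the theorem is read as asserting the stronger property the paper actually proves, then your proof has a genuine gap exactly where those paragraphs stop short. Second, your worry that ``all local states are receiving'' might no longer preclude progress under parallelism is unfounded: a mixed parallel state is still a mixed state, hence not a receiving state, so a deadlock configuration by definition excludes every enabled send, and empty buffers exclude every enabled receive; no thread-by-thread commutation argument is needed there.
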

\begin{proof}
  We deduce this theorem from the stability property and the receiver
  condition. The proof uses a similar reasoning
  as Proposition~\ref{pro:cmsa:safety}. \QED \\
\end{proof}

We call the systems that satisfy the conditions of
Theorem~\ref{thm:graph:df} {\em session-compatible}.

By the same algorithm, the multiparty compatibility property is decidable
for systems of deterministic CFSMs. It is however undecidable to check the
receiver and unique sender properties in general. On the other hand, once
multiparty compatibility is assumed, we can restrict the checks to
1-bounded executions (i.e.~we limit $\varphi_1$, $\varphi_2$, $\varphi_1'$
and $\varphi_2'$ to 1-bounded executions and $\RS_1(S)$ in Definition
\ref{def:localchoice}). Then these properties become decidable.
Combining the synthesis algorithm defined below, we can decide 
a subset of CFSMs which can build a general, well-formed global type.

\subsection{Synthesis of general multiparty session automata}
Now all the pieces are in place for the main results of this paper. We are
able to identify the class of communicating systems that correspond to
generalised multiparty session types.

The main theorems in this section follow:

\begin{theorem}[synthesis of general systems] \label{thm:gen:syn}
Suppose $S=\ASET{M_\p}_{\p\in \PSet}$ is a session-compatible system.
Then there is an algorithm which builds $\GG$ such that $S \WB \GG$. 
\end{theorem}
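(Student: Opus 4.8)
The plan is to follow the classical synthesis of Theorem~\ref{thm:classical:syn}, but to route the construction through the Petri net intermediate representation of \S\ref{sec:translationtolocal}, which is what recovers the explicit fork/join and choice/merge structure that a flat syntax tree cannot express. First I would restrict attention to the $1$-bounded behaviour. Since $S$ is session-compatible, Proposition~\ref{thm:graph:stable} gives stability, so every reachable configuration can be driven to a stable one by a $1$-bounded execution; hence the finite set $\RS_1(S)$, together with its transitions, carries all the trace information of $S$ up to the equivalences. In a $1$-bounded stable exploration each send $\p\q!a$ is immediately matched by its receive $\p\q?a$, so I would build a finite \emph{stable-configuration graph} whose nodes are the stable states of $\RS_1(S)$ and whose edges are atomic message exchanges $\TO{p}{q}:a$, and treat it as a minimised recogniser of these exchanges.

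Second, I would apply the generic region-based translation from minimised automata to safe, free-choice labelled Petri nets~\cite{nielsen1992elementary,cortadella1998deriving,badouel1998theory}. This step preserves the trace semantics and, by minimality of the synthesised net, makes the concurrency that was merely interleaved in $\RS_1(S)$ explicit as genuine net concurrency. I would then normalise the net exactly as in \S\ref{sec:translationtolocal}: insert silent transitions and auxiliary places so that there is a single initial and a single final place, every labelled transition has one incoming and one outgoing arc, and the $3$-arc (fork/join) and $2$-in/$2$-out (choice/merge) constraints hold. This normal form is precisely the image of the representation $\PN(\cdot)$ of Definition~\ref{def:petri-net-local} lifted to the global level, so reading it back yields a generalised global type $\GG = \defk\ \GV\ \In\ \xx_0$: each place becomes a state variable $\xx$, a one-in/one-out transition becomes a message $\TO{p}{q}:a$, and the fork/join and choice/merge patterns become the corresponding $\PAR$ and $+$ constructs.

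Third, I would establish $S \WB \GG$ in two stages. By construction the read-off preserves the $1$-bounded traces, so $\GG \approx_1 S$. To lift this to full trace equivalence I would prove the generalised analogue of Lemma~\ref{lem:one_to_k}: any $(n{+}1)$-bounded execution of a session-compatible system can be rearranged, using the commutation (diamond) properties underlying Proposition~\ref{thm:graph:stable} together with the receiver and unique-sender conditions, into one using strictly less buffer space and accepted by any system that agrees on $\approx_n$; induction on $n$ then gives $\approx_1 \Rightarrow \WB$. Finally I would close the loop through projection: Theorem~\ref{thm:general:lts} gives $\GG \WB (\vec{\TT},\NUL)$ for $\vec{\TT}=\ASET{\GG\proj\p}_{\p\in\PSet}$, and the CFSM/local-type translation $\A$ of Definition~\ref{def:graph:translation} relates $\vec{\TT}$ back to $S$, yielding $S \WB \GG$; well-formedness of $\GG$ is inherited from the local-choice conditions built into session-compatibility.

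The hard part will be the generalised $1$-buffer equivalence. In the classical case Lemma~\ref{lem:one_to_k} rests on the clean causal analysis of basic machines (no mixed states, the output--input dependency of Lemma~\ref{lem:OI}); here mixed-parallel states and honest fork/join concurrency break that simple dependency argument, so the buffer-trading permutation must be justified modulo the structural equivalences $\equivGV,\equivTV$, and the availability of the matching receive of the first outstanding send must now be extracted from the receiver and unique-sender properties rather than from directedness alone. A secondary difficulty is verifying that the region-based net, after normalisation, always lands within the restricted arc-shapes demanded by the generalised type grammar, and that the resulting read-off is well-defined independently of the non-canonical choices made during normalisation.
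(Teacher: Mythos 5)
Your proposal follows essentially the same route as the paper's own proof: restrict to $1$-bounded executions to obtain a finite LTS in which each send is immediately matched by its receive, fuse each such pair into an atomic exchange, pass through the standard region-based conversion to Petri nets and read off the generalised global type, and then lift $\approx_1$ to full trace equivalence via a version of Lemma~\ref{lem:one_to_k} adapted to session-compatible systems. The paper's argument is in fact terser than yours — it too leaves the adapted $1$-buffer lemma and the normalisation details unproved — so the difficulties you flag at the end are real but are shared by the original.
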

\begin{proof}
  The algorithms is the following. We consider $S=\ASET{M_\p}_{\p\in
    \PSet}$ as the definition of a transition system. In this transition
  system, we only consider the 1-bounded executions. 
  This restriction produces a
  finite state LTS, where send transitions are immediately followed by the
  unique corresponding receive transition. 
  In each of these cases, we
  replace the pair of transitions $\p\p'!a$ and $\p\p'?a$ by a unique
  transition $\TO{p}{p'}:a$. To obtain the global type $\GG$, we then
  follow first the standard conversion to Petri nets and the equivalence
  between Petri nets and global types (similar to the one between Petri
  nets and local types). We conclude the equivalence by a
  version of Lemma~\ref{lem:one_to_k} adapted to session-compatible system. \QED
\end{proof}

Using the synthesis theorem, we are able to provide a full
characterisation of generalised multiparty session types in term of
session-compatible systems.

\begin{theorem}[soundness and completeness in \MSA] 
Suppose $S=\ASET{M_\p}_{\p\in \PSet}$ is a session compatible system.
Then there exits $\GG$ such that 
$S \WB \GG$. 
Conversely, if $\GG$ is well-formed as in \cite{DY12},  
then there exits $S$ which satisfies the safety and liveness properties 
(deadlock-freedom, reception error-freedom and orphan message-freedom),
and $S \WB \GG$.  
\end{theorem}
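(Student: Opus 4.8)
The plan is to assemble the generalised building blocks exactly as Theorem~\ref{thm:soundcompletemsacore} assembles the classical ones. The forward direction (from a session-compatible $S$ to a global type) is nothing but the synthesis result: given $S$ session-compatible, Theorem~\ref{thm:gen:syn} already produces a $\GG$ with $S\WB\GG$, so this half requires no further argument and I would simply invoke it.

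For the converse I would start from a well-formed $\GG$ and project it onto every role to obtain $\vec\TT=\ASET{\GG\proj\p}_{\p\in\PSet}$; these projections are defined precisely because $\GG$ is well-formed. By Theorem~\ref{thm:general:lts} we then have $\GG\WB(\vec\TT,\NUL)$. Next I translate each local type into a CFSM by $\A$ (Definition~\ref{def:graph:translation}) and set $S=\ASET{\A(\GG\proj\p)}_{\p\in\PSet}$. The ingredient needed here is the generalised analogue of Proposition~\ref{pro:translation}, namely that $\TT\WB\A(\TT)$ for each projected local type; this is obtained by routing through the Petri-net representation of Definition~\ref{def:petri-net-local}, which preserves traces in both directions. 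Chaining these equivalences yields $(\vec\TT,\NUL)\WB S$ and hence $S\WB\GG$.

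It then remains to establish that this particular $S$ satisfies the three safety properties and liveness, and for that I must show that $S$ is session-compatible, i.e.\ that it meets every hypothesis of Proposition~\ref{thm:graph:stable} and Theorem~\ref{thm:graph:df}: multiparty compatibility, every mixed state being mixed parallel, commutation-or-unique-sender for pairs of receive transitions, and the receiver condition of Definition~\ref{def:localchoice}. Multiparty compatibility and the mixed-parallel property follow from the shape of projections, since fork and join introduce exactly the diamonds required, whereas the receiver and unique-sender conditions are read off from the projectability (mergeability) constraints baked into well-formedness of $\GG$ in \cite{DY12}. Once session-compatibility is in hand, Proposition~\ref{thm:graph:stable} delivers reception-error- and orphan-message-freedom, Theorem~\ref{thm:graph:df} delivers deadlock-freedom, and liveness follows as in the classical case from the combination of deadlock- and orphan-message-freedom together with a reachable final configuration, so all the properties of Definition~\ref{def:safetyliveness} hold.

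The main obstacle is this last verification: proving that the CFSMs obtained by projecting and translating a well-formed generalised global type are session-compatible. Unlike the classical setting, directedness and the no-mixed-state property are lost, so I cannot simply appeal to ``basic''; instead I have to trace each well-formedness clause of \cite{DY12} through projection and show that it produces mixed-parallel states, the receiver property, and the unique-sender property. This is the genuine counterpart of showing that classically-projected machines are basic and multiparty compatible, and it is where the concurrency introduced by forks and joins, together with the refined local-choice conditions of Definition~\ref{def:localchoice}, must be handled with care.
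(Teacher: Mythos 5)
Your proposal is correct and follows essentially the same route as the paper, whose own proof is just the one-line invocation of Theorem~\ref{thm:gen:syn} for the forward direction and Theorem~\ref{thm:general:lts} plus ``the same reasoning as in Theorem~\ref{thm:soundcompletemsacore}'' (i.e.\ projection, translation to CFSMs, and the safety results of Proposition~\ref{thm:graph:stable} and Theorem~\ref{thm:graph:df}) for the converse. You are in fact more explicit than the paper about the one genuinely non-trivial step it leaves implicit, namely verifying that the projected-and-translated machines are session-compatible.
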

\begin{proof}
  By Theorem~\ref{thm:gen:syn} and Theorem \ref{thm:general:lts} with the
  same reasoning as in Theorem \ref{thm:soundcompletemsacore}. \QED
\end{proof}

\section{Appendix for Section \ref{sec:graph}}
\label{app:graph}

\paragraph{Projection}
We define the projection from a global type to a local type  
where $\mathit{ASend}$ means that 
a set of active senders, which corresponds to the same definition 
in CFSMs (see \cite{DY12}). 

\iflong
\begin{figure}[t]
\fi
\iflong%
\mycaption{Projection Algorithm}
\label{fig:projection}
\fi%
{\small 
$$
\begin{array}{r@{\ }l@{\quad}c@{\quad}l@{\hspace{-2em}}l}
  \defk\  \GV\  \In\  \xx  & \proj \p & = & 
 \defk\ \GV \proj_\GV \p\  \In\  \xx \\[1mm]
 \xx = \TO{p}{p'}:a\;;\xx' &  \proj_\GV \, \p & = &
 \xx = \SEND{\p'}{a}.\xx'\\
 \xx = \TO{\p}{\p'}:a\;;\xx' &  \proj_\GV \, \p' & = &
 \xx = \RECV{\p}{a}.\xx'\\
 \xx = \TO{p}{p'}:a\;;\xx' &  \proj_\GV \, \p'' & = &
 \xx = \xx'\ (\pp\notin\{\p,\p'\})\\
 \xx \PAR \xx' = \xx'' & \proj_\GV \, \p & = & 
 \xx \PAR \xx' = \xx'' \\
 \xx = \xx' \PAR \xx'' & \proj_\GV \, \p & = & 
 \xx = \xx' \PAR \xx'' \\
 \xx = \xx' + \xx'' & \proj_\GV \, \p & = & 
 \xx = \xx' \oplus  \xx'' 
&  \text{(if $\p=\Asender(\GV)(\xx)$)}\\
 \xx = \xx' + \xx'' & \proj_\GV \, \p & = & 
 \xx = \xx' \OR \xx''
& \text{(otherwise)}
  \\
  \xx + \xx' = \xx'' & \proj_\GV \, \p & = & 
  \xx + \xx' = \xx'' \\
  \xx =  \End & \proj_\GV \, \p & = & \xx =  \End
\end{array}
$$}
\iflong%
\hrule
\else%
\fi%
\iflong
\end{figure}
\fi

\subsection{Global type equivalence}
\label{app:graph:equiv}
Below we define the equivalence relation $\equivGV$ used in the
LTS of the global types. 

\iflong%
\begin{figure}[t]
\mycaption{Global State Equivalence}
\label{fig:globalstateequivalence}
\else
\fi
\[
\small
\begin{array}{@{}c}
\X\PAR\X'\equivGV\X'\PAR\X \qquad 
\X\PAR(\X'\PAR\X'')\equivGV(\X\PAR\X')\PAR\X''\\[1mm]
\infer{\xx = \xx' \in \GV}{ \X[\xx]\equivGV\X[\xx']}\quad 
\infer{\xx = \xx' \PAR \xx'' \in \GV}{ \X[\xx]\equivGV\X[\xx' \PAR
  \xx'']} \quad 
\infer{\xx \PAR \xx' = \xx'' \in \GV}{
  \X[\xx\PAR\xx']\equivGV\X[\xx'']}
\\[5mm] 
\infer{\xx = \xx' + \xx'' \in \GV}{ \X[\xx]\equivGV\X[\xx']}\quad 
\infer{\xx = \xx' + \xx'' \in \GV}{ \X[\xx]\equivGV\X[\xx'']}\quad
\infer{\xx + \xx' = \xx'' \in \GV}{ \X[\xx]\equivGV\X[\xx'']}\quad 
\infer{\xx + \xx' = \xx'' \in \GV}{ \X[\xx']\equivGV\X[\xx'']}
\end{array}
\]
\iflong%
\vspace{3ex}
\hrule%
\end{figure}
\else%
\fi

Below we define the equivalence relation $\equivTV$ used in the
translation in Definition \ref{def:graph:translation}.  

\iflong%
\begin{figure}[t]
\mycaption{Local State Equivalence for Local State Automata}
\label{fig:localstateequivalenceforlocal}
\else
\begin{centering}
\fi
\small
$
\begin{array}{@{}c}
\X\PAR\X'\equivTV\X'\PAR\X \qquad 
\X\PAR(\X'\PAR\X'')\equivTV(\X\PAR\X')\PAR\X''\\[1mm]
\infer{\xx = \xx' \in \TV}{ \X[\xx]\equivTV\X[\xx']}\quad 
\infer{\xx = \xx' \PAR \xx'' \in \TV}{ \X[\xx]\equivTV\X[\xx' \PAR
  \xx'']} \quad 
\infer{\xx \PAR \xx' = \xx'' \in \TV}{
  \X[\xx\PAR\xx']\equivTV\X[\xx'']}
\\[5mm] 
\infer{\xx = \xx' \OR \xx'' \in \TV}{ \X[\xx]\equivTV\X[\xx']}\quad 
\infer{\xx = \xx' \OR \xx'' \in \TV}{ \X[\xx]\equivTV\X[\xx'']}\quad 
\infer{\xx = \xx' \oplus \xx'' \in \TV}{ \X[\xx]\equivTV\X[\xx']}\quad
\infer{\xx = \xx' \oplus \xx'' \in \TV}{
  \X[\xx]\equivTV\X[\xx'']}\\[5mm]
\infer{\xx + \xx' = \xx'' \in \TV}{ \X[\xx]\equivTV\X[\xx'']}\quad 
\infer{\xx + \xx' = \xx'' \in \TV}{ \X[\xx']\equivTV\X[\xx'']}
\end{array}
$
\iflong%
\vspace{3ex}
\hrule%
\end{figure}
\else%
\end{centering}
\fi

\subsection{Proof of Proposition~\ref{thm:graph:stable}}
\label{app:graph:stable}
Essentially we have the same as the proof of Proposition
\ref{pro:core:stable}. Only difference is that we need to use 
the unique sender condition to ensure that 
the action $\overline{t_1}$ is possible 
in (\ref{proof:inputavailability}) in the proof of Proposition
\ref{pro:core:stable} (note that $\overline{t_1}$ is always 
possible in basic CFSMs since they are directed). 

Suppose, in
(\ref{proof:inputavailability}) in the proof of Proposition
\ref{pro:core:stable}, the action $\overline{t_1}$ is not
possible: i.e.
$s_1 \TRANSS{t_1} s_2 \TRANSS{\varphi_2}s_2'$
but $s_2'$ cannot perform $\TRANSS{\overline{t_1}}$. The only
possibility is that some $M_\q$ contains the receiver state $q$ such
that $(q,\p\q?a,q'),(q,\p'\q?b,q'')\in \delta_\q$ which does not
satisfy the parallel condition (since if so, $s_2'$ can perform $\TRANSS{\overline{t_1}}$), and $\varphi_2$
contains the action $\p'\q?b$, which implies $\varphi_2$ contains 
the action $\p'\q!b$. By the unique sender condition,
there is the unique $\q'$ such that 
$s_0'\TRANSS{\varphi\cdot\p\q!a} s_1$ and 
$s_0'\TRANSS{\varphi\cdot\p\q!a\cdot \varphi'\cdot \p'\q!b \cdot \varphi''} s_2'$
with $\ASend(\varphi\cdot\p\q!a)=
\ASend(\varphi\cdot\p\q!a\cdot\varphi'\cdot \p'\q!b) = \ASET{\q'}$. Since 
$\p'\q!b$ cannot be reordered before $\p\q!a$ or after $\varphi_1$, 
to satisfy the unique sender property, $\varphi'$ should include
$\p\q?a$. This contradicts that the assumption that $\varphi_2$ does not include
$\p\q?a$. 

\subsection{Proof of Theorem~\ref{thm:graph:df}}
\label{app:graph:df}
By (reception error freedom) and (orphan message-freedom), together 
with (stable-property), we only have to
check, there is no input is waiting with an empty queue forever.  
Suppose by contradiction, there is $s\in \RS(S)$ such that 
$s=(\vec{q};\vec{\NUL})$ and there exists input state $q_\p\in
\vec{q}$ and no output transition from $q_k$ such that $k\not = \q$.     

Then by assumption, there is a 1-buffer execution $\varphi$ and  
since $\varphi$ is not taken (if so, 
$q_\p$ can perform an input), 
then there is another execution $\varphi'$ such that 
it leads to state $s$ which is deadlock at $q_\p$. \\[2mm]
{\bf Case (1)}
Suppose $\varphi$ does not include input actions at $\q$ except $a$,
i.e. $a$ is the first input action at $\q$ in $\varphi$. 
We let $\varphi_0$ for the prefix before the actions of $\q\p!a \cdot
\q\p?a$.  

By (receiver condition), we know $\p\in \Rcv(\varphi')$. 

By the determinacy, the corresponding input action has a different
label from $a$, i.e. $\q'\p?a'\in \varphi'$. By the diamond 
property,  
$\q'\p?a'$ and $\q\p?a$ can be appeared from the same state, i.e. this
state is under the assumption of the parallel condition. 
Hence by the multiparty compatibility, the both corresponding outputs 
$\q'\p!a'$ and $\q\p!a$ can be always 
fired if one of them is. This contradicts
the assumption that $q_\p$ is deadlock with label $a$.

\smallskip

\noindent{\bf Case (2)}
Suppose $\varphi$ includes other input actions at $\q$ 
before $\q\p?a$, i.e. $\p\in \Rcv(\varphi_0)$. 
Let $\q'\p?a'$ the action which first occurs in $\varphi_0$. 
By $\p\in \Rcv(\varphi')$, there exists $\q''\p?a''  \in \varphi'$. 
If $\q''\p?a'' \not= \q'\p?a'$, by the same reasoning as (1), 
the both corresponding outputs are available. Hence we assume 
the case $\q''\p?a'' = \q'\p?a'$. Let $s$ is the first state 
from which a transition in $\varphi_0$ and a transition 
in $\varphi'$ are separated. 
Then by assumption, 
if $s \TRANS{\varphi_0\cdot\q'\p!a'\cdot\q'\p?a'} s_1$ 
and $s \TRANS{\varphi_1\cdot\q'\p!a'\cdot\q'\p?a'} s_2$, 
by assumption  
$a'\not\in \varphi_0\cup \varphi_1$, 
hence $s \TRANS{\q'\p!a'\cdot\q'\p?a'}s_1'\TRANS{\varphi_0'} s_1$ 
and $s \TRANS{\q'\p!a'\cdot\q'\p?a'}s_2'\TRANS{\varphi_1'} s_2$
by the diamond property again. 
Since $s_1$ can perform an input at $\q$ by the assumption (because of
$\q\p?a$), $\varphi_1'$ should contain an input at $\q$ by the
receiver condition. If it contains the 
input to $\q$ in $\varphi_1'$, then  
we repeat Case (2) noting that the length of $\varphi_1'$ is shorter
than the length of  $\varphi_1\cdot\q'\p!a'\cdot\q'\p?a'$; else we use Case (1) to lead the contradiction; 
otherwise   if it contains the same input as $\q\p?a$, then it contradicts 
the assumption that $q_\p$ is deadlock.

\fi

\iflong
\makeatletter
\@starttoc{xmp}
\makeatother
\addtocontents{xmp}{\protect\setcounter{tocdepth}{3}}


\fi

\end{document}